\keywords{Global Transformations, Causal Graph Dynamics, Kan Extensions, Category Theory, Dynamical Systems, Model of Computations}
\newcommand{\ie}{\emph{i.e.}\xspace}
\newcommand{\G}{\ensuremath{\mathcal{G}}}
\newcommand{\Graph}[3]{\ensuremath{{\G}_{#1,#2,#3}}}
\newcommand{\Gsdp}{\ensuremath{\Graph{\Sigma}{\Delta}{\pi}}}
\newcommand{\D}{\ensuremath{\mathcal{D}}}
\newcommand{\Disk}[3]{\ensuremath{{\D}_{#1,#2,#3}}}
\newcommand{\Dsdp}{\ensuremath{\Disk{\Sigma}{\Delta}{\pi}}}
\newcommand{\GDisk}[3]{\ensuremath{\mathcal{D}_{#1,#2,#3}}}
\newcommand{\GDsdp}{\ensuremath{\GDisk{\Sigma}{\Delta}{\pi}}}
\newcommand{\lan}{\Phi}
\newcommand{\Csdp}{\ensuremath{\mathbf{G}_{\Sigma,\Delta,\pi}}}
\newcommand{\id}{\ensuremath{\mathrm{id}}}
\newcommand{\Sym}{\mathrm{Sym}}
\newcommand{\U}{\ensuremath{\mathrm{U}}}
\newcommand{\CDsdp}{\ensuremath{\mathbf{D}_{\Sigma,\Delta,\pi}}}
\DeclareMathOperator{\Conj}{Conj}
\DeclareMathOperator{\Proj}{Proj}
\DeclareMathOperator{\Colim}{Colim}
\DeclareMathOperator{\card}{card}
\newcommand{\emptygraph}{\varnothing}
\theoremstyle{plain} 
\begin{document}

\title{Causal Graph Dynamics and Kan Extensions}
\thanks{This publication was made possible through the support of the ID \#62312 grant from the John Templeton Foundation, as part of the \href{https://www.templeton.org/grant/the-quantum-information-structure-of-spacetime-qiss-second-phase}{‘The Quantum Information Structure of Spacetime’ Project (QISS) }. The opinions expressed in this project/publication are those of the author(s) and do not necessarily reflect the views of the John Templeton Foundation.
We also acknowledge the support of INRIA Saclay through the funding of two years of ``delegation'' of the first author to work on these topics.}

\author[L.~Maignan]{Luidnel Maignan\lmcsorcid{0009-0009-5297-5022}}
\author[A.~Spicher]{Antoine Spicher\lmcsorcid{0009-0003-8471-4413}}

\address{Univ Paris Est Creteil, LACL, 94000, Creteil, France}
\email{luidnel.maignan@u-pec.fr, antoine.spicher@u-pec.fr}

\begin{abstract}
On the one hand, the formalism of Global Transformations comes with the claim of capturing any transformation of space that is local, synchronous and deterministic.
The claim has been proven for different classes of models such as mesh refinements from computer graphics, Lindenmayer systems from morphogenesis modeling, and cellular automata from biological, physical and parallel computation modeling.
The Global Transformation formalism achieves this by using category theory for its genericity, and more precisely the notion of Kan extension to determine the global behaviors based on the local ones.
On the other side, Causal Graph Dynamics describe the transformation of port graphs in a synchronous and deterministic way.
In this paper, we show the precise sense in which the claim of Global Transformations holds for them as well.
This is done by showing different ways in which they can be expressed as Kan extensions, each of them highlighting different features of Causal Graph Dynamics.
Along the way, this work uncovers the interesting class of Monotonic Causal Graph Dynamics and their universality among General Causal Graph Dynamics.
\end{abstract}

\maketitle

\section*{Introduction}

\paragraph{Initial Motivation.}
This work started as an effort to understand the framework of Causal Graph Dynamics (CGDs) from the point of view of Global Transformations (GTs), both frameworks expanding on Cellular Automata (CAs) with the similar goal of handling dynamical spaces, but with two different answers.
Indeed, we have on the one hand CGDs that have been introduced in 2012 in~\cite{DBLP:conf/icalp/ArrighiD12} as a way to describe synchronous and local evolutions of labeled port graphs whose structures also evolve.
Since then, the framework has evolved to incorporate many considerations such as stochasticity, reversibility~\cite{DBLP:journals/nc/ArrighiMP20} and quantumness~\cite{arrighi2017quantum}.
On the other hand, we have GTs that have been proposed in 2015 in~\cite{DBLP:conf/gg/MaignanS15} as a way to describe synchronous local evolution of any spatial structure whose structure also evolves.
This genericity over arbitrary kinds of space is obtained using the language of category theory.
It should therefore be the case that CGDs are special cases of GTs in which the spatial structure happens to be labeled port graphs.
So the initial motivation is to make this relationship precise.

\paragraph{Initial Plan.}
Initially, we expected this to be a very straightforward work.
First, because of the technical features of CGDs (recalled in the next section), it is appropriate to study them in the same way that CAs had been studied in~\cite{DBLP:journals/nc/FernandezMS23}.
By this, we mean that although GTs use the language of category theory, the categorical considerations all simplify into considerations from order theory when an absolute position system is used, as is the case in CAs and in CGDs.
Secondly, CGDs come directly with an order from the notion of sub-graph used implicitly throughout.
The initial plan was to unfold the formalism on this basis to make sure that everything is as straightforward as expected, and then proceed to the next step: quotienting absolute positions out to only keep relative ones, thus actually using categorical features and not only order-theoretic ones, as done in~\cite{DBLP:journals/nc/FernandezMS23} for CAs.

Section~\ref{sec:cgd} recalls how CGDs are defined and work.
Section~\ref{sec:poset-kan} recalls how Kan extensions and GTs are defined and simplifies in the particular case of order theory.
Section~\ref{sec:hope} makes clear the strong relation between the two concepts and how this relation leads to the initial ambition.

\paragraph{Actual Plan.}
It turns out that the initial ambition falls short, but the precise way in which it does reveals something interesting about CGDs.
The expected relationship is in fact partial and only allows to accommodate Kan extensions with CGDs which happen to be monotonic.
This led us to change our initial plan to investigate the role played by monotonic CGDs within the framework.
Doing so, we uncover the universality of monotonic CGDs among general CGDs, thus implying the initially wanted result: all CGDs are GTs.
We then take into account renaming-invariance in a categorical way: by considering port-graphs that differ only by their names to be isomorphic, which amounts to a quotienting of the simple order-like category into a smaller category.

Sections~\ref{sec:falls-short} and~\ref{sec:mono-cgd-are-kan} make clear the relationship between GTs and monotonic CGDs, while Section~\ref{sec:universal-mono-cgd} shows the universality of monotonic CGDs by proposing an encoding transforming any CGD into a monotonic one.
Section~\ref{sec:cgd_lan} is devoted to deal with renaming-invariance in a categorical way.
Finally, Section~\ref{sec:concl} discusses these results.

\section{Preliminaries}
\label{sec:preliminaries}

\subsection{Notations}

Sections~\ref{sec:preliminaries},~\ref{sec:cgd_post_lan} and~\ref{sec:universal-mono-cgd} mostly use common notations from set theory.
The set operations symbols, especially set inclusion $\subseteq$ and union $\cup$, are heavily overloaded, but the context always allows to recover the right semantics.
Two first overloads concern the inclusion and the union of partial functions, which are to be understood as the inclusion and union of the graphs of the functions respectively, \ie, their sets of input-output pairs.
A partial function $f$ from a set $A$ to a set $B$ is indicated as $f: A \rightharpoonup B$, meaning that $f$ is defined for some elements of $A$.
The restriction of a function $f: A \rightharpoonup B$ to a subset $A' \subseteq A$ is denoted $f \restriction A': A' \rightharpoonup B$.

Section~\ref{sec:cgd_lan} uses elements of category theory.
The reader is expected to be familiar with the notions of category, functor, natural transformation, colimit and comma category~\cite{maclane2013categories}.

\subsection{Causal Graph Dynamics}
\label{sec:cgd}

\newcommand{\V}{\ensuremath{\mathcal{V}}}

Our work strongly relies on the objects defined in~\cite{DBLP:journals/iandc/ArrighiD13}.
We recall here the definitions required to understand the rest of the paper.
Particularly, \cite{DBLP:journals/iandc/ArrighiD13} establishes the equivalence of the so-called \emph{causal dynamics} and \emph{localisable functions}.
For the present work, we focus on the latter.
We generally stick to the original notations to ease the comparison with the previous work.
However, some slight differences exist but are locally justified.

We consider an uncountable infinite set $\V$ of symbols for naming vertices.

\begin{defi}[Labeled Graph with Ports]
\label{def:graph}
Let $\Sigma$ and $\Delta$ be two sets, and $\pi$ a finite set.
A \emph{graph $G$ with states in $\Sigma$ and $\Delta$, and ports in $\pi$} is the data of:
\begin{itemize}

\item a countable subset $V(G) \subset \V$ whose elements are the \emph{vertices} of $G$,

\item a set $E(G)$ of non-intersecting two-element subsets of $V(G) \times \pi$, whose elements are the  \emph{edges} of $G$,

\item a partial function $\sigma(G): V(G) \rightharpoonup \Sigma$ labeling vertices of $G$ with states,

\item a partial function $\delta(G): E(G) \rightharpoonup \Delta$ labeling edges of $G$ with states.
\end{itemize}
The set of graphs with states in $\Sigma$ and $\Delta$, and ports $\pi$ is written $\Gsdp$.

A \emph{pointed graph} $(G,v)$ is a graph $G \in \Gsdp$ with a selected vertex $v \in V(G)$ called \emph{pointer}.
\end{defi}

As in the original paper, we use the notation $u\!:\!i$ for elements  $(u,i) \in V(G) \times \pi$.
The fact that edges are non-intersecting means that each $u:i \in V(G) \times \pi$ appears in at most one edge. The degree of any vertex, \ie the number of edges containing it with any port, is therefore at most $\card(\pi)$, the number of ports.
Let us note right now that the particular elements of $\V$ used to build $V(G)$ in a graph $G$ should ultimately be irrelevant, and only the structure and the labels should matter, as made precise in condition (1) of Definition~\ref{def:graph_local_rule}.
This is the reason of the definition of the following object.

\begin{defi}[Renaming]
\label{def:graph_iso}
\label{def:graph_renaming}
A \emph{renaming\footnote{Renamings are originally called isomorphisms. To avoid ambiguity, we reserve the use of the latter for categorical purpose.}} is the data of a bijection $R: \mathcal{V} \rightarrow \mathcal{V}$.
Its action on vertices is straightforwardly extended to any edge by
$
R(\{u:i,v:j\}) := \{R(u):i,R(v):j\}),
$
to any graph $G \in \mathcal{G}$ by
\begin{align*}
V(R(G)) &:= \{ R(v) \mid v \in V(G) \}, &
E(R(G)) &:= \{ R(e) \mid e \in E(G) \}, \\
\sigma(R(G)) &:= \sigma(G) \circ R^{-1}, &
\delta(R(G)) &:= \delta(G) \circ R^{-1},
\end{align*}
and to any pointed graph $(G,v)$ by $R(G,v) := (R(G),R(v))$.
\end{defi}

Operations are defined to manipulate graphs in a set-like fashion.

\begin{defi}[Consistency, Union, Intersection]
\label{def:graph_cons}
\label{def:graph_union}
\label{def:graph_inter}
Two graphs $G$ and $H$ are \emph{consistent} when
\begin{itemize}
\item $E(G) \cup E(H)$ is a set of non-intersecting two-element sets;
\item $\sigma(G)$ and $\sigma(H)$ agree where they are both defined;
\item $\delta(G)$ and $\delta(H)$ agree where they are both defined.
\end{itemize}
In this case, the \emph{union} $G \cup H$ of $G$ and $H$ is defined by
\begin{align*}
V(G \cup H) &:= V(G) \cup V(H), &
E(G \cup H) &:= E(G) \cup E(H), \\
\sigma(G \cup H) &:= \sigma(G) \cup \sigma(H), &
\delta(G \cup H) &:= \delta(G) \cup \delta(H).
\end{align*}
The \emph{intersection} $G \cap H$ of $G$ and $H$ is always defined and given by
\begin{align*}
V(G \cap H) &:= V(G) \cap V(H), &
E(G \cap H) &:= E(G) \cap E(H), \\
\sigma(G \cap H) &:= \sigma(G) \cap \sigma(H), &
\delta(G \cap H) &:= \delta(G) \cap \delta(H).
\end{align*}
The \emph{empty graph} with no vertex is designated $\emptygraph$.
\end{defi}
In this definition, the unions of functions seen as relations (that result is a non-functional relation in general) are here guaranteed to be functional by the consistency condition.
The intersection of (partial) functions simply gives a (partial) function which is defined only on inputs on which both functions agree.

To describe the evolution of such graphs in the CGD framework, we first need to make precise the notion of locality, which is captured by how a graph is pruned to a local view: a disk.
We consider the usual notion of distance between vertices a graph $G$ as the length of shortest paths between them.
A path $p$ of length $n$ is a sequence of $n+1$ vertices such that any two consecutive vertices $p_k$ and $p_{k+1}$ are connected by an edge via some ports.
The distance function is expressed formally as
$$d(u,v) := \min\;\{\, n \in \mathbb{N} \mid \exists p \in V(G)^{n+1}, \forall k \in \{1,\ldots,n\}, \exists i, j \in \pi, \{\, p_k\!:\!i \,,\, p_{k+1}\!:\!j \,\} \in E(G) \,\}.$$
We denote by $B_G(c,r)$ the set of vertices at distance at most $r$ from $c$ in the graph $G$, \ie, $B_G(c,r) = \{ u \in V(G) \mid d(c,u) \le r\}$.
\begin{defi}[Disk]
\label{def:graph_disk}
Let $G$ be a graph, $c \in V(G)$ a vertex, and $r$ a non-negative integer.
The \emph{disk of radius $r$ and center $c$} is the pointed graph $G^r_c = (H,c)$ with $H$ given by
\begin{align*}
V(H) &:= B_G(c,r+1), &
\sigma(H) &:= \sigma(G) \restriction B_G(c,r), \\
E(H) &:= \{ \{u\!:\!i,v\!:\!j\} \in E(G) \mid \{u,v\} \cap B_G(c,r) \neq \emptyset \}, &
\delta(H) &:= \delta(G) \restriction E(H).
\end{align*}
We denote by $\GDsdp^r$ the set $\{\, G^r_c \mid G \in \Gsdp, c \in V(G) \,\}$ of all disks of radius $r$, and by $\GDsdp$ the set of all disks of any radius.
When the disk notation is used with a set $C$ of vertices as subscript, we mean
\begin{equation}
\label{eq:graph_restriction}
G^r_C := \bigcup_{c \in C} G^r_c,
\end{equation}
which is well defined as any pair of disks taken in a common graph are always consistent.
\end{defi}
Notice that $V(H)$ also contains vertices at distance $r+1$.
The reason is that a disk of radius $r$ is defined so as to contain all edges incident to any vertex at distance at most $r$.
Some of these edges are between a vertex at distance exactly $r$ and a vertex at distance $r+1$.
So those $(r+1)$-vertices need to belong to $V(H)$ to express those edges.
Notice also that the $\sigma$ function of the disk does keep the label of $(r+1)$-vertices.

The CGD dynamics relies on a local evolution describing how local views generates local outputs consistently.

\begin{defi}[Local Rule]
\label{def:graph_local_rule}
A \emph{local rule of radius $r$} is a function $f: \GDsdp^r \rightarrow \Gsdp$ such that
\begin{enumerate}

\item for any renaming $R$, there is another renaming $R'$, called the \emph{conjugate} of $R$, with $f \circ R = R' \circ f$,

\item for any family $\{(H_i,v_i)\}_i \subset \GDsdp^r$, $\bigcap_{i} H_i = \emptygraph \implies \bigcap_{i} f((H_i,v_i)) = \emptygraph$,

\item there exists $b$ such that for all $D \in \GDsdp^r$, $\card(V(f(D))) \le b$,

\item for any $G \in \Gsdp$, $u, v \in V(G)$, $f(G^r_v)$ and $f(G^r_u)$ are consistent.
\end{enumerate}
\end{defi}

In the second condition, note that a set of graphs have the empty graph as intersection iff their sets of vertices are disjoint.
In the original work, functions respecting the two first conditions, the third condition, and the fourth condition are called respectively dynamics, bounded functions, and consistent functions.
A local rule is therefore a bounded consistent dynamics.

In~\cite{DBLP:journals/iandc/ArrighiD13}, a \emph{causal graph dynamics} is an endofunction on $\Gsdp$ which respects a kind of continuity. 
The main result of~\cite{DBLP:journals/iandc/ArrighiD13} is the non-trivial proof that they coincide with \emph{localisable functions}, \ie, functions arising from local rules by union as below.
We rely on this result in the following definition since we use the formal definition of localisable functions to define causal graph dynamics.

\begin{defi}[Causal Graph Dynamics (CGD)]
\label{def:graph_cgd}
A function $F: \Gsdp \rightarrow \Gsdp$ is a \emph{causal graph dynamics} (CGD) if there exists a radius $r$ and a local rule $f$ of radius $r$ such that
\begin{equation}\label{eq:cgd}
F(G) = \bigcup_{v \in V(G)} f(G^r_v).
\end{equation}
\end{defi}
In Equation~\ref{eq:cgd}, the union is well defined thanks to the property (4) of local rules.
An important property of CGDs is their renaming-invariance which is inherited from the properties of local rules.
\begin{prop}[Renaming Invariance]
\label{prop:cgd-rename-invariance}
Let $F: \Gsdp \rightarrow \Gsdp$ be a CGD.
For any renaming $R$, $F$ admits at least a conjugate renaming $R'$, \ie, $F \circ R = R' \circ F$.
Moreover, for any local rule $f$ of $F$, $\Conj_f(R) \subseteq \Conj_F(R)$ where $\Conj_F(R)$ (resp. $\Conj_f(R)$) is the set of conjugates of a renaming $R$ for $F$ (resp. $f$).
\end{prop}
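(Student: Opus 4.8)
The plan is to prove the stronger inclusion $\Conj_f(R) \subseteq \Conj_F(R)$ first; the existence of at least one conjugate for $F$ then follows immediately, since condition~(1) in Definition~\ref{def:graph_local_rule} guarantees that $\Conj_f(R)$ is nonempty for every renaming $R$.

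The core ingredient I would establish is that renamings commute with the formation of disks:
\[
(R(G))^r_{R(c)} = R(G^r_c) \quad \text{for all } G,\ c \in V(G),\ r.
\]
The heart of this lemma is that a renaming preserves the edge structure bijectively, hence preserves distances, $d_{R(G)}(R(u),R(v)) = d_G(u,v)$, from which $B_{R(G)}(R(c),s) = R(B_G(c,s))$ for every $s$. Given this, checking that the four components (vertex set, edge set, and the two labeling functions) of $(R(G))^r_{R(c)}$ coincide with those of $R(G^r_c)$ is a routine unfolding of Definitions~\ref{def:graph_renaming} and~\ref{def:graph_disk}: one combines $\sigma(R(G)) = \sigma(G) \circ R^{-1}$ with $B_{R(G)}(R(c),r) = R(B_G(c,r))$, and argues similarly for $E$ and $\delta$.

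With this lemma in hand, together with the evident fact that a renaming distributes over (consistent) unions of graphs, the inclusion follows by direct computation. Let $R' \in \Conj_f(R)$, so that $f \circ R = R' \circ f$. Since $V(R(G)) = R(V(G))$ and $R$ is a bijection, re-indexing the union defining $F$ along $v \mapsto R(v)$ yields
\[
F(R(G)) = \bigcup_{v \in V(G)} f\big((R(G))^r_{R(v)}\big) = \bigcup_{v \in V(G)} f\big(R(G^r_v)\big) = \bigcup_{v \in V(G)} R'\big(f(G^r_v)\big) = R'(F(G)),
\]
where the second equality uses the disk--renaming lemma, the third uses $R' \in \Conj_f(R)$, and the last uses that $R'$ commutes with unions together with Equation~\ref{eq:cgd}. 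Hence $R' \in \Conj_F(R)$, which proves the inclusion and therefore both assertions of the proposition.

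I expect the only delicate point to be the disk--renaming lemma, and within it the distance-preservation step: one must check that the radius-$(r{+}1)$ vertices retained in the disk and the ``incident-edge'' condition defining its edge set are genuinely invariant under $R$. Everything else reduces to bookkeeping that follows from $R$ being a bijection on $\V$.
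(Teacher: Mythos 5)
Your proposal is correct and follows essentially the same route as the paper: fix $R' \in \Conj_f(R)$, reindex the union $\bigcup_{v \in V(R(G))} f(R(G)^r_v)$ along the bijection $R$, use the conjugation property of the local rule, pull $R'$ out of the union, and deduce nonemptiness of $\Conj_F(R)$ from property~(1) of Definition~\ref{def:graph_local_rule}. The only difference is that the paper uses the disk--renaming commutation $R(G)^r_v = R(G^r_{R^{-1}(v)})$ silently inside the chain of equalities, whereas you isolate it as an explicit lemma with the distance-preservation argument --- a point in your favor, since that is indeed the one step requiring actual verification.
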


\begin{proof}
Let $F$ be a CGD arising from a local rule $f$, and $R$ a renaming.
Consider a conjugate $R' \in \Conj_f(R)$.
We have
\begin{align*}
F(R(G)) & = \bigcup_{v \in V(R(G))} f(R(G)^r_v) \\
& = \bigcup_{v \in R(V(G))} f(R(G^r_{R^{-1}(v)})) \\
& = \bigcup_{R^{-1}(v) \in V(G)} R'(f(G^r_{R^{-1}(v)})) \\
& = R'\left(\bigcup_{v \in V(G)} f(G^r_v)\right) \\
& = R'(F(G))
\end{align*}
So, we have $\Conj_f(R) \subseteq \Conj_F(R)$.
Moreover, since $\Conj_f(R)$ is not empty by property (1) of Definition~\ref{def:graph_local_rule}, so is $\Conj_F(R)$.
\end{proof}

\subsection{Global Transformations \& Kan Extensions}
\label{sec:poset-kan}

In category theory~\cite{maclane2013categories}, Kan extensions are a construction allowing to extend a functor along another one in a universal way.
In the first part of this article, we restrict ourselves to the case of pointwise left Kan extensions involving only categories which are posets.
In this case, their definition simplifies as follows.
(The general case is given in Section~\ref{sec:cgd_lan}.)
\begin{defi}[Pointwise Left Kan Extension for Posets]
Given three posets $A$, $B$ and $C$, and two monotonic functions $i : A \to B$, $f : A \to C$, the function $\lan: B \to C$ given by 
\begin{equation}
\label{eq:pointwise}
\lan(b) = \sup \,\{\, f(a) \in C \mid a \in A \text{ s.t. } i(a) \preceq b \,\}
\end{equation}
is called the \emph{pointwise left Kan extension of $f$ along $i$} when it is well defined (some suprema might not exist), and in which case it is necessarily monotonic.
\end{defi}

Global Transformations (GTs) make use of left Kan extensions to tackle the question of the synchronous deterministic local transformation of \emph{arbitrary} kinds of spaces.
It is a categorical framework, but in the restricted case of posets, it works as follows.
While $B$ and $C$ capture as posets the local-to-global relationship between the spatial elements to be handled (inputs and outputs respectively), $A$ specifies a poset of local transformation rules.
The monotonic functions $i$ and $f$ give respectively the left-hand-side and right-hand-side of the rules in $A$.
Glancing at Eq.~\eqref{eq:pointwise}, the transformation mechanism works as follows.
Consider an input spatial object $b \in B$ to be transformed.
The associated output $\lan(b)$ is obtained by gathering (thanks to the supremum in $C$) all the right-hand-side $f(a)$ of the rules $a \in A$ with a left-hand-side $i(a)$ occurring in $b$.
The occurrence relationship is captured by the respective partial orders: $i(a) \preceq b$ in $B$ for the left-hand-side, and $f(a) \preceq \lan(b)$ in $C$ for the right-hand-side.

The monotonicity of $\lan$ is the formal expression of a major property of a GT: if an input $b$ is a subpart of an input $b'$ (\ie, $b \preceq b'$ in $B$), the output $\lan(b)$ has to occur as a subpart of the output $\lan(b')$ (\ie, $\lan(b) \preceq \lan(b')$ in $C$).
This property gives to the orders of $B$ and $C$ a particular semantics for GTs which will play an important role in the present work.
\begin{rem}
\label{rem:gt-poset-as-informational-poset}
Elements of $B$ are understood as information about the input.
So, when $b \preceq b'$, $b'$ provides a richer information than $b$ about the input that $\lan$ uses to produce output $\lan(b')$, itself richer than output $\lan(b)$.
However, $\lan$ cannot deduce the falsety of a property about the input from the fact this property is not included in $b$; otherwise the output $\lan(b)$ might be incompatible with $\lan(b')$.
\end{rem}
At the categorical level, the whole GT formalism relies on the key ingredient that the collection $A$ of rules is also a category.
Morphisms in $A$ are called \emph{rule inclusions}.
They guide the construction of the output and allow overlapping rules to be applied all together avoiding the well-known issue of concurrent rule application~\cite{DBLP:conf/gg/MaignanS15}.

In cases where $\lan$ captures the evolution function of a (discrete time) dynamical system (so particularly for the present work where we want to compare $\lan$ to a CGD), we consider the input and output categories/posets $B$ and $C$ to be the same category/poset, making $\lan$ an endo-functor/function.

\section{Unifying Causal Graph Dynamics and Kan Extensions}
\label{sec:cgd_post_lan}

The starting point of our study is that Eq.~\eqref{eq:cgd} in the definition of CGDs has almost the same form as Eq.~\eqref{eq:pointwise}.
Indeed, if we take Eq.~\eqref{eq:pointwise} and set $A = \GDsdp^r$, $B = C = \Gsdp$, the function $i$ to be the pointer dropping function from discs to graphs that drops their centers (\ie, $i((H,c)) = H$) and the function $f$ to be the local rule from discs to graphs, we obtain an equation for $\lan$ of the form
\begin{equation}
\label{eq:pointwise2}
\lan(G) = \sup \,\{\, f(D) \in \Gsdp \mid D \in \GDsdp^r \text{ s.t. } i(D) \preceq G \,\}
\end{equation}
which is close to Eq.~\eqref{eq:cgd} rewritten
$$
F(G) = \bigcup \,\{\, f(D) \in \Gsdp \mid D = G^r_v,\, v \in V(G) \,\}.
$$
This brings many questions.
What is the partial order involved in Eq.~\eqref{eq:pointwise2}?
Is the union of  Eq.~\eqref{eq:cgd} given by the suprema of this partial order?
Is it the case that $i(D) \preceq G$ implies $D = G^r_v$ for some $v$ in this order?
Are $f$ and $F$ of Eq.~\eqref{eq:cgd} monotonic functions for this order?
We tackle these questions in the following sections.

\subsection{The Underlying Partial Order}
\label{sec:hope}

Considering the two first questions, there is a partial order which is forced on us.
Indeed, we need this partial order to imply that suprema are unions of graphs.
But partial orders can be defined from their binary suprema since $A \preceq B \iff \sup \,\{A,B\} = B$.
Let us give explicitly the partial order, since it is very natural, and prove afterward that it is the one given by the previous procedure.

\begin{defi}[Subgraph and Subdisk]
\label{def:graph_inc}
Given two graphs $G$ and $H$, $G$ is a \emph{subgraph} of $H$, denoted $G \subseteq H$, when
$$
V(G) \subseteq V(H) \ \wedge\ 
E(G) \subseteq E(H) \ \wedge\ 
\sigma(G) \subseteq \sigma(H) \ \wedge\ 
\delta(G) \subseteq \delta(H).
$$
This defines a partial order $- \subseteq -$ on $\Gsdp$ called the \emph{subgraph order}.
The subgraph order is extended to pointed graphs by
$$
(G,v) \subseteq (H,u) \ :\!\iff\  G \subseteq H \wedge v = u.
$$
This partial order restricted to $\Dsdp$ is called the \emph{subdisk order}.
\end{defi}
In this definition, the relational condition $\sigma(G) \subseteq \sigma(H)$, where these two functions are taken as sets of input-output pairs, means that $\sigma(G)(v)$ is either undefined or equal to $\sigma(H)(v)$, for any vertex $v \in V(G)$.
The same holds for the condition $\delta(G) \subseteq \delta(H)$.

Let us now state that the subgraph order has the correct relation with unions of graphs.
It similarly encodes consistency and intersections of pairs of graphs.

\begin{prop}
Two graphs $G$ and $H$ are consistent precisely when they admit an upper bound in $(\Gsdp, \subseteq)$.
In this case, the union of $G$ and $H$ is exactly their supremum (least upper bound) in $(\Gsdp, \subseteq)$.
The intersection of $G$ and $H$ is exactly their infimum (greatest lower bound) in $(\Gsdp, \subseteq)$.
\end{prop}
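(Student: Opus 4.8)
The statement splits into three assertions, which I would prove in a logical order that lets later parts reuse earlier work. The core observation is that the subgraph order is defined componentwise (on vertices, edges, and the two labeling functions), and that each component lives in a setting where suprema and infima are well understood: $V$ and $E$ live in the powerset lattices of $\V$ and of two-element subsets of $V \times \pi$ respectively, ordered by inclusion, while $\sigma$ and $\delta$ live among partial functions ordered by graph-inclusion. So the plan is to reduce everything to these componentwise statements.

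First I would prove the consistency/upper-bound equivalence. For the forward direction, assume $G$ and $H$ are consistent; I would verify directly that $G \cup H$ from Definition~\ref{def:graph_union} is a well-defined graph (the consistency condition is exactly what guarantees $E(G) \cup E(H)$ is non-intersecting and that the unions of the partial functions $\sigma$, $\delta$ remain functional), and that $G \subseteq G \cup H$ and $H \subseteq G \cup H$ by construction, giving an upper bound. For the converse, suppose $G$ and $H$ admit a common upper bound $K$ with $G \subseteq K$ and $H \subseteq K$; I would then read off consistency componentwise: $E(G) \cup E(H) \subseteq E(K)$ inherits non-intersection from $E(K)$, and since $\sigma(G), \sigma(H) \subseteq \sigma(K)$ as sets of pairs, they cannot disagree anywhere both are defined (they would both have to agree with $\sigma(K)$), and likewise for $\delta$.

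Second, still under the consistency hypothesis, I would show $G \cup H$ is the \emph{least} upper bound. It is an upper bound by the previous step; for minimality, given any upper bound $K$ of $G$ and $H$, I would check componentwise that $G \cup H \subseteq K$. Each component of $G \cup H$ is the union of the corresponding components of $G$ and $H$, and since $K$ dominates both $G$ and $H$ in each component, it dominates their union—this is just the universal property of union in each underlying lattice/poset of partial functions. Third, for the infimum, I would note that $G \cap H$ is always defined (Definition~\ref{def:graph_inter}) and verify it is the greatest lower bound by the dual componentwise argument: it is a lower bound since each component is an intersection, and any lower bound $K$ of both $G$ and $H$ satisfies $K \subseteq G \cap H$ componentwise.

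The step I expect to require the most care is the handling of the label components in the supremum and infimum claims, because $\sigma$ and $\delta$ are \emph{partial} functions compared as sets of input-output pairs, so the relevant lattice operations are not the naive pointwise ones. For the supremum, the subtlety is that $\sigma(G) \cup \sigma(H)$ is only a function thanks to consistency, which is why the upper-bound equivalence must be established first and then invoked. For the infimum, the point (already flagged in the remark following Definition~\ref{def:graph_inc}) is that $\sigma(G) \cap \sigma(H)$ keeps a pair only where both functions are defined \emph{and agree}, so I would verify explicitly that this intersection-of-graphs is exactly the meet in the poset of partial functions. Once these label components are pinned down, the vertex and edge components are routine powerset-lattice facts, and the three assertions follow.
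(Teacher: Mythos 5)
Your proposal is correct and follows essentially the same route as the paper: both reduce the claim to componentwise statements in the powerset lattices (for vertices and edges) and in the inclusion poset of partial functions (for $\sigma$ and $\delta$), derive the consistency/upper-bound equivalence from the existence of the componentwise union as a graph, and observe that the componentwise intersection is always a graph and is the meet. Your treatment is somewhat more detailed than the paper's (which leaves the converse direction and the leastness check implicit in the universal property of unions of quadruples of sets), but there is no substantive difference in method.
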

\begin{proof}
Admitting an upper bound in $\Gsdp$ means that there is $K \in \Gsdp$ such that $G \subseteq K$ and $H \subseteq K$.
Since the subgraph order is defined componentwise, we consider the union of $G$ and $H$ as quadruplets of sets:
$$
(V(G) \cup V(H), E(G) \cup E(H), \sigma(G) \cup \sigma(H), \delta(G) \cup \delta(H)),
$$
which always exists and is the least upper bound in the poset of quadruplets of sets with the natural order.
For this object to be a graph, it is enough to check $E(G) \cup E(H)$ is a non-intersecting two-element set, and that $\sigma(G)$ and $\sigma(H)$ (resp. $\delta(G)$ and $\delta(H)$) coincide on their common domain.
This is the case when $G$ and $H$ admit an upper bound as required in the definition.
Conversely, when the condition holds, the union is itself an upper bound.

For intersection, consider 
$$
(V(G) \cap V(H), E(G) \cap E(H), \sigma(G) \cap \sigma(H), \delta(G) \cap \delta(H))
$$
which is clearly the greatest lower bound and is always a graph.
\end{proof}

\begin{rem}
\label{rem:renamings-monotonic}
Note that the pointer dropping function $i : \Dsdp^r \to \Gsdp$ is monotonic.
Any renaming $R : \Gsdp \to \Gsdp$ is monotonic as well.
The underlying reason is that $R$ basically acts elementwise on the set $V(G)$ of each graph $G$, a monotonic operation for set inclusion.
\end{rem}

The two first questions being answered positively, let us instantiate Eq.~\eqref{eq:pointwise2} the subgraph order.
\begin{equation}
\label{eq:pointwise3}
\lan(G) = \bigcup \,\{\, f(D) \mid D \in \GDsdp^r \text{ s.t. } i(D) \subseteq G \,\}
\end{equation}

\subsection{Comparing Disks and Subgraphs}
\label{sec:falls-short}

\newcommand{\SetY}{I_v}

Let us embark on the third question: is it the case that $i(D) \subseteq G$ implies $D = G^r_v$ for some $v$?
Making the long story short, the answer is no.
But it is crucial to understand precisely why.
Fix some vertex $v \in V(G)$.
Clearly, in Eq.~\eqref{eq:cgd}, the only considered disk centered on $v$ is $G^r_v$.
Let us determine now what are exactly the disks centered on $v$ involved in Eq.~\eqref{eq:pointwise3}, that is, the set $\SetY := \{ (H,v) \in \GDsdp^r \mid H \subseteq G \}$.
Firstly, $G^r_v$ is one of them of course.

\begin{lem}
\label{prop:disk_in_Y}
For any vertex $v\in V(G)$, $G^r_v \in \SetY$.
\end{lem}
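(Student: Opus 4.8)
The plan is simply to verify that $G^r_v$ meets the two requirements for membership in $\SetY$: that it is a radius-$r$ disk pointed at $v$, and that its underlying graph is a subgraph of $G$. The first is immediate: by Definition~\ref{def:graph_disk}, $G^r_v$ is by construction an element of $\GDsdp^r$ whose pointer is exactly $v$, so it already has the shape $(H,v)$ demanded in the definition of $\SetY$.

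For the second requirement I would write $G^r_v = (H,v)$ and check $H \subseteq G$ componentwise, following Definition~\ref{def:graph_inc}. Each of the four inclusions holds for a structural reason read directly off Definition~\ref{def:graph_disk}: $V(H) = B_G(v,r+1)$ is a ball of $G$, hence a subset of $V(G)$; $E(H)$ is explicitly defined as a subset of $E(G)$; and $\sigma(H) = \sigma(G)\restriction B_G(v,r)$ together with $\delta(H) = \delta(G)\restriction E(H)$ are restrictions of the respective labeling functions. Recalling the reading of $\sigma(G)\subseteq\sigma(H)$ and $\delta(G)\subseteq\delta(H)$ as inclusions of graphs of partial functions spelled out after Definition~\ref{def:graph_inc}, a restriction is a fortiori a subset of input--output pairs, so both labeling inclusions follow.

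There is no genuine obstacle here: the statement amounts to a bookkeeping check that the disk construction only ever prunes $G$ --- restricting the vertex set to a ball, the edge set to incident edges, and the labelings to these restricted domains --- and never adds anything. The one point worth stating explicitly, more as a clarification than a difficulty, is that the subgraph order on the labeling components is the inclusion of partial functions as relations, which is precisely why restricting a domain yields a subgraph rather than merely a ``smaller'' function in some other sense.
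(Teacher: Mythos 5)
Your proof is correct and takes essentially the same route as the paper's: the paper likewise observes that each of the four components of the disk's underlying graph is, by Definition~\ref{def:graph_disk}, a subset (or restriction, hence a subset as a relation) of the corresponding component of $G$, exactly as required by Definition~\ref{def:graph_inc}. Your version merely spells out the componentwise check and the membership in $\GDsdp^r$ that the paper leaves implicit.
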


\begin{proof}
In Definition~\ref{def:graph_disk}, the graph component $H$ of the pointed graph $G^r_v$ is explicitly defined by taking a subset for each of the four components of $G$, as required in Definition~\ref{def:graph_inc} of subgraph.
\end{proof}

The concern is that $G^r_v$ is generally not the only one disk in $\SetY$ as expected by Eq.~\eqref{eq:cgd}.
However, $G^r_v$ is the maximal one in the following sense.

\begin{lem}
\label{prop:disk_top}
For any vertex $v\in V(G)$, consider the disk $G^r_v = (H,v)$.
Then for any disk $(H',v) \in \SetY$, we have $H' \subseteq H$.
\end{lem}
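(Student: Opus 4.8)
The plan is to unfold Definition~\ref{def:graph_disk} on both sides and reduce the subgraph relation $H' \subseteq H$ of Definition~\ref{def:graph_inc} to a single comparison of distances measured in two different graphs. Since $(H',v) \in \SetY$ is a disk of radius $r$, there is a graph $K$ with $v \in V(K)$ such that $(H',v) = K^r_v$; hence the four components of $H'$ are obtained by applying the disk formulas to $K$, whereas those of $H = G^r_v$ are obtained by applying the same formulas to $G$. The hypothesis $H' \subseteq G$ is the only bridge between $K$ and $G$, so everything must flow from it.

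The key step I would prove first is a distance comparison: for every $w \in V(H')$, writing $d_K$ and $d_G$ for the distances in $K$ and $G$, one has $d_G(v,w) \le d_K(v,w) \le r+1$. To see this I would pick a shortest path $v = p_0, \dots, p_n = w$ in $K$ realizing $d_K(v,w) = n \le r+1$ and check that it lies entirely inside $H'$: its vertices $p_0, \dots, p_{n-1}$ satisfy $d_K(v,p_k) = k \le r$, so they belong to $B_K(v,r)$, while $p_n = w \in B_K(v,r+1) = V(H')$; consequently each edge $\{p_k\!:\!i, p_{k+1}\!:\!j\}$ meets $B_K(v,r)$ and is therefore retained in $E(H')$ by the disk construction. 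Since $H' \subseteq G$, this is also a path of $G$, whence $d_G(v,w) \le n = d_K(v,w)$. I expect this to be the only genuinely non-routine part of the argument: it is precisely where the feature ``a disk keeps every edge incident to its interior'' is used, and it is what allows the $K$-metric bounds to be transported into the $G$-metric, the unknown ambient graph $K$ disappearing thereafter.

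With the distance comparison available, the four componentwise inclusions follow mechanically. For vertices, any $w \in V(H') = B_K(v,r+1)$ now satisfies $d_G(v,w) \le r+1$, so $w \in B_G(v,r+1) = V(H)$. For edges, any $e = \{u\!:\!i, w\!:\!j\} \in E(H')$ has an endpoint, say $u$, in $B_K(v,r)$; then $d_G(v,u) \le r$, so $u \in B_G(v,r)$, and since $e \in E(H') \subseteq E(G)$, the defining condition of $E(H)$ is met and $e \in E(H)$. For labels, $H' \subseteq G$ gives $\sigma(H') \subseteq \sigma(G)$ and $\delta(H') \subseteq \delta(G)$; hence if $(w,s) \in \sigma(H')$ then $w \in B_K(v,r) \subseteq B_G(v,r)$ and $\sigma(G)(w) = s$, so $(w,s) \in \sigma(G) \restriction B_G(v,r) = \sigma(H)$, and identically for $\delta$ using $E(H') \subseteq E(H)$. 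Collecting the four inclusions yields $H' \subseteq H$, so $G^r_v$ is indeed the maximum of $\SetY$.
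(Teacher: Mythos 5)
Your proof is correct and takes essentially the same route as the paper's: unfold Definition~\ref{def:graph_disk}, establish the four componentwise inclusions of Definition~\ref{def:graph_inc}, with the key step being that a shortest path witnessing the disk's radius bound stays inside $H'$ and hence, via $H' \subseteq G$, transfers the distance bound to $G$. Your explicit ambient graph $K$ and the comparison $d_G(v,w) \le d_K(v,w) \le r+1$ merely spell out in full detail what the paper compresses into ``there is a path in $H'$ from $v$ to $w$ of length at most $r+1$ \dots this path itself is also in $G$,'' and your treatment of the edge and label components matches the paper's ``proved similarly'' remark.
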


\begin{proof}
By Definition~\ref{def:graph_inc} of the subgraph order, we need to prove four inclusions.
For the inclusion of vertices, consider an arbitrary vertex $w \in V(H')$.
By definition of disks of radius $r$, there is a path in $H'$ from $v$ to $w$ of length at most $r+1$.
But since $H' \subseteq G$, we have $w \in V(G)$ and this path itself is also in $G$.
So $w \in V(G)$ respects the defining property of the set $V(G^r_v)$ and therefore belongs to it.
The three other inclusions ($E(H') \subseteq E(H)$, $\sigma(H') \subseteq \sigma(H)$ and $\delta(H') \subseteq \delta(H)$) are proved similarly, by using the definition of disks, then the fact $H' \subseteq G$, and finally the definition of $G^r_v$.
\end{proof}

In some sense, the converse of the previous proposition holds: it is roughly enough to be smaller than $G^r_v$ to be a disk of $\SetY$, as characterized by the following two facts.

\begin{lem}
\label{prop:disk_conn}
The set of disks $\GDsdp$ is the set of pointed connected finite graphs.
\end{lem}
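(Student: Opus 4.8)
The statement is an equality of two sets, so the plan is to prove both inclusions.

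For the forward inclusion I would show that every disk is pointed, connected, and finite. Being pointed is immediate, since a disk $G^r_c = (H,c)$ carries its center $c$ as pointer. Finiteness follows from the degree bound: as each vertex lies in at most $\card(\pi)$ edges, the number of vertices at distance exactly $k$ from $c$ grows at most like $\card(\pi)^k$, so the ball $B_G(c,r+1) = V(H)$ is finite, and hence so is $E(H)$. The crux of this direction is connectivity. Here I would take an arbitrary $w \in V(H) = B_G(c,r+1)$ together with a shortest path $c = p_0, p_1, \dots, p_n = w$ in $G$ with $n \le r+1$; since the path is shortest, each $p_k$ sits at distance exactly $k$. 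Every intermediate vertex $p_k$ with $k \le r$ then lies in $B_G(c,r)$, so the edge joining $p_k$ to $p_{k+1}$ is incident to $B_G(c,r)$ and therefore belongs to $E(H)$ by the defining edge-selection rule of the disk; as both endpoints of each such edge lie in $B_G(c,r+1) = V(H)$, the whole path lives inside $H$, witnessing that $w$ is connected to $c$.

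For the reverse inclusion, given a pointed connected finite graph $(G,v)$ I would exhibit it as its own disk. Because $G$ is finite and connected, $v$ has finite eccentricity; choosing any radius $r$ at least this eccentricity yields $B_G(v,r) = B_G(v,r+1) = V(G)$. I would then verify that the four components of $G^r_v$ coincide with those of $G$: the vertex set is $B_G(v,r+1) = V(G)$; the edge set is $\{e \in E(G) \mid e \cap B_G(v,r) \neq \emptyset\} = E(G)$, because every edge meets $V(G)$; and the two labelings, being the restrictions of $\sigma(G)$ and $\delta(G)$ to $B_G(v,r) = V(G)$ and to $E(G)$ respectively, are left unchanged. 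Hence $G^r_v = (G,v)$, so $(G,v) \in \GDsdp$.

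The main obstacle is the connectivity argument in the forward direction: one must ensure that the shortest-path edges are genuinely retained by the disk, which is exactly why restricting attention to distance-realizing paths matters, since an arbitrary path could leave $B_G(c,r)$ and traverse an edge not incident to it. A secondary subtlety, in the reverse direction, is that a disk discards the $\sigma$-label of every vertex at distance exactly $r+1$; this is harmless here precisely because the choice $r \ge \mathrm{ecc}(v)$ leaves no vertex at distance $r+1$, so all labels of $G$ are recovered.
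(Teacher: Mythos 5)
Your proof is correct and follows essentially the same route as the paper's: finiteness from the degree bound $\card(\pi)$, connectivity via paths from the center retained by the disk's edge-selection rule, and the converse by taking a radius large enough that $B_G(v,r)=V(G)$, so that $(G,v)=G^r_v$. The only cosmetic difference is that you pick $r \ge \mathrm{ecc}(v)$ where the paper picks $r \ge \card(V(G))-1$ (an upper bound on the eccentricity), and your explicit shortest-path argument and the observation about labels at distance $r+1$ merely spell out details the paper leaves implicit.
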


\begin{proof}
Indeed, for any disk $(H,v) \in \GDsdp$, $H$ is connected since all vertices are connected to $v$, and $H$ is finite since all vertices have at most $\card(\pi)$ neighbors, so a rough bound is $\card(\pi)^r$ with $r$ the radius of the disk.
Conversely, for any pointed connected finite graphs $(G,v)$, we have $(G,v) = G^{r}_v$ for any $r \ge \card(V(G))-1$, the length of the longest possible path.
\end{proof}

\begin{prop}
\label{prop:Iv_poset}
For any vertex $v\in V(G)$, $\SetY$ is a principal downward closed set in the poset of graphs restricted to connected finite graphs containing $v$.
\end{prop}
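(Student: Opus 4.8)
The plan is to reduce the whole statement to a single set identity and then prove its two inclusions separately. Recall that ``principal downward closed set'' means a lower set possessing a greatest element, hence generated by that element. Writing $(H,v)$ for the disk $G^r_v$ (as in Lemmas~\ref{prop:disk_in_Y} and~\ref{prop:disk_top}), and using Lemma~\ref{prop:disk_conn} to identify the poset of connected finite graphs containing $v$, pointed at $v$, with the poset $\GDsdp$ of disks centered at $v$ ordered by the pointed subgraph order of Definition~\ref{def:graph_inc} (which for a fixed pointer is just subgraph inclusion of the graph parts), the proposition becomes the identity
$$
\SetY = \{\, (H',v) \in \GDsdp \mid H' \subseteq H \,\}.
$$
This says exactly that $\SetY$ is the principal down-set generated by $G^r_v$.

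The left-to-right inclusion is essentially already proved. Lemma~\ref{prop:disk_in_Y} gives $G^r_v \in \SetY$ (so in particular $H \subseteq G$), and Lemma~\ref{prop:disk_top} states precisely that $H' \subseteq H$ for every $(H',v) \in \SetY$. Thus $G^r_v$ is a greatest element of $\SetY$, which settles the ``principal'' part of the claim and this first inclusion in one stroke.

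The right-to-left inclusion carries the real content and is where I expect the main obstacle to lie. Here one takes an arbitrary connected finite pointed graph $(H',v)$ with $H' \subseteq H$ and must show $(H',v) \in \SetY$, \ie both $H' \subseteq G$ and $(H',v) \in \GDsdp^r$. The containment $H' \subseteq G$ is immediate from $H' \subseteq H \subseteq G$ and transitivity of the subgraph order. The genuine task is to certify that $(H',v)$ is a disk of radius exactly $r$: verifying against Definition~\ref{def:graph_disk} that every vertex of $H'$ lies within distance $r+1$ of $v$, that $H'$ already contains every edge incident to its radius-$r$ ball, and that the labelling on its distance-$(r+1)$ boundary respects the disk convention recalled just after Definition~\ref{def:graph_disk}. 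The delicate point --- and, I expect, the crux of the whole proof --- is that passing to a subgraph can only increase distances, so that a vertex lying within distance $r+1$ of $v$ in $H$ might be pushed beyond $r+1$ in $H'$ once edges are dropped. The intended route is to produce, for each $w \in V(H')$, a witnessing path from $v$ inside $H'$ of length at most $r+1$, and then to use the closure properties defining $H = G^r_v$ to force the required incident edges and boundary labels to be present already in $H'$. It is precisely this sensitivity to deleted edges that the surrounding discussion hedges against with the word ``roughly''.
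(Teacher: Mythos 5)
Your reduction and your left-to-right inclusion match the paper, but the right-to-left inclusion---the step you yourself flag as the crux---asserts something false, and no witnessing-path argument can rescue it. Being a connected finite subgraph of (the graph part of) $G^r_v$ containing $v$ does \emph{not} make $(H',v)$ a disk of radius exactly $r$. Take $r=1$, $\Sigma=\Delta=\emptyset$, and let $G$ be a $4$-cycle on vertices $v,a,b,c$ (two ports suffice). Then $G^1_v$ has the whole cycle as graph part $H$, and deleting the edge between $c$ and $v$ yields the path $H' = v\!-\!a\!-\!b\!-\!c$, a connected subgraph of $H$ containing $v$. In $H'$ the vertex $c$ is at distance $3 > r+1$, so the path of length at most $r+1$ that your plan requires does not exist; worse, $(H',v)$ is not $K^1_v$ for \emph{any} graph $K$: Definition~\ref{def:graph_disk} forces every $K$-edge incident to $v$, and then every $K$-edge incident to $a$, to lie in $E(H')$, which pins $B_K(v,1)=\{v,a\}$ and $B_K(v,2)=\{v,a,b\}$, while $V(H')$ also contains $c$. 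So with $\SetY$ read literally as a subset of $\GDsdp^r$, your identity fails---and, for that matter, this same example shows the literal radius-$r$ set $\SetY$ is not even downward closed.

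The paper's proof never attempts your step: it invokes Lemma~\ref{prop:disk_conn} to conclude only that $(H',v)$ is a disk of \emph{some} (sufficiently large) radius, \ie it reads the proposition at the level of $\GDsdp$, disks of all radii, rather than $\GDsdp^r$---and this relaxation is precisely what the preceding ``roughly'' and ``in some sense'' are hedging. Under that reading, downward closure is a one-liner (finiteness and connectedness are inherited by $H'$, and $H' \subseteq H \subseteq G$ by transitivity), with principality supplied by Lemmas~\ref{prop:disk_in_Y} and~\ref{prop:disk_top} exactly as you use them. Your plan becomes correct if you replace ``disk of radius exactly $r$'' by ``disk'' in this all-radii sense; as written, the announced route of producing paths of length at most $r+1$ inside $H'$ and then certifying the radius-$r$ disk conditions cannot be carried out.
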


\begin{proof}
Indeed, consider any disk $(H,v)$ such that $H \subseteq G$.
Now, take $H'$ a connected graph containing $v$ and such that $H' \subseteq H$.
Since $H$ is finite, so is $H'$.
By Lemma~\ref{prop:disk_conn}, $(H',v)$ is also a disk.
By transitivity, $H' \subseteq H \subseteq G$.
This proves that we have a downward closed set.
This is moreover a principal one because of Lemmas~\ref{prop:disk_in_Y} and~\ref{prop:disk_top}.
\end{proof}

We now know that the union of Eq.~\eqref{eq:pointwise3} receives a bigger set of local outputs to merge than the union of Eq.~\eqref{eq:cgd}.
But we cannot conclude anything yet.
Indeed, it might be the case that all additional local outputs do not contribute anything more.
This is in particular the case if disks $D \in \SetY$ are such that $f(D) \subseteq f(G^r_v)$.
This is related to the fourth and last question.

\subsection{Monotonic and General Causal Graph Dynamics}
\label{sec:mono-cgd-are-kan}

The last remark invites us to consider the case where the CGD is monotonic.
We deal with the general case afterward.

\subsubsection{Monotonic CGDs as Kan Extensions.}

As just evoked, things seem to go well if the local rule $f$ happens to be monotonic.
All the ingredients have been already given and the proposition can be made formal straightforwardly.

\begin{prop}
\label{prop:mcgd-are-ke}
Let $F: \Gsdp \rightarrow \Gsdp$ be a CGD with local rule $f: \GDsdp^r \rightarrow \Gsdp$ of radius $r$.
If $f$ is monotonic, then $F$ is the pointwise left Kan extension of $f$ along $i: \GDsdp^r \rightarrow \Gsdp$, the pointer dropping function.
\end{prop}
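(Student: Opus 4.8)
The plan is to compute the right-hand side of Eq.~\eqref{eq:pointwise3} directly and show that it equals $F(G)$ as given by Eq.~\eqref{eq:cgd}. First I would reorganize the index set of the supremum according to the center of each disk. Since a pointed graph $(H,c)$ can only be a subgraph of $G$ when its pointer satisfies $c \in V(G)$, every disk $D = (H,c)$ with $i(D) = H \subseteq G$ has its center in $V(G)$. Hence the family $\{\, D \in \GDsdp^r \mid i(D) \subseteq G \,\}$ is indexed by the centers $v \in V(G)$, with the disks of center $v$ being exactly $\SetY$, and Eq.~\eqref{eq:pointwise3} rewrites as the double union $\lan(G) = \bigcup_{v \in V(G)} \bigcup_{D \in \SetY} f(D)$.

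The heart of the argument is then the inner union. For a fixed $v \in V(G)$, Lemma~\ref{prop:disk_in_Y} gives $G^r_v \in \SetY$, and Lemma~\ref{prop:disk_top} says $G^r_v$ is the greatest element of $\SetY$ in the subdisk order. Using the hypothesis that $f$ is monotonic, every $D \in \SetY$ satisfies $f(D) \subseteq f(G^r_v)$, so $f(G^r_v)$ is an upper bound of $\{\, f(D) \mid D \in \SetY \,\}$; being itself a member of this set (as $G^r_v \in \SetY$), it is in fact the supremum. Therefore $\bigcup_{D \in \SetY} f(D) = f(G^r_v)$.

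Substituting this back collapses the inner unions and yields $\lan(G) = \bigcup_{v \in V(G)} f(G^r_v)$, which is exactly $F(G)$ by Eq.~\eqref{eq:cgd}. Along the way this also settles the well-definedness clause in the definition of the pointwise left Kan extension: each inner supremum exists, being attained at $G^r_v$, and the outer union exists because it is precisely the union defining $F(G)$, whose existence is guaranteed by property (4) of local rules.

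I expect no serious obstacle here, since all the structural work has been front-loaded into Lemmas~\ref{prop:disk_in_Y} and~\ref{prop:disk_top}. The only point requiring a little care is the well-definedness of the suprema: one must confirm that restricting attention to the maximal disk of each $\SetY$ does not lose the guarantee that the overall union exists. This is immediate once the double union is recognized as $F(G)$, so the proof is essentially a bookkeeping of the already-established facts together with one application of the monotonicity of $f$.
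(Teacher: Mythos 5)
Your proof is correct and takes essentially the same route as the paper's: both rest on Lemma~\ref{prop:disk_in_Y}, Lemma~\ref{prop:disk_top} and one application of the monotonicity of $f$, differing only in presentation (you collapse each inner union to $f(G^r_v)$ and read off equality directly, while the paper phrases the same two facts as a double inclusion $F(G) \subseteq \lan(G) \subseteq F(G)$). Your additional remark that the suprema are well defined because each inner supremum is attained at $G^r_v$ and the outer union is exactly the one guaranteed by property (4) of local rules is a correct and slightly more careful treatment of a point the paper leaves implicit.
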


\begin{proof}
The proposition is equivalent to show that $F(G) = \lan(G)$ for all $G$.
We proceed by double-inclusion.
Summarizing our journey up to here, we now know that Eq.~\eqref{eq:pointwise3} and Eq.~\eqref{eq:cgd} are similar except that the former iterates over the set of disks $I = \bigcup_{v \in V(G)} \SetY$ while the latter iterates over $J = \{ G^r_v \mid v \in V(G) \}$, with $J \subseteq I$ by Proposition~\ref{prop:disk_in_Y}.
We get $F(G) \subseteq \lan(G)$.
Moreover, by Proposition~\ref{prop:disk_top}, for any $D \in \SetY$ for any $v$, $D \subseteq G^r_v$, and by monotonicity of $f$, $f(D) \subseteq f(G^r_v)$.
So $f(D) \subseteq F(G)$ for all $D$ and $\lan(G) \subseteq F(G)$, leading to the expected equality.
\end{proof}

The class of CGDs having such a monotonic local rule is easily characterizable: they correspond to CGDs that are monotonic themselves as stated by the following proposition.

\begin{prop}
\label{prop:cgd_monotonic_iff_local_rule_monotonic}
A CGD $F$ is monotonic iff $F$ admits a monotonic local rule.
\end{prop}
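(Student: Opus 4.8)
The statement is a biconditional, so the plan is to treat the two implications separately; one comes essentially for free from the previous result, while the other requires an explicit construction. For the ``if'' direction, suppose $F$ admits a monotonic local rule $f$ of radius $r$. Then Proposition~\ref{prop:mcgd-are-ke} identifies $F$ with the pointwise left Kan extension $\lan$ of $f$ along the pointer dropping $i$, and such an extension is monotonic whenever it is well defined. Hence $F$ is monotonic and nothing more is needed. (One could also argue directly: $G \subseteq G'$ forces $G^r_v \subseteq {G'}^r_v$ for every $v \in V(G)$ — using that adding edges can only shrink distances, so $B_G(v,\rho) \subseteq B_{G'}(v,\rho)$ — and then monotonicity of $f$ together with the componentwise definition of unions yields $F(G) \subseteq F(G')$.)

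The substance is in the ``only if'' direction. Assume $F$ is monotonic; being a CGD it arises from some local rule $f$ of radius $r$, but this $f$ need not be monotonic. I would manufacture a monotonic replacement by setting, for every disk $D \in \GDsdp^r$,
$$ f'(D) := F(i(D)), $$
that is, by applying the global dynamics to the underlying graph of the disk, forgetting its centre. Monotonicity of $f'$ is then inherited directly: $D \subseteq D'$ gives $i(D) \subseteq i(D')$ since pointer dropping is monotonic (Remark~\ref{rem:renamings-monotonic}), and monotonicity of $F$ propagates the inclusion to $f'(D) \subseteq f'(D')$.

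Two further facts remain. The first is that $f'$ recomputes $F$, \ie $F(G) = \bigcup_{v \in V(G)} f'(G^r_v)$. The inclusion $\bigcup_v f'(G^r_v) \subseteq F(G)$ follows from monotonicity, since $i(G^r_v) \subseteq G$ yields $F(i(G^r_v)) \subseteq F(G)$. For the reverse inclusion the key lemma is that retaking the radius-$r$ disk around $v$ inside $i(G^r_v)$ returns $G^r_v$ itself; with this, $f(G^r_v)$ appears as one of the terms constituting $F(i(G^r_v))$, so $F(G) = \bigcup_v f(G^r_v) \subseteq \bigcup_v f'(G^r_v)$. The second is that $f'$ is genuinely a local rule, \ie it satisfies conditions (1)--(4) of Definition~\ref{def:graph_local_rule}. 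Condition (1) uses the renaming invariance of $F$ (Proposition~\ref{prop:cgd-rename-invariance}) together with the commutation of $i$ with renamings. Condition (3) follows from boundedness of $f$ and the fact that $i(D)$ has a number of vertices bounded in terms of $r$ and $\card(\pi)$, so $F(i(D))$ is a union of boundedly many, boundedly sized outputs of $f$. Condition (4) becomes trivial under monotonicity: $f'(G^r_u)$ and $f'(G^r_v)$ are both subgraphs of $F(G)$, hence admit a common upper bound and are therefore consistent.

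The main obstacle I expect is condition (2), the empty-intersection property, since it concerns the total intersection of a possibly infinite family rather than pairwise disjointness. The plan here is to argue by contradiction: if a vertex $x$ belonged to every $V(f'((H_i,v_i))) = V(F(H_i))$, then for each $i$ one could pick a centre $w_i \in V(H_i)$ with $x \in V(f((H_i)^r_{w_i}))$; since $V((H_i)^r_{w_i}) \subseteq V(H_i)$, the disks $(H_i)^r_{w_i}$ still have empty total intersection, so condition (2) for the original rule $f$ forces their outputs to have disjoint vertices, contradicting the common membership of $x$. Establishing the disk-of-disk identity cleanly and handling this infinite-family bookkeeping are the only places where genuine care is required.
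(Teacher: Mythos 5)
Your proposal is correct and follows essentially the same route as the paper: the forward direction via Proposition~\ref{prop:mcgd-are-ke}, and the converse via the very same construction $f'(D) := F(i(D))$, the same key identity $(i(G^r_v))^r_v = G^r_v$, and the same double inclusion to show $f'$ regenerates $F$. The only difference is presentational: you verify conditions (1)--(4) of Definition~\ref{def:graph_local_rule} explicitly (correctly, including the infinite-family contradiction argument for condition (2)), where the paper dismisses this step as ``checked easily''.
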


\begin{proof}
If $F$ has a monotonic local rule $f$, then $F$ is a left Kan extension by Proposition~\ref{prop:mcgd-are-ke} and is monotonic as recalled in Section~\ref{sec:poset-kan}.

Conversely, suppose $F$ monotonic.
By Definition~\ref{def:graph_cgd} of CGDs, there is a local rule $f'$, not necessarily monotonic, of radius $r$ generating $F$.
Consider $f$ defined by $f((H,c)) = F(H)$ for any disk $(H,c)$.
$f$ is monotonic by monotonicity of $F$.
$f$ is a local rule, which is checked easily using that $f'$ is itself a local rule and that $f((H,c)) = \bigcup_{v \in H} f'(H^r_v)$.
We now show that $f$ generates $F$.
For any graph $G$:
$$
\bigcup_{c \in V(G)} f(G^r_c) = \bigcup_{c \in V(G)} \bigcup_{v \in G^r_c} f'(i(G^r_c)^r_v).
$$
But $f'(i(G^r_c)^r_v) \subseteq F(i(G^r_c)) \subseteq F(G)$, the last inclusion coming from the monotonicity of $F$.
So $\bigcup_{c \in V(G)} f(G^r_c) \subseteq F(G)$.
Moreover, $f'(G^r_c) = f'(i(G^r_c)^r_c)$ for any $c$, so $f'(G^r_c) \subseteq \bigcup_{c \in V(G)} f(G^r_c)$ and $F(G) \subseteq \bigcup_{c \in V(G)} f(G^r_c)$.
\end{proof}

\begin{cor}
\label{cor:monototonic-cgd-iff-lan}
A CGD is monotonic iff it is a left Kan extension. 
\end{cor}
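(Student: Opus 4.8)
The plan is to derive this corollary immediately by chaining the two preceding propositions, treating the two implications separately.

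For the forward implication, I would start from an arbitrary monotonic CGD $F$. By Proposition~\ref{prop:cgd_monotonic_iff_local_rule_monotonic}, monotonicity of $F$ guarantees that $F$ admits a monotonic local rule $f$ of some radius $r$. Feeding this monotonic $f$ into Proposition~\ref{prop:mcgd-are-ke} then yields that $F$ coincides with the pointwise left Kan extension of $f$ along the pointer-dropping function $i: \GDsdp^r \rightarrow \Gsdp$. Hence $F$ is (equal to) a left Kan extension, which is exactly what the implication asks.

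For the converse implication, I would suppose that the CGD $F$ is a left Kan extension, that is, $F = \lan$ for the pointwise left Kan extension $\lan$ of its local rule $f$ along $i$. The key observation is that this direction does not even need the detailed machinery: by the very definition given in Section~\ref{sec:poset-kan}, whenever a pointwise left Kan extension is well defined it is \emph{necessarily monotonic}. Since $F$ equals such a $\lan$, $F$ inherits this monotonicity, and the implication follows at once.

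I expect no genuine obstacle here, as the corollary is purely a bookkeeping combination of the two propositions plus the definitional monotonicity of Kan extensions. The only point deserving care is making the statement ``$F$ is a left Kan extension'' precise, namely interpreting it as $F$ coinciding with the pointwise left Kan extension of one of its local rules along $i$; once this reading is fixed, both directions reduce to citing the earlier results, and I would keep the write-up to a few lines.
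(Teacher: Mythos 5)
Your proposal is correct and matches the paper's (implicit) argument exactly: the paper leaves the corollary without an explicit proof precisely because it follows by chaining Proposition~\ref{prop:cgd_monotonic_iff_local_rule_monotonic} with Proposition~\ref{prop:mcgd-are-ke} for the forward direction, and by the definitional monotonicity of well-defined pointwise left Kan extensions between posets for the converse. Your care in fixing the reading of ``$F$ is a left Kan extension'' is appropriate, and no further detail is needed.
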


\subsubsection{The Non-Monotonic Case.}

\newcommand{\bnode}[4]{
\node [shape=circle,minimum size=0.6cm,draw,fill=#3] at (#2) (#1) {#4};
\node[xshift=-7.pt, yshift=-5.pt] at (#1.west) {$:\!l$};
\node[xshift=7.pt, yshift=5.pt] at (#1.east) {$:\!r$};
\fill[black] (#1.east)++(-2pt,-1pt) rectangle ++(4pt,2pt);
\fill[black] (#1.west)++(-2pt,-1pt) rectangle ++(4pt,2pt);
}

\begin{figure}[t]
\centering
\begin{tikzpicture}[every loop/.style={},scale=.8, every node/.style={scale=.8}]
\begin{scope}
\bnode{B}{-4,1}{green!20}{}
\end{scope}
\node [shape=single arrow,fill=gray,minimum height=1cm] at (0,1) {};
\begin{scope}
\bnode{B}{4,1}{green!20}{$\,\blacktriangleright$}
\end{scope}

\begin{scope}
\bnode{A}{-6,0}{blue!20}{}
\bnode{B}{-4,0}{green!20}{$\,\blacktriangleright$}
\bnode{C}{-2,0}{red!20}{}
\draw (A) edge (B);
\draw (B) edge (C);
\end{scope}
\node [shape=single arrow,fill=gray,minimum height=1cm] at (0,0) {};
\begin{scope}
\bnode{A}{2,0}{blue!20}{}
\bnode{B}{4,0}{green!20}{}
\bnode{C}{6,0}{red!20}{$\,\blacktriangleright$}
\draw (A) edge (B);
\draw (B) edge (C);
\end{scope}

\begin{scope}
\bnode{A}{-6,-1}{blue!20}{}
\bnode{B}{-4,-1}{green!20}{$\,\blacktriangleright$}
\draw (A) edge (B);
\end{scope}
\node [shape=single arrow,fill=gray,minimum height=1cm] at (0,-1) {};
\begin{scope}
\bnode{A}{2,-1}{blue!20}{}
\bnode{B}{4,-1}{green!20}{$\blacktriangleleft\,$}
\draw (A) edge (B);
\end{scope}
\end{tikzpicture}
\caption{\label{fig:non-monotonic-particle}
Moving particle CGD - non-monotonic behavior.
Each row represents an example of evolution with a graph $G$ on the left and $F(G)$ on the right.
Colors correspond to vertex names.}
\end{figure}
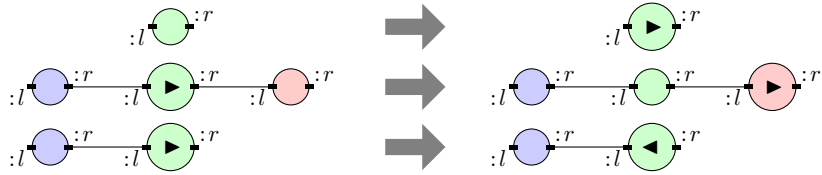

The previous result brings us close to our initial goal: encoding any CGD as a GT.
The job would be considered done only if there is no non-monotonic CGDs, or if those CGDs are degenerate cases.
However, it is clearly not the case and most of the examples in the literature are of this kind, as we can see with the following example, inspired from~\cite{DBLP:journals/nc/ArrighiMP20}.

Consider for example the modeling of a particle going left and right on a linear graph by bouncing at the extremities.
The linear structure is coded using two ports on each vertex, say $l$ and $r$ for left and right respectively, while the particle is represented with the presence of a label on some vertex, with two possible values indicating its direction.
See Figure~\ref{fig:non-monotonic-particle} for illustrations of such graphs.
The dynamics of the particle is captured by $F$ as follows.
Suppose that the particle is located at some vertex $v$ (in green in Figure~\ref{fig:non-monotonic-particle}), and wants to go to the right.
If there is an outgoing edge to the right to an unlabeled vertex $u$ (in red in Figure~\ref{fig:non-monotonic-particle}), the label representing the particle is moved from $v$ to $u$ (second row of Figure~\ref{fig:non-monotonic-particle}).
If there is no outgoing edge to the right (the port $r$ of $v$ is free), it bounces by becoming a left-going particle (third row of Figure~\ref{fig:non-monotonic-particle}).
$F$ works symmetrically for a left-going particle.

The behavior of $F$ is non-monotonic since the latter situation is a sub-graph of the former, while the particle behaviors in the two cases are clearly incompatible.
On Figure~\ref{fig:non-monotonic-particle}, the right hand sides of the second and third rows are comparable by inclusion, but the left hand sides are not.
This non-monotonicity involves a missing edge but missing labels may induce non-monotonicity as well.
Suppose for the sake of the argument that $F$ generates a right-going particle at any unlabeled isolated vertex (first row of Figure~\ref{fig:non-monotonic-particle}).
The unlabeled one-vertex graph is clearly a subgraph of any other one where the same vertex is labeled by a right-going particle and has some unlabeled right neighbor.
In the former case, the dynamics puts a label on the vertex, while it removes it on the latter case.
The new configurations are no longer comparable.
See the first and second rows of Figure~\ref{fig:non-monotonic-particle} for an illustration.

\begin{prop}
\label{prop:cgd_not_necessary_monotonic}
CGDs are not necessarily monotonic.
\end{prop}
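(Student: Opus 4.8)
The plan is to prove this negative statement by exhibiting a single counterexample, namely the moving-particle dynamics $F$ described just above and drawn in Figure~\ref{fig:non-monotonic-particle}. By Corollary~\ref{cor:monototonic-cgd-iff-lan} it would even suffice to produce one CGD that fails to be a left Kan extension, but directly contradicting monotonicity is the most economical route. Everything then splits into two tasks: (i) checking that this $F$ is genuinely a CGD, \ie\ that it arises from a bona fide local rule; and (ii) producing an explicit comparable pair of inputs whose outputs are incomparable.

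For task (i) I would work at radius $r = 1$, the smallest radius letting the rule inspect the neighbour towards which the particle travels. I would define $f : \GDsdp^1 \to \Gsdp$ by cases on the shape of a radius-$1$ disk centred at a vertex $w$: a right-going $w$ whose $r$-port is free is mapped to the one-vertex graph carrying $\blacktriangleleft$ on $w$ (the bounce); a right-going $w$ whose $r$-port is joined to an unlabeled neighbour $u$ via $\{w\!:\!r,u\!:\!l\}$ is mapped to the two-vertex graph on $\{w,u\}$ with that same edge, with $u$ relabelled $\blacktriangleright$ and $w$ left unlabeled; an unlabeled $w$ whose $l$-port receives a right-going neighbour is likewise mapped so as to put $\blacktriangleright$ on $w$; and symmetrically for left-going particles, with $f$ returning $\emptygraph$ on every disk matching none of these patterns. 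I would then verify the four conditions of Definition~\ref{def:graph_local_rule}. Condition (1) holds because the rule is phrased purely in terms of ports, labels and adjacency and never refers to the actual names drawn from $\V$, so each renaming $R$ admits $R$ itself as a conjugate; condition (3) is immediate since every output has at most two vertices; and condition (2) holds because $f(D)$ only ever uses vertices already present in $D$, so vertex-disjoint inputs yield vertex-disjoint outputs.

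The delicate part of task (i) --- and the genuine obstacle of the whole proof --- is consistency, condition (4): I must arrange that the local computations at the source $w$ and at the target $u$ of a migrating particle do not disagree on $u$'s new state. This is exactly why the rule is specified so that \emph{both} the disk centred at the right-going $w$ and the disk centred at its unlabeled right-neighbour $u$ independently prescribe the label $\blacktriangleright$ on $u$ while neither constrains the other vertex's state; with this design the two partial outputs agree wherever both are defined, and the union in Equation~\eqref{eq:cgd} is well defined. Checking that no other pair of overlapping disks ever prescribes conflicting edges or states, including the handling of the retained boundary vertices at distance $r+1$, is bookkeeping but is where all the care must go.

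For task (ii) I would take $G$ to be the isolated right-going vertex $v$ and $G'$ to be $v$ still right-going but now joined to a fresh unlabeled neighbour $u$ by the edge $\{v\!:\!r,u\!:\!l\}$. By Definition~\ref{def:graph_inc} we have $G \subseteq G'$, since $G'$ only adds a vertex and an edge and puts no new label on $v$. Applying $F$, the isolated $v$ bounces, so $\sigma(F(G))$ sends $v$ to $\blacktriangleleft$, whereas in $G'$ the particle migrates off $v$, so $v$ is unlabeled in $F(G')$. Hence $v \mapsto \blacktriangleleft$ lies in $\sigma(F(G))$ but not in $\sigma(F(G'))$, giving $F(G) \not\subseteq F(G')$ while $G \subseteq G'$. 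This contradicts monotonicity and, together with task (i), shows that the CGD $F$ is not monotonic, which is what is claimed.
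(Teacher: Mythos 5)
Your overall strategy (exhibit a single non-monotonic CGD) is the right one, but your instantiation has a genuine hole at exactly the point you flag as delicate: condition (4) of Definition~\ref{def:graph_local_rule} fails for your rule. Consider the head-on collision graph on vertices $v, u, x$ with $\sigma(v) = \blacktriangleright$, $u$ unlabeled, $\sigma(x) = \blacktriangleleft$, and edges $\{v\!:\!r, u\!:\!l\}$ and $\{u\!:\!r, x\!:\!l\}$. Your second case applied to the disk centred at $v$ prescribes the label $\blacktriangleright$ on $u$, while its left-going mirror applied to the disk centred at $x$ prescribes $\blacktriangleleft$ on $u$; both disks come from the same graph, the two outputs disagree on $\sigma$ at $u$, so condition (4) is violated and your $f$ is not a local rule. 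Worse, at radius $1$ this cannot be repaired by refining the pattern of your second case: by Definition~\ref{def:graph_disk}, the disk $G^1_v$ keeps the vertex $x$ (it lies at distance $r+1$) but restricts $\sigma$ to $B_G(v,1)$, so the label of $x$ is invisible from $v$'s disk and the source of a migrating particle can never detect an oncoming particle on the far side of its target. The root cause is your stated design principle that \emph{both} the source disk and the target disk must ``independently prescribe'' the new label: consistency of partial labelings only requires agreement where both are \emph{defined}, so it suffices --- and here is necessary --- to let only the disk centred at $u$ (which does see both neighbours' labels, at distance $1$) write $u$'s new label, with your second case leaving $u$ unlabeled; alternatively one can enlarge the radius. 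Either fix salvages your task (ii), whose comparable pair and incomparability computation are otherwise correct.

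It is worth noting that the paper avoids this bookkeeping entirely rather than carrying it out: its proof takes $\card(\pi) = 1$, $\Sigma = \Delta = \emptyset$ and radius $1$, where the only two disk shapes (isolated vertex and connected pair) are comparable under $\subseteq$ but never occur centred at the same vertex of any single graph, so condition (4) imposes no constraint tying $f$'s values on the two shapes together, and with no labels and edge-free outputs consistency is vacuous. The moving-particle dynamics is used in the paper only as informal motivation; the formal counterexample is deliberately minimal precisely so that the verification you dismiss as ``bookkeeping'' --- the step on which your proposal founders --- disappears.
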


\begin{proof}
Take $\card(\pi) = 1$, $\Sigma = \Delta = \emptyset$.
In a graph $G$, we have isolated vertices and pairwise connected vertices.
For the local rule, consider $r = 1$, so the two possible disks (modulo renaming) are the isolated vertex and a pair of connected vertices.
There is no graph $G$ such that the two disks appear together (otherwise $G$ would ask the vertex to be connected and disconnected at the same time).
So we can define $f$ such that it acts inconsistently on the two disks, since property 4 of local rules does not apply here.
The isolated vertex is included in the connected vertices in the sense of $\subseteq$, but the image by $f$, then by $F$ are not.
\end{proof}

From Corollary~\ref{cor:monototonic-cgd-iff-lan}, we conclude that all non-monotonic  CGDs are not left Kan extensions as we have developed so far, \ie, based on the subgraph relationship of Def~\ref{def:graph_inc}.
Analyzing the particle CGD in the light of Remark~\ref{rem:gt-poset-as-informational-poset} tells us why.
Indeed, the subgraph ordering is able to compare a place without any right neighbor with a place with some (left-hand-side of rows 3 and 2 in Figure~\ref{fig:non-monotonic-particle} for instance).
Following Remark~\ref{rem:gt-poset-as-informational-poset}, in the GT setting, the former situation has less information than the latter: in the former, there is no clue whether the place has a neighbor or not; the dynamics should not be able to specify any behavior for a particle at that place.
But clearly, for the corresponding CGD, both situations are totally different: the former is an extremity while the latter is not, and the dynamics specifies two different behaviors accordingly for a particle at that place.

\section{Universality of Monotonic Causal Graph Dynamics}
\label{sec:universal-mono-cgd}

\newcommand{\orig}{0}
\newcommand{\shut}{1}

We have proven that the set of all CGDs is strictly bigger than the set of monotonic ones.
However, we prove now that it is not more expressive.
By this, we mean that we can simulate any CGD by a monotonic one, \ie, monotonic CGDs are universal among general CGDs.

More precisely, given a \emph{general} CGD $F$, a monotonic simulation of $F$ consists of an encoding, call it $\omega(G)$, of each graph $G$, and a \emph{monotonic} CGD $F'$ such that whenever $F(G) = H$ on the \emph{general} side, $F'(\omega(G)) = \omega(H)$ on the \emph{monotonic} side.
Substituting $H$ in the latter equation using the former equation, we get $F'(\omega(G)) = \omega(F(G))$, the exact property of the expected simulation: for any $F$, we want some $\omega$ and $F'$ such that $F' \circ \omega = \omega \circ F$.

\subsection{Key Ideas of the Simulation}

\newcommand{\pnode}[4]{
\node [shape=circle,minimum size=0.6cm,draw,fill=#3] at (#2) (#1) {#4};
\node[xshift=-7.pt, yshift=-5.pt] at (#1.west) {$:\!l$};
\node[xshift=7.pt, yshift=5.pt] at (#1.east) {$:\!r$};
\node[xshift=-8.pt, yshift=4.pt] at (#1.north) {$:\!l'$};
\node[xshift=8.pt, yshift=-4.pt] at (#1.south) {$:\!r'$};
\fill[black] (#1.south)++(-1pt,-2pt) rectangle ++(2pt,4pt);
\fill[black] (#1.north)++(-1pt,-2pt) rectangle ++(2pt,4pt);
\fill[black] (#1.east)++(-2pt,-1pt) rectangle ++(4pt,2pt);
\fill[black] (#1.west)++(-2pt,-1pt) rectangle ++(4pt,2pt);
}

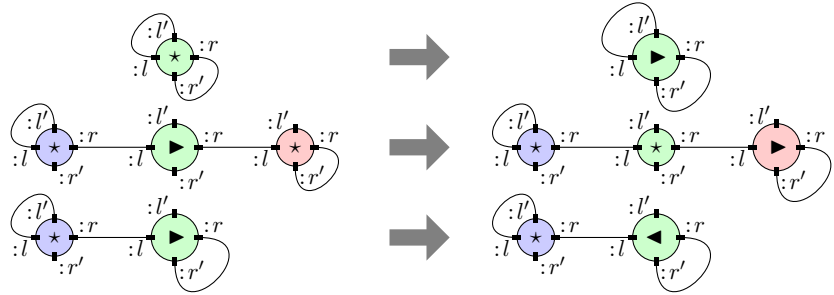
\begin{figure}[t]
\centering
\begin{tikzpicture}[every loop/.style={},scale=.8, every node/.style={scale=.8}]
\begin{scope}
\pnode{B}{-4,1.5}{green!20}{$\star$}
\draw (B) edge[out=180,in=90,loop,looseness=6] (B);
\draw (B) edge[out=0,in=270,loop,looseness=6] (B);
\end{scope}
\node [shape=single arrow,fill=gray,minimum height=1cm] at (0,1.5) {};
\begin{scope}
\pnode{B}{4,1.5}{green!20}{$\,\blacktriangleright$}
\draw (B) edge[out=180,in=90,loop,looseness=6] (B);
\draw (B) edge[out=0,in=270,loop,looseness=6] (B);
\end{scope}

\begin{scope}
\pnode{A}{-6,0}{blue!20}{$\star$}
\pnode{B}{-4,0}{green!20}{$\,\blacktriangleright$}
\pnode{C}{-2,0}{red!20}{$\star$}
\draw (A) edge[out=180,in=90,loop,looseness=6] (A);
\draw (A) edge (B);
\draw (B) edge (C);
\draw (C) edge[out=0,in=270,loop,looseness=6] (C);
\end{scope}
\node [shape=single arrow,fill=gray,minimum height=1cm] at (0,0) {};
\begin{scope}
\pnode{A}{2,0}{blue!20}{$\star$}
\pnode{B}{4,0}{green!20}{$\star$}
\pnode{C}{6,0}{red!20}{$\,\blacktriangleright$}
\draw (A) edge[out=180,in=90,loop,looseness=6] (A);
\draw (A) edge (B);
\draw (B) edge (C);
\draw (C) edge[out=0,in=270,loop,looseness=6] (C);
\end{scope}

\begin{scope}
\pnode{A}{-6,-1.5}{blue!20}{$\star$}
\pnode{B}{-4,-1.5}{green!20}{$\,\blacktriangleright$}
\draw (A) edge[out=180,in=90,loop,looseness=6] (A);
\draw (A) edge (B);
\draw (B) edge[out=0,in=270,loop,looseness=6] (B);
\end{scope}
\node [shape=single arrow,fill=gray,minimum height=1cm] at (0,-1.5) {};
\begin{scope}
\pnode{A}{2,-1.5}{blue!20}{$\star$}
\pnode{B}{4,-1.5}{green!20}{$\blacktriangleleft\,$}
\draw (A) edge[out=180,in=90,loop,looseness=6] (A);
\draw (A) edge (B);
\draw (B) edge[out=0,in=270,loop,looseness=6] (B);
\end{scope}
\end{tikzpicture}
\caption{\label{fig:monotonic-particle}
Moving particle CGD - monotonic behavior.
Compared to Figure~\ref{fig:non-monotonic-particle}, the vertices have two additional ports ($l'$ and $r'$) and unlabeled vertices are now labeled by $\star$.}
\end{figure}

In this section, we aim at introducing the key elements of the simulation informally and by means of the moving particle example.

\subsubsection{Encoding the Original Graphs: The Moving Particle Case.}

Let us design a monotonic simulation of the particle dynamics.
The original dynamics can be made monotonic by replacing the two missing edges at the extremities by easily identifiable loopback edges, making incomparable the two originally comparable situations.
For such loopback edges to exist, we need an additional port for each original port, in our case say $l'$ and $r'$ for instance.
For the case of non-monotonicity with labels, vertices where there is no particle (so originally unlabeled) are marked with a special label, say $\star$.
Figure~\ref{fig:monotonic-particle} depicts the same evolutions as Figure~\ref{fig:non-monotonic-particle} after those transformations.

This will be the exact role of the encoding function $\omega$: the key idea to design a monotonic simulation is to make uncomparable the initially comparable situations. 
Any missing entity (edge or label) composing an original graph $G$ needs to be replaced by a special entity (loopback edge or $\star$ respectively) in $\omega(G)$ indicating it was originally missing.
At the end of the day, for any $G \subsetneq H$, we get $\omega(G)$ and $\omega(H)$ no longer comparable.

\subsubsection{An Extended Set of Graphs.}

All that remains is to design $F'$ such that $F' \circ \omega = \omega \circ F$.
It is important to note however that the set of graphs targeted by $\omega$, which is the domain of $F'$, is by design much larger than the original one.
Indeed, each vertex has now a doubled number of ports and the additional label $\star$ is available.
So to be completely done with the task, we need $F'$ to be able to work not only with graphs generated by the encoding, but also with all the other graphs.

Let us classify the various cases.
Firstly, notice that the counterpart $\omega(G)$ of any graph $G$ is ``total'' in the following sense: all vertices have labels, all original ports have edges, and all edges have labels.
However, there also exist partial graphs with free ports and unlabeled vertices not targeted by $\omega$.
Secondly, $\omega(G)$ uses the additional ports strictly for the encoding of missing edges with loopback edges.
But, there are also graphs making arbitrary use of those ports and which are not ``coherent'' with respect to the encoding.
Figure~\ref{fig:graph-classification} illustrates the three identified classes: the middle graph is a coherent subgraph of the total graph on the left, and of an incoherent graph on the right.
Notice that incoherent edges can always be dropped away to get the largest coherent subgraph of any graph.
This is the case on the figure.

\begin{figure}[t]
\centering
\begin{tikzpicture}[every loop/.style={},scale=.8, every node/.style={scale=.8}]
\begin{scope}
\pnode{A}{-6,0}{blue!20}{$\star$}
\pnode{B}{-4,0}{green!20}{$\,\blacktriangleright$}
\draw (A) edge[out=180,in=90,loop,looseness=6] (A);
\draw (A) edge (B);
\draw (B) edge[out=0,in=270,loop,looseness=6] (B);
\end{scope}
\begin{scope}
\pnode{A}{-1,0}{blue!20}{$\star$}
\pnode{B}{1,0}{green!20}{}
\draw (A) edge[out=180,in=90,loop,looseness=6] (A);
\end{scope}
\begin{scope}
\pnode{A}{4,0}{blue!20}{$\star$}
\pnode{B}{6,0}{green!20}{}
\draw (A) edge[out=180,in=90,loop,looseness=6] (A);
\draw (A) edge[out=0,in=90, looseness=1.3] (B);
\end{scope}
\end{tikzpicture}
\caption{\label{fig:graph-classification}
Different classes of graphs: from left to right, a total graph, a coherent partial graph, an incoherent graph.}
\end{figure}
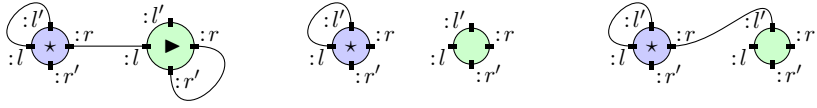

\subsubsection{An Extended Set of Disks.}

In order to design a local rule $f'$ of the monotonic CGD $F'$, we need to handle disks after encoding.
The encoding function $\omega$ targeting a bigger set of graphs, this also holds for the encoding function of disks, and the behavior of $f'$ needs to handle this bigger set.

Let us first identify the encoded counterparts of the original disks.
They are called ``total'' disks and correspond to disks of some $\omega(G)$.
For those disks, the injectivity of $\omega$ allows $f'$ to simply retrieve the corresponding disk of the original graph $G$ and invoke the original local rule $f$ on it.

An arbitrary disk $D$ may not be ``total'' and exhibit some partiality (free ports, unlabeled vertices or edges).
Since $f'$ is required to be monotonic, it needs to output a subgraph of the original local rule output.
More precisely, we need to have $f'(D) \subseteq f'(D')$ for any total disk $D'$ with $D \subseteq D'$.
The easiest way to do that is by outputting the empty graph: $f'(D) = \emptygraph$. (This solution corresponds to the so-called \emph{coarse extension} proposed for CAs in~\cite{DBLP:journals/nc/FernandezMS23}; \emph{finer extensions} are also considered there.)

Last, but not least, an arbitrary disk $D$ may make an incoherent use of the additional ports with respect to $\omega$.
In that case, all incoherent information may be ignored by considering the largest coherent subdisk $D' \subseteq D$.
The behavior of $f'$ on $D$ is then aligned with its behavior on $D'$: $f'(D) = f'(D')$.

\subsubsection{A Larger Radius.}

The last parameter of $f'$ to tune is its radius.
It will of course depend on the radius of the local rule of $F$.
But it is worth noting that we are actually trying to build more information locally than the original local rule was trying to.
Indeed, given a vertex of the original global output, many local rule applications may contribute concurrently to its definition.
All of these contributions are consistent with each other of course, but it is possible for some local outputs to indicate some features of that vertex (label, edges) while others do not.
If even one of them puts such a label for instance, then there is a label in the global output.
It is only if none of them put a label that the global output will not have any label on it.
The same holds similarly for ports: a port is free in the global output if it is so on all local outputs.
Because the monotonic counterpart needs to specify locally if none of the local rule applications puts such a feature to a vertex, the radius of the monotonic local rule needs to be big enough to include all those local rule input disks.
It turns out that the required radius for $f'$ is $r' = 3 r + 2$, if $r$ is the radius of the original local rule $f$, as explained in Figure~\ref{fig:bigger-radius}.

\begin{figure}[!t]
\centering
\begin{tikzpicture}
\node (U) at (0,0) {$c$};
\node (V) at (2.5,0) {$v$};
\node (M) at (1.25,0) {};
\draw [|-|] (0,2) to node[above]{$r+1$} (1.25,2);
\draw [|-|] (1.25,2) to node[above]{$r+1$} (2.5,2);
\draw [|-|] (2.5,2) to node[above]{$r$} (3.5,2);
\draw  (U) circle (1.0);
\draw [, dashed] (U) circle (1.5);
\draw  (V) circle (1.0);
\draw [, dashed] (V) circle (1.5);
\draw [fill=black] (M) circle (0.125);
\end{tikzpicture}
\caption{\label{fig:bigger-radius}
We need to determine for each missing entity attached to a vertex or edge of $f(G^r_c)$, whether all other $f(G^r_v)$ agree to consider this entity as missing.
Property 2 of local rules (Definition~\ref{def:graph_local_rule}) tells us that only disks $G^r_v$ that intersect $G^r_c$ need to be checked.
The furthest such $v$ are at distance $2r+2$.
From there, we need to ask for radius $r$ (so radius $r'=3r+2$ from $c$) to include the entirety of $G^r_v$ (including its border at $r+1$ from $v$, and therefore $3r+3$ at most from $c$).}
\end{figure}
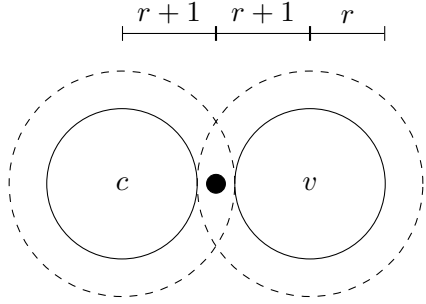

\subsection{Formal Definition of the Simulation}

Now that we have all the components of the solution, let us make them precise.

\subsubsection{The Encoding Function $\omega$.}

The encoding function aims at embedding the graphs of $\Gsdp$ into graphs where ports are doubled and labels are extended with an extra symbol.
Set $\Sigma' := \Sigma \uplus \{ \star \}$, $\Delta' := \Delta \uplus \{ \star \}$, and $\pi' = \pi \times \{ \orig, \shut \}$.
Ports $(a,\orig) \in \pi'$ are considered to be semantically the same as their counterpart $a \in \pi$ while ports $(a,\shut)$ are there for loopback edges.
For simplicity, let us define the short-hands $\G$ and $\G'$ for $\Gsdp$ and $\Graph{\Sigma'}{\Delta'}{\pi'}$ respectively for the remainder of this section.
Since we need to complete partial functions, we introduce the following notation:
for any partial function $f: X \rightharpoonup Y$ and total function $z: X \to Z$, we define the total function $f!z : X \to Y\cup Z$ by
$$
(f!z)(x) =
\begin{cases}
f(x) & \text{if } f(x) \text{ is defined,} \\
z(x) & \text{otherwise.} 
\end{cases}
$$
For simplicity, we write $f!\star$ for $f!(x \mapsto \star)$ where $x$ runs over $X$.
This allows us to define the encoding $\omega(G) \in \G'$ of any $G \in \G$ as follows.
\begin{align*}
V(\omega(G)) & := V(G) \\
\sigma(\omega(G)) & := \sigma(G)!\star \\
E(\omega(G)) & := \{\, \{ u\!:\!(i,\orig), \mathrm{trg}^\star_{G}(u\!:\!i) \} \mid u\!:\!i \in V(G)\!:\!\pi \,\} \\
\delta(\omega(G)) & (\{u\!:\!(i,a),v\!:\!(j,b)\})  := 
\begin{cases}
(\delta(G)!\star)(\{u\!:\!\orig,v\!:\!j\}) & \text{if } b = \orig \\
\star & \text{if } u\!:\!i = v\!:\!j
\end{cases}
\end{align*}
The definition of $E(\omega(G))$ is written to make clear that all original ports are indeed occupied by an edge.
The total function $\mathrm{trg}^\star_{G} : V(G)\times\pi \to V(G)\times\pi'$ is defined based on the partial function $\mathrm{trg}_{G} : V(G)\times\pi \rightharpoonup V(G)\times\pi$ as follows.
\begin{align*}
\mathrm{trg}_G&(u\!:\!i) := v\!:\!j \text{ iff } \{u\!:\!i,v\!:\!j\} \in E(G) \\
\mathrm{trg}^\star_{G} & := (t_\orig \circ\mathrm{trg}_G) ! t_\shut \text{ where } t_m(u\!:\!i) := u\!:\!(i,m)
\end{align*}
The definition of $\delta(\omega(G))$ deals with original edges for the first case, and with loopback edges for the second, which are the only two possibilities with respect to $\mathrm{trg}^\star_{G}$.

\subsubsection{The Disk Encoding Function $\omega_r$.}

Let us now define the shorthands $\D_r$ and $\D'_r$ for $\Dsdp^r$ and $\Disk{\Sigma'}{\Delta'}{\pi'}^r$ for any radius $r$.
We put the radius as a subscript to make readable the four combinations $\D_r$, $\D'_r$, $\D_{r'}$, and $\D'_{r'}$.
The function $\omega_r: \D_r \to \D'_r$ aims at encoding original disks as total disks.
It is defined as
\begin{equation}
\omega_r((H,v)) = \omega(H)^r_v. 
\end{equation}
The reason we apply $(-)^r_v : \G' \to \D'_r$ on the raw result of $\omega$ is because of Definition~\ref{def:graph_disk} of disks.
Indeed, a disk of radius $r$ contains by definition unlabeled vertices at distance $r+1$.
The behavior of $\omega$ (adding loopback edges on each unused port and labels on unlabeled vertices and edges) on vertices at distance $r+1$ makes the image not a disk in $\D'_r$.
This is again because disks in $\D'_r$ should have no label on $(r+1)$-vertices and no edge (nor loopback edge) between $(r+1)$-vertices.
More conceptually, $\omega_r$ is the unique function having the following commutation property, which expresses that $\omega_r$ is the disk counterpart of $\omega$.
\begin{prop}
\label{prop:commutation_omega_disk}
For any graph $G \in \G$, $\omega_r(G^r_v) = \omega(G)^r_v$.
\end{prop}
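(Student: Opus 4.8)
The plan is to unfold the definition of $\omega_r$ and reduce the statement to a componentwise comparison of two disks. Writing $G^r_v = (H,v)$ for the graph component $H$, the defining equation $\omega_r((H,v)) = \omega(H)^r_v$ turns the goal into the equality $\omega(H)^r_v = \omega(G)^r_v$. Both sides are disks of radius $r$ centered at $v$, so by Definition~\ref{def:graph_disk} it suffices to check that their vertex sets, vertex labelings, edge sets, and edge labelings coincide. Since $H \subseteq G$, everything will hinge on the fact that the disk $G^r_v$ already contains, unchanged, all the information about $G$ that lives within distance $r$ of $v$.

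First I would establish that all the balls around $v$ entering Definition~\ref{def:graph_disk} agree across the four graphs $G$, $H$, $\omega(G)$, $\omega(H)$. The key observation is that the loopback edges added by $\omega$ are self-loops (each is of the form $\{u\!:\!(i,\orig),u\!:\!(i,\shut)\}$, both ports sitting on the same vertex $u$), while the images of the original edges connect exactly the same pairs of distinct vertices as before; hence $\omega$ leaves every pairwise distance between distinct vertices untouched, giving $B_{\omega(K)}(v,s) = B_K(v,s)$ for any graph $K$ and radius $s$. Combined with the fact that $H = G^r_v$ reproduces the ball structure of $G$ up to radius $r+1$ — any shortest path from $v$ of length at most $r+1$ stays inside $H$, since each of its edges is incident to a vertex at distance at most $r$ and is therefore retained by the disk — this yields $B_{\omega(H)}(v,s) = B_{\omega(G)}(v,s)$ for all $s \le r+1$. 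In particular the vertex sets (the radius-$(r{+}1)$ balls) agree, and so do the radius-$r$ balls that govern labels and edge inclusion.

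Next I would compare labels and edges on this common active region. For a vertex $w$ at distance at most $r$ from $v$, the disk $H = G^r_v$ retains \emph{all} of $w$'s incident $G$-edges, so $\mathrm{trg}_H$ and $\mathrm{trg}_G$ agree on every port of $w$; consequently $\omega$ produces the very same edge (original or loopback) with the same $\delta$-label at each such port, and the completion operator $!\star$ commutes with the disk's restriction to $B_G(v,r)$ so that the vertex labelings match as well. Since the radius-$r$ disk keeps only those entities incident to $B_G(v,r)$, and every such entity is generated at a port of some vertex of $B_G(v,r)$, the four components of $\omega(H)^r_v$ and $\omega(G)^r_v$ coincide.

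The hard part will be handling the border vertices at distance exactly $r+1$, which is precisely where $\omega(H)$ and $\omega(G)$ genuinely disagree: a port whose $G$-edge leaves the disk appears free in $H$, so $\omega(H)$ installs a loopback there that $\omega(G)$ does not. The resolution, and the point demanding care, is that every such discrepant entity is incident only to an $(r{+}1)$-vertex and hence falls outside the edge-inclusion condition of Definition~\ref{def:graph_disk}, which admits an edge only when it touches $B_G(v,r)$. Making this matching airtight — checking that no discrepant edge or label sneaks into the radius-$r$ disk from either side — is the crux of the argument; the remaining verifications are routine bookkeeping against the definition of $\omega$.
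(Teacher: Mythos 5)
Your proof is correct and takes essentially the same route as the paper's: reduce via the definition of $\omega_r$ to showing $\omega(H)^r_v = \omega(G)^r_v$, observe that $\omega(H)$ and $\omega(G)$ coincide on every entity within distance $r$ of $v$, and check that all discrepancies (the loopback edges and $\star$-labels arising at the $(r+1)$-border, plus everything beyond) are exactly what the restriction $({-})^r_v$ removes. The paper's proof is a short sketch of precisely this; your extra observation that loopbacks are self-loops, so that $\omega$ preserves distances and the balls $B(v,s)$ agree across $G$, $H$, $\omega(G)$, $\omega(H)$ for $s \le r+1$, is a detail the paper leaves implicit but that is indeed needed for the two restrictions to act on the same region.
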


\begin{proof}
Take $G^r_v = (H,v)$.
After inlining the definition of $\omega_r$, we are left to show $\omega(H)^r_v = \omega(G)^r_v$.
For any entity (vertices, labels, edges) at distance at most $r$ of $v$, $\omega(H)$ and $\omega(G)$ coincide.
At distance $r+1$, things may differ for the special label $\star$ and loopback edges, and beyond distance $r+1$, $\omega(H)$ has no entities at all.
However, all these differences are exactly what is removed by $(-)^r_v$.
So, we get the expected equality.
\end{proof}

\subsubsection{The Coherent Subgraph Function $\coh$.}

\newcommand{\coh}{\mathrm{coh}}

As discussed informally, some graphs in $\G'$ may be ill-formed from the perspective of the encoding by using doubled ports arbitrarily.
We define the function $\coh : \G' \to \G'$ that removes the incoherencies.
\begin{align*}
V(\coh(G')) & := V(G') \hspace{1cm} \\
\sigma(\coh(G')) & := \sigma(G') \hspace{1cm} \\
E(\coh(G')) & := \{ \{u\!:\!(i,\orig), v\!:\!(j,b)\} \in E(G') \mid\; \\
  & \phantom{ := \{ \{} b = 1 \Rightarrow (v\!:\!j = u\!:\!i \;\wedge\; \delta(G')(\{u\!:\!(i,\orig), u\!:\!(i,1)\}) = \star ) \} \\
\delta(\coh(G')) & := \delta(G') \restriction E(\coh(G'))
\end{align*}
Since $\coh$ only removes incoherent edges, and since they are stable by inclusion, we trivially have that
\begin{lem}
\label{prop:coh-monotonic}
$\coh$ is monotonic.
\end{lem}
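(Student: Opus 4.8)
The plan is to prove that $\coh$ is monotonic by unfolding Definition~\ref{def:graph_inc} of the subgraph order componentwise. Since $\coh$ leaves the vertex set and the vertex-labeling untouched ($V(\coh(G')) = V(G')$ and $\sigma(\coh(G')) = \sigma(G')$), those two components are trivially preserved under inclusion: if $G' \subseteq H'$ then $V(G') \subseteq V(H')$ and $\sigma(G') \subseteq \sigma(H')$ immediately give the corresponding inclusions for the images. The entire content of the lemma therefore concentrates on the edge component, and the edge-label component will follow from it.

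First I would fix two graphs $G', H' \in \G'$ with $G' \subseteq H'$ and show $E(\coh(G')) \subseteq E(\coh(H'))$. Take an arbitrary edge $e \in E(\coh(G'))$. By the defining condition of $\coh$, we have $e = \{u\!:\!(i,\orig), v\!:\!(j,b)\} \in E(G')$ satisfying the coherence predicate, namely that $b = 1$ implies $v\!:\!j = u\!:\!i$ together with $\delta(G')(\{u\!:\!(i,\orig), u\!:\!(i,1)\}) = \star$. Since $E(G') \subseteq E(H')$, we have $e \in E(H')$; it then remains to verify that $e$ still satisfies the coherence predicate relative to $H'$. The only nontrivial part is the case $b=1$: the structural requirement $v\!:\!j = u\!:\!i$ is a condition on $e$ alone, independent of the ambient graph, so it transfers verbatim. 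The label requirement $\delta(H')(\{u\!:\!(i,\orig), u\!:\!(i,1)\}) = \star$ follows because $\delta(G') \subseteq \delta(H')$ means $\delta(H')$ agrees with $\delta(G')$ wherever the latter is defined, so the value $\star$ is preserved. Hence $e \in E(\coh(H'))$, establishing the edge inclusion.

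Finally, the edge-label inclusion $\delta(\coh(G')) \subseteq \delta(\coh(H'))$ is obtained by combining the edge inclusion just proved with the hypothesis $\delta(G') \subseteq \delta(H')$: since $\delta(\coh(G'))$ is the restriction $\delta(G') \restriction E(\coh(G'))$ and $\delta(\coh(H')) = \delta(H') \restriction E(\coh(H'))$, any input-output pair in the former lies in $\delta(G') \subseteq \delta(H')$ and its input edge lies in $E(\coh(G')) \subseteq E(\coh(H'))$, so the pair survives in the restriction defining $\delta(\coh(H'))$.

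The key conceptual point, already flagged in the lemma's preamble, is that incoherent edges are \emph{stable by inclusion}: whether an edge counts as incoherent depends only on local data carried by the edge and its associated loopback label, never on the absence of other structure. I do not expect a genuine obstacle here; the only place demanding care is confirming that the coherence predicate is truly monotone in the ambient graph, \ie that enlarging $G'$ to $H'$ can neither destroy the defining structural equality $v\!:\!j = u\!:\!i$ nor overwrite the loopback label $\star$ — both of which hold precisely because the subgraph order only \emph{adds} information and never contradicts existing labels.
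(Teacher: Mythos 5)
Your proof is correct and takes essentially the same route as the paper, which records the lemma as immediate on the grounds that $\mathrm{coh}$ only removes incoherent edges and coherence is stable under inclusion. Your componentwise unfolding — in particular the observation that when $b=1$ the coherence predicate concerns the edge's own loopback label, so $\delta(G') \subseteq \delta(H')$ preserves the value $\star$ and hence coherent edges remain coherent in the larger graph — is exactly the verification the paper leaves implicit in its one-line justification.
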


\subsubsection{The Monotonic Local Rule.}

\newcommand{\Image}{\ensuremath{\mathrm{Im}}}

The monotonic local rule $f'$ is defined in two stages.
The first stage is to take a disk in $\D'_{r'}$ and transform it into something that the original local rule can work with.
To do this, we remove any incoherencies with the function $\coh$, and if the result is total, we restrict it to the correct radius and retrieve its preimage by $\omega_r$ (which is possible because $\omega_r$ is injective).
The original local function can be called on this counterpart.
As discussed before, if the coherent sub-disk is not total, we simply output an empty graph.
The signature of this function is $\phi : \D'_{r'} \to \G$.
\begin{equation}
\phi((H',c)) := 
\left\{
\begin{array}{ll}
\emptygraph & \text{if } \coh(H') \notin \Image(\omega_{r'}) \\
f(\omega_r^{-1}(\coh(H')^r_c)) & \text{otherwise}
\end{array}
\right.
\end{equation}
Turning this result in $\G$ into its counterpart in $\G'$ is more complicated than simply invoking $\omega$ since one has to check that missing entities are really missing, as discussed in Figure~\ref{fig:bigger-radius}.
This is the purpose of the second stage leading to the definition of the monotonic local rule $f' : \D'_{r'} \to \G'$ which, thanks to its radius $r' = 3r+2$, is able to inspect all $r$-disks at distance $2r + 2$ from $c$.
\begin{equation}
f'((H',c)) = \omega\left({\textstyle \bigcup_{v \in H'^{2r+1}_c}} \phi((H',v))\right)^0_{\phi((H',c))}
\end{equation}
In this equation, the union of all local results that can contribute to entities attached to $\phi((H',c))$ is built.
In this way, they are all given a chance to say if what was missing in $\phi((H',c))$ is actually missing.
Finally $\omega$ is applied and its result restricted to the sole vertices of $\phi((H',c))$ and their edges, using the notation $(-)^0_X$ of Eq.~\eqref{eq:graph_restriction}.
Those vertices and edges are the only ones for which all the possibly contributing disks have been inspected.

\begin{prop}
\label{prop:fprime_monotonic}
$f'$ is a monotonic local rule.
\end{prop}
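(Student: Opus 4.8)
The plan is to verify separately that $f'$ is monotonic and that it satisfies the four conditions of Definition~\ref{def:graph_local_rule}, with the radius $r'=3r+2$ doing the real work in the last one. Conditions (1) and (3) I expect to be routine. For (1), each building block of $f'$ ($\omega$, $\coh$, the disk operations, and $\omega_r^{-1}$) commutes with renamings, since all of them only inspect structure and the names introduced by $\omega$ are exactly those already present; so the conjugate of $f'$ can be taken to be the conjugate of $f$. For (3), $V(f'((H',c)))$ consists of the at most $b$ vertices of $\phi((H',c))$ together with their neighbours, hence is bounded by $b(1+\card(\pi'))$.

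For monotonicity I would isolate the key fact that \emph{total disks are maximal}: if $\coh(H_1')\in\Image(\omega_{r'})$ and $(H_1',c)\subseteq(H_2',c)$ in $\D'_{r'}$, then $\coh(H_1')=\coh(H_2')$. The argument is that in a total disk every vertex within distance $r'$ of $c$ is labelled and has all its original ports occupied; since ports are non-intersecting, no coherent edge can be added incident to such a vertex, so no coherent shortcut can shorten any distance, and therefore $\coh(H_2')$ carries no vertex, edge or label beyond those of $\coh(H_1')$. Monotonicity then follows by a case split on $\coh(H_1')$. If it is not total, then $\phi((H_1',c))=\emptygraph$, the outer restriction is taken over no vertices, and $f'((H_1',c))=\emptygraph$, a subgraph of anything. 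If it is total, maximality gives $\coh(H_1')=\coh(H_2')$, so for every common centre $w$ one has $\phi((H_1',w))=\phi((H_2',w))$ and the target vertex set $V(\phi((H_i',c)))=:X$ is the same. The one subtlety is that $H_2'$ may carry \emph{incoherent} edges (using free loopback ports) that shortcut distances and enlarge the union range $H_2'^{2r+1}_c$; but any genuinely new centre $w$ is at decoded distance $>2r+2$ from $c$ in $\coh(H_2')=\coh(H_1')$, so by property (2) of $f$ the output $\phi((H_2',w))$ is vertex-disjoint from $X$ and cannot affect the restriction to $X$. Hence $f'((H_1',c))=f'((H_2',c))$, and in particular $f'((H_1',c))\subseteq f'((H_2',c))$.

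Condition (2) reduces to name-tracking: the output vertices of $f'((H',c))$ are produced by applying $f$ to subdisks of $H'$ and re-encoding with $\omega$, neither of which invents names outside the name-scope of $H'$, so vertex-disjoint inputs yield vertex-disjoint outputs directly by property (2) of $f$.

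The hard part will be condition (4), consistency, which is exactly where the radius $r'=3r+2$ earns its keep. Given $G'\in\mathcal{G}'$ and $u,v\in V(G')$, the plan is to introduce the single ``global'' graph $W=\bigcup_{w\in V(G')}\phi(G'^{r'}_w)$ and to prove $f'(G'^{r'}_u)=\omega(W)^0_{V(\phi(G'^{r'}_u))}$, and symmetrically for $v$. The obstacle is that $\omega$ is \emph{not} monotonic, so the inclusion of the local union into $W$ does not by itself give agreement; one must argue agreement \emph{on the relevant entities}. Here property (2) of $f$ forces every contribution defining a label, an edge, or a free port of some $x\in V(\phi(G'^{r'}_u))$ to come from a centre $w$ whose radius-$r$ disk meets the one at $u$, i.e. with decoded distance $d(u,w)\le 2r+2$; by Figure~\ref{fig:bigger-radius} all such $w$, together with their full radius-$r$ disks, lie inside $G'^{r'}_u$ and appear in its local union (using Proposition~\ref{prop:commutation_omega_disk} to identify the locally computed $\phi$ with $\phi(G'^{r'}_w)$). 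Thus the local union and $W$ already coincide on everything attached to $V(\phi(G'^{r'}_u))$, so applying $\omega$ and restricting yields the same graph. Consequently both $f'(G'^{r'}_u)$ and $f'(G'^{r'}_v)$ are subgraphs of the common graph $\omega(W)$ and are therefore consistent. The remaining care is to check that $W$ is itself a well-defined union, which follows from property (4) of $f$ once one observes that the decoding $\omega_r^{-1}\circ\coh$ is determined pointwise by the fixed graph $\coh(G')$, so overlapping total-cored disks decode to disks of a common graph, the empty contributions being consistent with everything.
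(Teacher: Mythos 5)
Your proof is correct and follows the same skeleton as the paper's: conjugates of $f'$ are inherited from those of $f$ because $\omega$, $\coh$, $\omega_r^{-1}$ and disk restriction all commute with renamings; property (2) follows from property (2) of $f$ applied to the decoded subdisks, whose vertex names stay inside the $H'_i$; boundedness comes from the bound of $f$; and monotonicity rests on exactly the paper's key observation that a total coherent part is maximal, forcing $\coh(H'_1)=\coh(H'_2)$. You genuinely diverge on two points, both to your advantage. First, the paper concludes monotonicity directly from $\coh(H'_1)=\coh(H'_2)$, silently passing over the fact that incoherent edges of $H'_2$ can shorten raw distances and enlarge the union range $H_2'^{2r+1}_c$; your observation that any genuinely new centre lies at decoded distance $>2r+2$ from $c$, so that by property (2) of $f$ its contribution is vertex-disjoint from $X$ and hence cannot create labels, port occupations, edges incident to $X$, or neighbours of $X$ surviving the $(-)^0_X$ restriction, closes this gap explicitly. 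Second, for consistency (4) the paper builds only the pairwise union $G=\omega_{r'}^{-1}(\coh(G'^{r'}_c))\cup\omega_{r'}^{-1}(\coh(G'^{r'}_d))$ and checks entities case by case (edges and labels either come from consistent $f$-outputs or are added by $\omega$), whereas you exhibit the single global graph $\omega(W)$ with $W=\bigcup_w\phi(G'^{r'}_w)$ and show each local output is a restriction of it, so consistency follows from the existence of a common upper bound; this is a cleaner packaging of the same radius argument of Figure~\ref{fig:bigger-radius}, and it sits at the same (informal) level of rigour as the paper on the disk-of-a-disk identifications that both proofs gloss. Two minor quibbles: your bound $b(1+\card(\pi'))$ is the accurate one where the paper loosely claims $b$ (the $(-)^0$ restriction does keep distance-$1$ neighbours); and in property (2) the phrase ``neither of which invents names outside the name-scope of $H'$'' is only correct for the \emph{inputs} of $f$ — the local rule $f$ may well output fresh names, which is precisely why property (2) of $f$, and not name-tracking alone, is needed; as you actually deploy it, the argument coincides with the paper's.
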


\begin{proof}
We first show that $f'$ is a local rule, that is, it satisfies the four properties of Definition~\ref{def:graph_local_rule}.

We need to show that for any renaming $R$, there is a conjugate $R'$ such that $f' \circ R = R' \circ f'$.
We show that the conjugate $R'$ of $R$ for $f$ works.
The result is obvious for $\coh(H') \notin \Image(\omega_{r'})$.
In the other case, take $E' := \coh(H')$:
\begin{align*}
f'(R((H',c)))
& = f'((R(H'), R(c))) \\
& = \omega\left(\bigcup_{v \in R(H')^{2r+1}_{R(c)}} f(\omega_r^{-1}(R(E')^r_v))\right)_{f(\omega_r^{-1}(R(E')^r_{R(c)}))} \\
& = \omega\left(\bigcup_{v \in R(H')^{2r+1}_{R(c)}} R'(f(\omega_r^{-1}(E'^r_{R'(v)})))\right)_{R'(f(\omega_r^{-1}(E'^r_c)))} \\
& = R'\left[\omega\left(\bigcup_{v \in H'^{2r+1}_{c}} f(\omega_r^{-1}(E'^r_{v}))\right)\right]_{R'(f(\omega_r^{-1}(E'^r_c)))} \\
& = R'\left[\omega\left(\bigcup_{v \in H'^{2r+1}_{c}} f(\omega_r^{-1}(E'^r_{v}))\right)_{f(\omega_r^{-1}(E'^r_c))}\right] \\
& = R'(f((H',c))).
\end{align*}

Consider a family of disks $\{(H'_i,c_i)\}_{i \in I} \subset \D'_{r'}$ such that $\bigcap_i H'_i = \emptygraph$.
We need to show that $\bigcap_i f'((H'_i,c_i)) = \emptygraph$.
We suppose that for all $i$, $E'_i := \coh(H'_i) \in \Image(\omega_{r'})$ (otherwise the result is trivial).
First notice that for any family $\{ v_i \in (H_i')^{2r+1}_{c_i} \}$, we have $\bigcap_i \omega^{-1}_r((E'_i)^r_{v_i}) = \emptygraph$, so $\bigcap_i f(\omega^{-1}_r((E'_i)^r_{v_i})) = \emptygraph$.
Suppose now some vertex $u$ in $\bigcap_i f'((H'_i,c_i))$.
So, for each $i$ there is some $v_i$ such that $u \in f(\omega^{-1}_r((E_i')^r_{v_i}))$ which is impossible.

We need a bound $b$ such that for all $(H',c) \in \D'_{r'}$, $\card(V(f'((H',c)))) \le b$.
We consider the bound $b$ given by $f$ and show that it works.
Indeed the last step of the computation of $f'$ is precisely a restriction to the vertices of $f(\omega^{-1}_r(\coh(H')^r_c))$.
So $\card(V(f'((H',c)))) \le \card(V(f(\omega^{-1}_r(\coh(H')^r_c)))) \le b$.

Consider $G' \in \G'$ and $c,d \in V(G')$.
We need $f'(G'^{r'}_c)$ and $f'(G'^{r'}_d)$ to be consistent.
Once again, we suppose that $\coh(G'^{r'}_c)\in \Image(\omega_{r'})$ and $\coh(G'^{r'}_d)\in \Image(\omega_{r'})$ (otherwise the result is trivial).
In order to use the consistency property of $f$, we build $G = \omega_r^{-1}(\coh(G'^{r'}_c)) \cup \omega_r^{-1}(\coh(G'^{r'}_d))$.
So for any pair of $r$-disks $(D_1,D_2)$ of $G$, $f(D_1)$ and $f(D_2)$ are consistent.
In other words, all the original $r$-disks involved in $f'(G'^{r'}_c)$ and $f'(G'^{r'}_d)$ are consistent with each other.
It is particularly true for $\phi(G'^{r'}_c)$ and $\phi(G'^{r'}_d)$.
Edges of $f'(G'^{r'}_c)$ and $f'(G'^{r'}_d)$ are consistent because either they come from $f(G^r_c)$ and $f(G^r_d)$ which are consistent, or they are loopback edges added by $\omega$.
Labels are also consistent for the same kind of reason, making $f'(G'^{r'}_c)$ and $f'(G'^{r'}_d)$ consistent.

We finally show that $f'$ is monotonic.
Take two disks $D'_1 = (H'_1, c)$ and $D'_2 = (H'_2, c)$ such that $D'_1 \subseteq D'_2$.
As $f'$ is ultimately defined by cases, let us consider first the case where $\coh(H'_1) \notin \Image(\omega_{r'})$.
In this case $f'(D'_1) = \emptygraph$ so we necessarily have $f'(D'_1) \subseteq f'(D'_2)$.
We are left with the case $\coh(H'_1) \in \Image(\omega_{r'})$, meaning that $\coh(H'_1)$ is a total disk.
Since $\coh$ is monotonic by Lemma~\ref{prop:coh-monotonic}, $\coh(H'_1) \subseteq \coh(H'_2)$.
And as a total disk, nothing but incoherences can be added to $\coh(H'_1)$.
So $\coh(H'_1) = \coh(H'_2)$ and therefore $f'(D'_1) = f'(D'_2)$.
So the order is preserved in all cases, and $f'$ is monotonic.
\end{proof}

\subsubsection{The Monotonic Simulation.}

We finally have the wanted CGD $F'$ of local rule $f'$:
$$
F'(G') := \bigcup_{v \in V(G')} f'({G'}^{r'}_v).
$$
It remains then to show that $F'$ is indeed a monotonic simulation of $F$, which is achieved with the two next propositions.

\begin{prop}
$F'$ is monotonic.
\end{prop}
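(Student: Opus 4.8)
The goal is to show that $F'$ is monotonic.

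My plan is to reduce this claim to the fact that $F'$ arises as a left Kan extension. Indeed, Proposition~\ref{prop:fprime_monotonic} just established that $f'$ is a \emph{monotonic} local rule. Since $F'$ is the CGD generated by $f'$ via $F'(G') = \bigcup_{v \in V(G')} f'(G'^{r'}_v)$, Proposition~\ref{prop:mcgd-are-ke} applies directly: a CGD with a monotonic local rule is the pointwise left Kan extension of that local rule along the pointer-dropping function $i$. By the remark at the end of Section~\ref{sec:poset-kan}, every pointwise left Kan extension is monotonic by construction. So the one-line argument is: $f'$ is monotonic (Proposition~\ref{prop:fprime_monotonic}), hence $F'$ is a left Kan extension (Proposition~\ref{prop:mcgd-are-ke}), hence $F'$ is monotonic.

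Alternatively, and perhaps more transparently, I would invoke Proposition~\ref{prop:cgd_monotonic_iff_local_rule_monotonic}, which states that a CGD is monotonic if and only if it admits a monotonic local rule. Since $F'$ is a CGD (its local rule $f'$ was verified to satisfy the four properties of Definition~\ref{def:graph_local_rule}) and $f'$ is monotonic, the forward direction of that equivalence immediately yields that $F'$ is monotonic. This framing avoids even mentioning Kan extensions and makes the dependency on the previously-proven machinery maximally direct.

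I expect there to be essentially no obstacle here, since all the real work was done in Proposition~\ref{prop:fprime_monotonic} where monotonicity of the \emph{local rule} was painstakingly checked by case analysis on whether $\coh(H'_1)$ lands in $\Image(\omega_{r'})$. The only thing to be careful about is ensuring that $F'$ genuinely qualifies as a CGD in the sense of Definition~\ref{def:graph_cgd} so that the cited propositions apply — but this is immediate because $f'$ was shown to be a local rule of radius $r' = 3r+2$ and $F'$ is defined by the canonical union formula. Thus the proof is a short citation: by Proposition~\ref{prop:fprime_monotonic}, $f'$ is a monotonic local rule of $F'$, so $F'$ is monotonic by Proposition~\ref{prop:cgd_monotonic_iff_local_rule_monotonic} (equivalently, by Proposition~\ref{prop:mcgd-are-ke}, $F'$ is a left Kan extension and hence monotonic).
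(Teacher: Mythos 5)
Your proposal is correct and matches the paper's own proof, which is exactly the one-line citation you give: $f'$ is monotonic by Proposition~\ref{prop:fprime_monotonic}, hence $F'$ is monotonic by Proposition~\ref{prop:cgd_monotonic_iff_local_rule_monotonic}. Your alternative phrasing via Proposition~\ref{prop:mcgd-are-ke} is an equivalent detour through the same machinery, so there is nothing substantively different here.
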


\begin{proof}
Since $f'$ is monotonic by Proposition~\ref{prop:fprime_monotonic}, $F'$ is monotonic as well by Proposition~\ref{prop:cgd_monotonic_iff_local_rule_monotonic}.
\end{proof}

\begin{prop}
$F'$ simulates $F$ via the $\omega$ encoding, \ie, $F' \circ \omega = \omega \circ F$.
\end{prop}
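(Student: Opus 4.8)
The plan is to prove $F' \circ \omega = \omega \circ F$ pointwise in the argument graph, by evaluating $F'(\omega(G))$ disk by disk and recognizing each local contribution as a piece of $\omega(F(G))$. Since $V(\omega(G)) = V(G)$, I would start from $F'(\omega(G)) = \bigcup_{c \in V(G)} f'\big(\omega(G)^{r'}_c\big)$ and first analyze a single term $f'(\omega(G)^{r'}_c)$. The opening move is to land in the non-trivial branch of $\phi$: by Proposition~\ref{prop:commutation_omega_disk} we have $\omega(G)^{r'}_c = \omega_{r'}(G^{r'}_c)$, so this disk is total and coherent, giving $\coh(\omega(G)^{r'}_c) = \omega(G)^{r'}_c \in \Image(\omega_{r'})$.

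Writing $H'$ for the graph component of $\omega(G)^{r'}_c$, the next step is to compute $\phi$ at every recentering appearing in the definition of $f'$. For each $v$ in the disk $H'^{2r+1}_c$, i.e. each $v$ within distance $2r+2$ of $c$, I claim $\phi((H',v)) = f(G^r_v)$. This is exactly where the radius $r' = 3r+2$ is needed: the full disk $G^r_v$, including its $(r+1)$-boundary, reaches distance at most $3r+3 = r'+1$ from $c$, hence its vertices lie in $H'$ and the edges and labels along the relevant shortest paths from $v$ are retained by the conventions of Definition~\ref{def:graph_disk}; therefore $\coh(H')^r_v = \omega(G)^r_v = \omega_r(G^r_v)$, and applying $\omega_r^{-1}$ (injective) then $f$ returns $f(G^r_v)$. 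Substituting into the definition of $f'$ yields $f'\big(\omega(G)^{r'}_c\big) = \omega\big(\textstyle\bigcup_v f(G^r_v)\big)^0_{f(G^r_c)}$, where $v$ ranges over the vertices within distance $2r+2$ of $c$.

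The heart of the argument, which I expect to be the main obstacle, is showing that this bounded expression equals $\omega(F(G))^0_{f(G^r_c)}$. The difficulty is that $\omega$ does not commute with unions: whether a port receives a loopback edge or a vertex receives a $\star$ label depends on the entire closed star of that vertex, so I must argue that no disk outside the bounded range can contribute a label or an edge at a vertex $w \in V(f(G^r_c))$. This is precisely property~(2) of local rules (Definition~\ref{def:graph_local_rule}): if some $f(G^r_v)$ carries a label or an edge at $w$, then $f(G^r_v)$ and $f(G^r_c)$ share $w$, so by contraposition $G^r_v$ and $G^r_c$ intersect, forcing $d(c,v) \le (r+1)+(r+1) = 2r+2$; that contribution already lies in the bounded union. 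Hence the bounded union and $F(G)$ agree on the label and on all incident edges of every such $w$, so $\omega$ makes identical decisions there, and the two radius-$0$ restrictions coincide.

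Finally I would reassemble the terms. Taking the union over all $c \in V(G)$ and using $\bigcup_c V(f(G^r_c)) = V(F(G)) = V(\omega(F(G)))$, the union of the restrictions $\omega(F(G))^0_{f(G^r_c)}$ recovers all of $\omega(F(G))$, since every vertex, every edge, and every loopback edge of $\omega(F(G))$ is incident to some vertex of $F(G)$ and is thus captured by a radius-$0$ neighborhood. This yields $F'(\omega(G)) = \omega(F(G))$. Besides the locality argument of the third paragraph, the only other delicate point is the boundary bookkeeping in the second step, which must be matched carefully against the $(r+1)$-vertex conventions of Definition~\ref{def:graph_disk}; it works out exactly because the radius-$r'$ disk reaches distance $r'+1 = 3r+3$.
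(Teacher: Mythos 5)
Your proposal is correct and takes essentially the same route as the paper's proof: the same simplification of each term $f'(\omega(G)^{r'}_c)$ to $\omega\bigl(\bigcup_{v} f(G^r_v)\bigr)^0_{f(G^r_c)}$ via Proposition~\ref{prop:commutation_omega_disk} and the $r' = 3r+2$ radius bookkeeping, and the same key use of property~(2) of local rules (as in Figure~\ref{fig:bigger-radius}) to confine every contribution at a vertex of $f(G^r_c)$ to disks with $d(c,v) \le 2r+2$. The only difference is organizational: where the paper reduces the task to the single inclusion $\omega(F(G)) \subseteq F'(\omega(G))$ (using that $\omega(F(G))$ is total and $F'(\omega(G))$ coherent) and verifies it entity by entity, you prove the per-term equality $f'(\omega(G)^{r'}_c) = \omega(F(G))^0_{f(G^r_c)}$ directly through the closed-star agreement and then reassemble by union.
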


\begin{proof}
Suppose $G' = \omega(G)$ for some $G \in \G$.
Since all involved disks in $F'(\omega(G))$ are total and thanks to Proposition~\ref{prop:commutation_omega_disk}, the expression of $F'(\omega(G))$ simplifies drastically to
\begin{equation}
\label{eq:prop-simulation-Fprime}
F'(\omega(G)) = \bigcup_{c \in V(G)} \omega\left({\textstyle \bigcup_{v \in G^{2r+1}_c}} f(G^r_v) \right)^0_{f(G^r_c)}.
\end{equation}
Clearly, $F'(\omega(G))$ does not exhibit any incoherencies and $\omega(F(G))$ is total, so it is enough to show that $\omega(F(G)) \subseteq F'(\omega(G))$ to have the equality.
This can be done entity by entity.
Take a vertex $u \in \omega(F(G))$.
It comes from some $f(G^r_c)$.
So it belongs to the inner union of Eq.~\eqref{eq:prop-simulation-Fprime}.
It is preserved by $\omega$ then by $(-)^0_{f(G^r_c)}$, so it belongs to $F'(\omega(G))$.
Consider now its label $\ell := \delta(\omega(F(G)))(u)$.
If $\ell \ne \star$, there is some $v \in G^{2r+1}_c$ with $\ell = \delta(f(G^r_v))(u)$.
So the inner union of Eq.~\eqref{eq:prop-simulation-Fprime} labels $u$ by $\ell$.
The label is preserved by $\omega$ then by $(-)^0_{f(G^r_c)}$ since $u \in f(G^r_c)$.
So $\delta(F'(\omega(G)))(u) = \ell$ as well.
If $\ell = \star$, this means that none of the $f(G^r_v)$ put a label on $u$, so $u$ is unlabeled in the inner union of Eq.~\eqref{eq:prop-simulation-Fprime}.
By definition, $\omega$ completes the labeling by $\star$ at $u$, and $(-)^0_{f(G^r_c)}$ preserves this label.
So $\delta(F'(\omega(G)))(u) = \star$ as well.
The proof continues similarly for edges and their labels, with an additional care for dealing with loopback edges.
\end{proof}

\section{Renaming-Invariance Categorically}
\label{sec:cgd_lan}

In Sections~\ref{sec:cgd_post_lan} and~\ref{sec:universal-mono-cgd}, we successfully establish the wanted connection between the fact that CGDs are particular cases of synchronous local dynamical systems with absolute positioning and the fact that they can be presented as posetal GTs, \ie, posetal left Kan extensions.
We now want to incorporate the renaming-invariance property of CGDs.
This property forbids CGDs to depend on the absolute positioning.
While it is treated above as a property on top of the monotonicity of the dynamics, we now show how it can be integrated in the very algebraic structure of the graphs.
This is done by turning the poset of graphs into a category of graphs where renamed graphs are considered isomorphic.
This new formulation permits, in particular, to make explicit the finite nature of the local rule by expressing directly that there are finitely many disks, but that each of them can possibly be found in many places in a given graph.

Since we consider categories instead of posets, the goal is now to show that CDGs are categorical left Kan extensions.
Let us recall the definition of the later and fix some notations.
\begin{defi}[Pointwise Left Kan Extension]
\label{def:general_pointwise_lan}
Given three categories $\mathbf{A}$, $\mathbf{B}$ and $\mathbf{C}$, and two functors $i: \mathbf{A} \to \mathbf{B}$, $f: \mathbf{A} \to \mathbf{C}$, a \emph{pointwise left Kan extension of $f$ along $i$} is any functor $F:\mathbf{B} \to \mathbf{C}$ such that
\begin{equation}
\label{eq:gen_lan}
F \cong \Colim (f \circ \Proj_{i/-})
\end{equation}
when the right hand side is well defined (some colimits might not exist).
\end{defi}
Here, $\Colim D$ designates the colimit of the functor $D$ and is the counterpart of the supremum of Eq.~\eqref{eq:pointwise}.
Secondly, the notation $i/b$ stands for the comma category of the functor $i$ over an object $b \in \mathbf{B}$ and $i/{-}: \mathbf{B} \to \mathbf{Cat}$ is the functor associated with this construction.
It is the counterpart of the (partial ordered) set of $\{a \in A \mid i(a) \subseteq b\}$ of Eq.~\eqref{eq:pointwise}.
However, objects (resp. morphisms) of $i/b$ are pairs whose first components are objects (resp. morphisms) of $\mathbf{A}$, and second components is a witness of relation with $b$ through $i$.
The functor $\Proj_{i/b}: i/b \to \mathbf{A}$ thus extracts the first component of those pairs.

For the rest of the section, we fix once and for all a \textbf{monotonic} CGD $F$ with \textbf{monotonic} local rule $f$ of radius $r$.
Our strategy is then as follows.
\begin{enumerate}

\item Extending the poset $(\Gsdp,\subseteq)$ to a category $\Csdp$ where graphs related by renaming are isomorphic.
This gives more morphisms to work with on top of the original poset morphisms.

\item Extending the monotonic function $F: \Gsdp \to \Gsdp$ to a functor $\widetilde{F}: \Csdp \to \Csdp$, which means specifying how the additional morphisms are dealt with through renaming-invariance.

\item Extending the poset of disk $\Dsdp^r$ to a category $\CDsdp^r$ to relate disks via renamings.

\item Extending the monotonic local rule function $f: \Dsdp^r \to \Gsdp$ (resp. the pointer dropping function $i: \Dsdp^r \to \Gsdp$) to a local rule functor $\widetilde{f}: \CDsdp^r \to \Csdp$ (resp. a pointer dropping functor $\widetilde{i}: \CDsdp^r \to \Csdp$).

\item Showing that $\widetilde{F} \cong \Colim \widetilde{f} \circ \Proj_{\widetilde{i}/{-}}$.
\end{enumerate}

\subsection{The Category of Graphs with Renamings}

Our first task consists in extending the previously used poset $(\Gsdp, \subseteq)$ into a category allowing to compare graphs by renaming.
Adding such renamings to the poset calls for a closure of the collection of \emph{graph morphisms} (inclusions and renamings) by composition.
This means that a morphism is in fact a (finite) composition of inclusions and renamings.
One can convince oneself that such a composition encodes the same information as a single renaming followed by an inclusion, leading to the following definition.

\begin{defi}[(Global) Subgraph Isomorphisms]
\label{def:graph_morphism}
Let $G$ and $H$ be graphs of $\Gsdp$.
A \emph{(global) subgraph isomorphism} $m: G \to H$ is the data of a renaming denoted $|m|: \V \to \V$ such that
$
|m|(G) \subseteq H
$.
\end{defi}

\begin{defi}
\label{def:graph_cat}
We define the category $\Csdp$ whose objects are graphs of $\Gsdp$ and morphisms are subgraph isomorphisms.
The identity $\id_G: G \to G$ is given by the identity renaming (\ie, $|\id_G|:=\id_\V$) and the composition by composition of the underlying renamings (\ie, $|n \circ m| := |n| \circ |m|$).
\end{defi}

\begin{prop}
\label{prop:graph_cat}
$\Csdp$ is indeed a category.
\end{prop}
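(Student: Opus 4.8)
The plan is to verify the three category axioms directly from Definition~\ref{def:graph_cat}: that composition is well defined (closed within $\Csdp$), that composition is associative, and that the identity morphisms act as units. The key observation throughout is that all the structure is inherited from the underlying renamings via the map $m \mapsto |m|$, which transports everything back to composition of bijections on $\V$, where the axioms are automatic.

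First I would show that composition is well defined. Given $m: G \to H$ and $n: H \to K$, the candidate composite $n \circ m$ carries the underlying renaming $|n \circ m| = |n| \circ |m|$, which is a bijection $\V \to \V$ since the composite of two bijections is a bijection. It remains to check the defining inequality $|n \circ m|(G) \subseteq K$. From $m: G \to H$ we have $|m|(G) \subseteq H$, and applying the renaming $|n|$ to both sides gives $|n|(|m|(G)) \subseteq |n|(H)$, using that renamings are monotonic for the subgraph order (Remark~\ref{rem:renamings-monotonic}). Since $|n|(H) \subseteq K$ by $n: H \to K$, transitivity of $\subseteq$ yields $|n|(|m|(G)) = (|n| \circ |m|)(G) \subseteq K$, which is exactly the required condition for $n \circ m$ to be a morphism $G \to K$.

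Next I would check the unit and associativity laws. For the identity $\id_G$ we have $|\id_G| = \id_\V$, so $|\id_G|(G) = G \subseteq G$ confirms it is a genuine morphism; and for any $m: G \to H$, $|m \circ \id_G| = |m| \circ \id_\V = |m| = \id_\V \circ |m| = |\id_H \circ m|$, so both composites have the same underlying renaming and hence are equal as morphisms. Associativity follows the same way: for composable $m, n, p$, both $(p \circ n) \circ m$ and $p \circ (n \circ m)$ have underlying renaming $|p| \circ |n| \circ |m|$, and since a morphism is determined by its underlying renaming together with its source and target, the two composites coincide.

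The only genuinely load-bearing point, and the one I would be most careful about, is that a subgraph isomorphism is fully determined by its underlying renaming (for fixed source and target), so that equality of morphisms reduces to equality of bijections on $\V$; once this is granted, associativity and the unit laws are immediate transfers of the corresponding facts for composition in the symmetric group of $\V$. The subtlety is that the inequality $|m|(G) \subseteq H$ is a \emph{property} attached to the data $|m|$ rather than extra data, so two morphisms with the same underlying renaming between the same objects are equal on the nose. I expect the main obstacle to be purely bookkeeping — ensuring the well-definedness inequality survives composition, which hinges precisely on the monotonicity of renamings recorded in Remark~\ref{rem:renamings-monotonic}.
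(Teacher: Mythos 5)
Your proof is correct and follows essentially the same route as the paper's: closure of composition via monotonicity of renamings (Remark~\ref{rem:renamings-monotonic}) and transitivity of $\subseteq$, with identities and associativity inherited from composition of bijections on $\V$. Your explicit remark that a morphism is determined by its underlying renaming (the inclusion being a property, not data) is only implicit in the paper, but it is the same argument.
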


\begin{proof}
Let $m: G \to H$ and $n: H \to K$ be two subgraph isomorphisms.
We check that $|n| \circ |m|$ defines a subgraph isomorphism from $G$ to $K$.
We have $|m|(G) \subseteq H$ and $|n|(H) \subseteq K$, so $(|n| \circ |m|)(G) = |n|(|m|(G)) \subseteq |n|(H) \subseteq K$ as required, where we use Remark~\ref{rem:renamings-monotonic} for the first inclusion.
We easily check that the identity renaming $\id_\V$ defines a subgraph isomorphism for any graph $G$ ($|\id_G|(G) = \id_\V(G) = G \subseteq G$), which is neutral for the composition of renamings.
Finally, associativity of the morphism composition is inherited from the associativity of renaming composition.
\end{proof}

In the following, we say morphism for subgraph isomorphism.

\begin{rem}
\label{rem:adjoint_emptyfunc_pipe}
The notation $|{-}|$ defines in fact a faithful functor from $\Csdp$ to $\Sym_\V$, the symmetric group of $\V$ seen as a category with a single object. 
Interestingly, the empty graph $\emptygraph$ is the unique graph accepting all renamings as endomorphisms.
This provides us with a left adjoint $\emptygraph_{-}$ to $|{-}|$ which maps each renaming $R \in \Sym_\V$ to the corresponding endomorphism $\emptygraph_R: \emptygraph \to \emptygraph$, \ie, $|\emptygraph_R| = R$.
\end{rem}

\begin{prop}
\label{prop:forms_of_iso}
Isomorphisms in $\Csdp$ are of the form $m: G \to |m|(G)$, and their inverses are $m^{-1}: |m|(G) \to G$ such that $|m^{-1}| = |m|^{-1}$.
\end{prop}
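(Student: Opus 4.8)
The plan is to characterize the isomorphisms of $\Csdp$ by showing that a morphism $m: G \to H$ is invertible precisely when the underlying renaming $|m|$ maps $G$ exactly onto $H$, rather than merely into it. First I would recall that by Definition~\ref{def:graph_morphism} any morphism $m: G \to H$ carries a renaming $|m|$ with $|m|(G) \subseteq H$, and that the faithful functor $|{-}|$ of Remark~\ref{rem:adjoint_emptyfunc_pipe} sends composition to composition of renamings. Since $|{-}|$ is faithful, equality of morphisms can be tested by equality of their underlying renamings together with matching source and target, which is the lever I would use throughout.

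The core argument proceeds by double implication. For the forward direction, suppose $m: G \to H$ is an isomorphism with inverse $n: H \to G$. Then $n \circ m = \id_G$ and $m \circ n = \id_H$, so applying $|{-}|$ gives $|n| \circ |m| = \id_\V$ and $|m| \circ |n| = \id_\V$; hence $|n| = |m|^{-1}$ as renamings. From $|m|(G) \subseteq H$ and $|n|(H) = |m|^{-1}(H) \subseteq G$, applying the monotonic renaming $|m|$ to the second inclusion (using Remark~\ref{rem:renamings-monotonic}) yields $H = |m|(|m|^{-1}(H)) \subseteq |m|(G)$, so combined with $|m|(G) \subseteq H$ we obtain $H = |m|(G)$. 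This forces $m$ to have the stated form $m: G \to |m|(G)$. Conversely, given any $m: G \to |m|(G)$, I would define $m^{-1}$ to be the subgraph isomorphism $|m|(G) \to G$ whose underlying renaming is $|m|^{-1}$; this is well-formed because $|m|^{-1}(|m|(G)) = G \subseteq G$ satisfies Definition~\ref{def:graph_morphism}. Checking $m^{-1} \circ m = \id_G$ and $m \circ m^{-1} = \id_{|m|(G)}$ reduces, via the faithfulness of $|{-}|$, to the renaming identities $|m|^{-1}\circ|m| = \id_\V$ and $|m|\circ|m|^{-1} = \id_\V$, which hold since $|m|$ is a bijection.

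The step I expect to require the most care is the forward direction's inclusion-to-equality upgrade: one must not conflate ``$|m|$ embeds $G$ into $H$'' with ``$H$ equals the image $|m|(G)$.'' The subtlety is that a non-isomorphism can have an invertible underlying renaming while still being a proper inclusion (for instance the inclusion of a nonempty graph into a strictly larger one composed with a renaming), so the existence of a two-sided inverse \emph{in the category} is genuinely stronger than invertibility of $|m|$ alone. The argument above handles this precisely by exploiting that the categorical inverse $n$ must itself be a subgraph isomorphism, which constrains $|m|^{-1}(H)$ to sit inside $G$ and thereby pins down $H = |m|(G)$. Everything else is routine bookkeeping through the faithful functor $|{-}|$.
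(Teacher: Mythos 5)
Your proof is correct and follows essentially the same route as the paper's: you derive $|m^{-1}| = |m|^{-1}$ from the functoriality of $|{-}|$, and then pin down $H = |m|(G)$ by double inclusion, combining the defining inclusions of $m$ and its inverse with the monotonicity of renamings (Remark~\ref{rem:renamings-monotonic}). Your additional converse direction---that every morphism of the form $m: G \to |m|(G)$ is invertible, verified via faithfulness of $|{-}|$---is a correct but routine supplement that the paper leaves implicit.
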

\begin{proof}
We have $|m^{-1}| = |m|^{-1}$ by functoriality of $|{-}|$ (Remark~\ref{rem:adjoint_emptyfunc_pipe}).
Suppose now that $m: G \to H$ an isomorphism and its inverse $m^{-1}: H \to G$.
Let us prove that $H = |m|(G)$ by double-inclusion.
For a first direction, by Definition~\ref{def:graph_morphism}, the morphism $m: G \to H$ tells us that $|m|(G) \subseteq H$.
For the other direction, the morphism $m^{-1} : H \to G$ tells us that $|m^{-1}|(H) \subseteq G$ which rewrites into
$|m|^{-1}(H) \subseteq G$ then the wanted $H \subseteq |m|(G)$ by Remark~\ref{rem:renamings-monotonic}.
\end{proof}
In fact, each renaming $R$ induces for each graph $G$ an isomorphism, written $G_R: G \to R(G)$, such that $|G_R| = R$.
Notice that this notation complies with the definition of functor $\emptygraph_{-}$ of Remark~\ref{rem:adjoint_emptyfunc_pipe}.
However $G_{-}$ is not a functor when $G \ne \emptygraph$.

We end by formally observing how $\Csdp$ embeds $(\Gsdp,\subseteq)$ as we expected.

\begin{defi}[Embedding Functor]
\label{def:embedding-functor}
The \emph{embedding functor} $\U: (\Gsdp,\subseteq) \to \Csdp$ is the functor acting as the identity on objects and where $\U(G \subseteq H): G \to H$ is such that $|\U(G \subseteq H)| := \id_\V$.
\end{defi}

\begin{prop}
$\U$ is indeed an embedding functor.
\end{prop}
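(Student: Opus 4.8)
The plan is to verify three things in turn: that $\U$ is well defined on morphisms, that it respects identities and composition (so that it is genuinely a functor), and that it is an embedding in the sense of being injective on objects and faithful. None of these steps requires real work; the content is entirely bookkeeping around Definition~\ref{def:embedding-functor} together with the fact that $(\Gsdp,\subseteq)$ is a thin category.

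First I would check well-definedness on morphisms. Given an inclusion $G \subseteq H$ in the poset, the prescribed image $\U(G \subseteq H): G \to H$ has underlying renaming $|\U(G \subseteq H)| := \id_\V$. By Definition~\ref{def:graph_morphism}, for this datum to be a genuine subgraph isomorphism one needs $\id_\V(G) \subseteq H$; but $\id_\V(G) = G$, so this reduces exactly to the hypothesis $G \subseteq H$. Hence $\U$ sends each poset morphism to a legitimate morphism of $\Csdp$ with the correct source and target.

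Next I would establish functoriality using Definition~\ref{def:graph_cat}. For identities, $\U(\id_G) = \U(G \subseteq G)$ has underlying renaming $\id_\V$, which is precisely $|\id_G|$, so $\U(\id_G) = \id_G$. For composition, given $G \subseteq H \subseteq K$ the composite in the poset is $G \subseteq K$, and $\U(G \subseteq K)$ carries underlying renaming $\id_\V$; on the other side, the composite in $\Csdp$ carries underlying renaming $|\U(H \subseteq K)| \circ |\U(G \subseteq H)| = \id_\V \circ \id_\V = \id_\V$. The two agree, so $\U$ preserves composition.

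Finally I would argue the embedding property. Injectivity on objects is immediate, since $\U$ acts as the identity on objects. For faithfulness, observe that $(\Gsdp,\subseteq)$ is a poset, hence a thin category: between any two graphs there is at most one inclusion, so each source hom-set has at most one element and any map out of it is automatically injective. Thus $\U$ is faithful. I expect the only point deserving care is to resist demanding \emph{fullness}: $\U$ is deliberately not full, because $\Csdp$ carries the additional renaming morphisms absent from the poset, and ``embedding'' here means exactly injective-on-objects together with faithful.
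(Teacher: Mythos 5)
Your proof is correct and follows exactly the route the paper takes, merely spelling out the details the paper dismisses as trivial: functoriality is a routine check on the underlying renamings, injectivity on objects holds because $\U$ is the identity there, and faithfulness follows because the domain $(\Gsdp,\subseteq)$ is posetal, hence thin. Your added verification that $\U(G \subseteq H)$ is a well-defined morphism of $\Csdp$, and your remark that fullness is deliberately not claimed, are both accurate refinements of the same argument.
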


\begin{proof}
It is trivial to show that $\U$ is a functor. 
It is also trivially injective on objects.
Finally, $\U$ is faithful since its domain is posetal.
\end{proof}
As expected, any morphism $m:G \to H$ can be decomposed into a renaming followed by an inclusion: $m = \U(|m|(G) \subseteq H) \circ (G_{|m|}: G \to |m|(G))$.

\subsection{CGDs as Endofunctors}

Now that we have extended the poset $\Gsdp$ to a category $\Csdp$, our next objective is to promote the monotonic CGD $F$ on $\Gsdp$ to an endofunctor $\widetilde{F}$ on $\Csdp$.
Particularly, we want $\widetilde{F}(G) = F(G)$ for all $G$, and $\widetilde{F}$ to reflect the monotonicity of $F$ (\ie, to map inclusions to inclusions).
Those properties are captured by the following commutation.
\begin{equation}
\label{eq:ftilde_commut}
\U \circ F = \widetilde{F} \circ \U
\end{equation}
Let us investigate the behavior of such a $\widetilde{F}$ on morphisms $m: G \to H$.
We first show that the underlying renaming of $\widetilde{F}(m)$ does not depend on $G$ and $H$.

\begin{prop}
\label{prop:image_R_independent}
Consider some $\widetilde{F}$ obeying to Eq.~\eqref{eq:ftilde_commut}.
For any $m: G \to H$, the renamings $|\widetilde{F}(m)|$ and $|\widetilde{F}(\emptygraph_{|m|})|$ are equal.
\end{prop}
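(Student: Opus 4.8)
The plan is to reduce everything to two facts already available: the functor $|{-}|: \Csdp \to \Sym_\V$ is faithful (Remark~\ref{rem:adjoint_emptyfunc_pipe}), so a morphism with fixed source and target is completely determined by its underlying renaming; and the embedding functor $\U$ sends every poset inclusion to a morphism whose underlying renaming is $\id_\V$ (Definition~\ref{def:embedding-functor}). The key idea is to route $m$ through the empty graph $\emptygraph$ and compare two factorizations of a single morphism $\emptygraph \to H$, one passing through $G$ via $m$ and one passing through $\emptygraph_{|m|}$.

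First I would introduce the inclusion morphisms $e_G := \U(\emptygraph \subseteq G): \emptygraph \to G$ and $e_H := \U(\emptygraph \subseteq H): \emptygraph \to H$, which are legitimate since $\emptygraph$ is a subgraph of every graph, and which satisfy $|e_G| = |e_H| = \id_\V$. I would then consider the two composites $m \circ e_G$ and $e_H \circ \emptygraph_{|m|}$. Both are parallel morphisms $\emptygraph \to H$, and using the composition convention $|n \circ m| = |n| \circ |m|$ their underlying renamings compute respectively to $|m| \circ \id_\V$ and $\id_\V \circ |m|$, hence both equal $|m|$. By faithfulness of $|{-}|$ on this hom-set, the two composites coincide: $m \circ e_G = e_H \circ \emptygraph_{|m|}$.

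Next I would apply $\widetilde{F}$ to this equality and then $|{-}|$, turning composition into composition of renamings:
\[
|\widetilde{F}(m)| \circ |\widetilde{F}(e_G)| = |\widetilde{F}(e_H)| \circ |\widetilde{F}(\emptygraph_{|m|})|.
\]
The main point is then to evaluate the two inclusion factors. Since $e_G = \U(\emptygraph \subseteq G)$, the commutation Eq.~\eqref{eq:ftilde_commut} gives $\widetilde{F}(e_G) = \U(F(\emptygraph \subseteq G))$, which lies in the image of $\U$ and therefore has underlying renaming $\id_\V$; the same holds for $e_H$. Substituting $|\widetilde{F}(e_G)| = |\widetilde{F}(e_H)| = \id_\V$ collapses the displayed equality to the desired $|\widetilde{F}(m)| = |\widetilde{F}(\emptygraph_{|m|})|$. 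Note that this argument never requires $F(\emptygraph) = \emptygraph$; it only uses that $e_G$ and $e_H$ are images under $\U$.

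I expect the only delicate point to be the bookkeeping around faithfulness: one must verify that $m \circ e_G$ and $e_H \circ \emptygraph_{|m|}$ genuinely share source $\emptygraph$ and target $H$ before invoking injectivity of $|{-}|$ on the hom-set, and apply the convention $|n \circ m| = |n| \circ |m|$ in the correct order on each side. Everything else follows directly from functoriality of $\widetilde{F}$ and $|{-}|$ together with the definition of $\U$.
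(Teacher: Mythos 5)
Your proof is correct and takes essentially the same route as the paper's: the paper's argument is a single chain of equalities built on exactly your commutation $m \circ \U(\emptygraph \subseteq G) = \U(\emptygraph \subseteq H) \circ \emptygraph_{|m|}$, using Eq.~\eqref{eq:ftilde_commut} to replace $\widetilde{F}(\U({-}))$ by $\U(F({-}))$ and thus kill the inclusion factors via $|\U({-})| = \id_\V$. The only difference is presentational: you justify the commutation explicitly through faithfulness of $|{-}|$ on the hom-set $\emptygraph \to H$, a step the paper simply asserts by inspection of the underlying renamings.
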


\begin{proof}
For any $m: G \to H$, since $\emptygraph \subseteq G$ and $\emptygraph \subseteq H$, we observe the commutation
\begin{equation}
\label{eq:observed_comm}
m \circ \U(\emptygraph \subseteq G) = \U(\emptygraph \subseteq H) \circ \emptygraph_{|m|}
\end{equation}
where $\emptygraph_{|m|}: \emptygraph \to \emptygraph$ is given by Remark~\ref{rem:adjoint_emptyfunc_pipe}.
We get 
\begin{align*}
|\widetilde{F}(m)| & = |\widetilde{F}(m)| \circ \id_\V = |\widetilde{F}(m)| \circ |\U(F(\emptygraph 
\subseteq G))|\\
& = |\widetilde{F}(m)| \circ |\widetilde{F}(\U(\emptygraph \subseteq G))| & \text{by Eq.~\eqref{eq:ftilde_commut}}\\
& = |\widetilde{F}(m \circ \U(\emptygraph \subseteq G))| = |\widetilde{F}(\U(\emptygraph \subseteq H) \circ \emptygraph_{|m|})| & \text{by Eq.~\eqref{eq:observed_comm}}\\
& = |\widetilde{F}(\U(\emptygraph \subseteq H))| \circ |\widetilde{F}(\emptygraph_{|m|})|\\
& = |\U(F(\emptygraph \subseteq H))| \circ |\widetilde{F}(\emptygraph_{|m|})| & \text{by Eq.~\eqref{eq:ftilde_commut}}\\
& = \id_\V \circ |\widetilde{F}(\emptygraph_{|m|})| = |\widetilde{F}(\emptygraph_{|m|})|
\end{align*}
as required.
\end{proof}

This proposition tells us that the entire behavior of $\widetilde{F}$ on morphisms is simply determined by a function over renamings.
We denote this function $\overline{F}: \Sym_\V \to \Sym_\V$ which is defined by $\overline{F}(R) := |\widetilde{F}(\emptygraph_R)|$.
Notice the commutation $\overline{F}(|m|) = |\widetilde{F}(m)|$ which holds for any $m: G \to H$ by Proposition~\ref{prop:image_R_independent}.

\begin{prop}
\label{prop:conj_group_morphism}
Consider some $\widetilde{F}$ obeying to Eq.~\eqref{eq:ftilde_commut}.
$\overline{F}: \Sym_\V \to \Sym_\V$ is a group homomorphism.
\end{prop}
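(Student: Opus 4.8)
The plan is to verify the two defining properties of a group homomorphism for $\overline{F}$: that it preserves composition (\ie, $\overline{F}(S \circ R) = \overline{F}(S) \circ \overline{F}(R)$ for all $R, S \in \Sym_\V$) and that it preserves the identity (\ie, $\overline{F}(\id_\V) = \id_\V$). The key observation driving everything is the commutation $\overline{F}(|m|) = |\widetilde{F}(m)|$ established right after Proposition~\ref{prop:image_R_independent}, which lets me translate statements about $\overline{F}$ into statements about the functor $\widetilde{F}$ applied to concrete morphisms. Since $\widetilde{F}$ is a functor and $|{-}|$ is a faithful functor (Remark~\ref{rem:adjoint_emptyfunc_pipe}), both properties should reduce to functoriality.

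First I would handle the identity. Using the commutation with the morphism $\id_\emptygraph = \emptygraph_{\id_\V}$, I compute $\overline{F}(\id_\V) = |\widetilde{F}(\emptygraph_{\id_\V})| = |\widetilde{F}(\id_\emptygraph)| = |\id_{\widetilde{F}(\emptygraph)}| = \id_\V$, where the middle equality uses that $\emptygraph_{\id_\V}$ is the identity morphism on $\emptygraph$, the next uses that $\widetilde{F}$ is a functor (hence preserves identities), and the last uses that $|{-}|$ sends identities to $\id_\V$.

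For composition, I would exploit that $\emptygraph_{-}$ is a functor from $\Sym_\V$ into $\Csdp$ (Remark~\ref{rem:adjoint_emptyfunc_pipe}), so that $\emptygraph_{S \circ R} = \emptygraph_S \circ \emptygraph_R$ as morphisms $\emptygraph \to \emptygraph$. Then I chain the defining equation of $\overline{F}$ with functoriality of both $\widetilde{F}$ and $|{-}|$:
\begin{align*}
\overline{F}(S \circ R)
&= |\widetilde{F}(\emptygraph_{S \circ R})|
= |\widetilde{F}(\emptygraph_S \circ \emptygraph_R)| \\
&= |\widetilde{F}(\emptygraph_S) \circ \widetilde{F}(\emptygraph_R)|
= |\widetilde{F}(\emptygraph_S)| \circ |\widetilde{F}(\emptygraph_R)|
= \overline{F}(S) \circ \overline{F}(R).
\end{align*}
This completes the verification. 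I do not anticipate a serious obstacle here: the statement is essentially a formal consequence of the fact that $\overline{F}$ arises by pasting together three functors ($\emptygraph_{-}$, then $\widetilde{F}$, then $|{-}|$), all of which preserve composition and identities. The only point requiring minor care is ensuring that $\emptygraph_{S\circ R} = \emptygraph_S \circ \emptygraph_R$ genuinely holds at the level of morphisms in $\Csdp$, but this is exactly the functoriality of $\emptygraph_{-}$ asserted in Remark~\ref{rem:adjoint_emptyfunc_pipe} (it is the left adjoint to $|{-}|$, in particular a functor).
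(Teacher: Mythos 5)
Your proof is correct and takes essentially the same route as the paper: the paper's proof simply observes that $\overline{F}$ coincides with $\widetilde{F}$ on the endomorphisms of $\emptygraph$, hence is an endofunctor on the one-object category $\Sym_\V$ and thus a group homomorphism, and your computation is exactly the explicit unfolding of that composite-of-functors argument ($\emptygraph_{-}$, then $\widetilde{F}$, then $|{-}|$). The only difference is presentational: you verify preservation of identity and composition by hand where the paper invokes functoriality wholesale.
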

\begin{proof}
By definition, $\overline{F}$ behaves exactly as $\widetilde{F}$ on the endomorphisms of $\emptygraph$.
Since $\widetilde{F}$ is functorial, $\overline{F}$ is an endofunctor on the group $\Sym_\V$, so a group homomorphism.
\end{proof}

The previous results can be summarized in the following commuting diagram.

\begin{center}
\begin{tikzpicture}
\node (GI) at (0,1.4) {$\Gsdp$};
\node (CI) at (2.2,1.4) {$\Csdp$};
\node (SI) at (4.4,1.4) {$\Sym_\V$};
\node (GO) at (0,0) {$\Gsdp$};
\node (CO) at (2.2,0) {$\Csdp$};
\node (SO) at (4.4,0) {$\Sym_\V$};

\draw[->] (GI) to node[above,style={scale=.85}]{$\U$} (CI);
\draw[->] (CI) to node[above,style={scale=.85}]{$|{-}|$} (SI);
\draw[->] (GO) to node[above,style={scale=.85}]{$\U$} (CO);
\draw[->] (CO) to node[above,style={scale=.85}]{$|{-}|$} (SO);
\draw[->] (GI) to node[left,style={scale=.85}]{$F$} (GO);
\draw[->] (CI) to node[left,style={scale=.85}]{$\widetilde{F}$} (CO);
\draw[->] (SI) to node[right,style={scale=.85}]{$\overline{F}$} (SO);
\end{tikzpicture}
\end{center}

The following proposition states how $\overline{F}$ is related to the CGDs renaming-invariance we want to capture.

\begin{prop}
\label{prop:ftilde_image_in_conj}
Consider some $\widetilde{F}$ obeying to Eq.~\eqref{eq:ftilde_commut}.
For any renaming $R$, $\overline{F}(R)$ is a conjugate of $R$, \ie, for any $G$, we have
$$
F(R(G)) = \overline{F}(R)(F(G)).
$$
\end{prop}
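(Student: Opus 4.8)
The plan is to exploit the isomorphisms coming from renamings and to push them through the functor $\widetilde{F}$. Recall that each renaming $R$ induces, for every graph $G$, an isomorphism $G_R: G \to R(G)$ in $\Csdp$ with $|G_R| = R$, as noted just after Proposition~\ref{prop:forms_of_iso}. The whole argument is a functoriality computation applied to this isomorphism.

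First I would record the object-level behavior of $\widetilde{F}$: since $\U$ acts as the identity on objects, Eq.~\eqref{eq:ftilde_commut} gives $\widetilde{F}(G) = F(G)$ and $\widetilde{F}(R(G)) = F(R(G))$ for every $G$. Applying the functor $\widetilde{F}$ to the morphism $G_R: G \to R(G)$ therefore produces a morphism $\widetilde{F}(G_R): F(G) \to F(R(G))$ in $\Csdp$. The key observation is that $G_R$ is an isomorphism—being of the form $m: G \to |m|(G)$, it is an isomorphism by Proposition~\ref{prop:forms_of_iso}—so its image $\widetilde{F}(G_R)$ is again an isomorphism, functors preserving isomorphisms.

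Now I would invoke Proposition~\ref{prop:forms_of_iso} a second time, in the direction that reads off the codomain of an isomorphism as the image of its domain under the underlying renaming: any isomorphism $m: A \to B$ satisfies $B = |m|(A)$. Instantiating this with $m = \widetilde{F}(G_R)$, $A = F(G)$ and $B = F(R(G))$ yields $F(R(G)) = |\widetilde{F}(G_R)|(F(G))$. It then only remains to identify the renaming $|\widetilde{F}(G_R)|$. Using the commutation $\overline{F}(|m|) = |\widetilde{F}(m)|$ established right after Proposition~\ref{prop:image_R_independent}, applied to $m = G_R$ with $|G_R| = R$, we obtain $|\widetilde{F}(G_R)| = \overline{F}(R)$. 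Substituting gives the desired identity $F(R(G)) = \overline{F}(R)(F(G))$, valid for all $G$, which is exactly the claim that $\overline{F}(R)$ is a conjugate of $R$.

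I do not expect a genuine obstacle here, since every ingredient (the isomorphism $G_R$, the characterization of isomorphisms in $\Csdp$, and the defining commutation of $\overline{F}$) is already available. The only point deserving a little care is to apply Proposition~\ref{prop:forms_of_iso} in the correct direction—recovering the \emph{equality} $F(R(G)) = |\widetilde{F}(G_R)|(F(G))$ from the isomorphism, rather than merely the inclusion $|\widetilde{F}(G_R)|(F(G)) \subseteq F(R(G))$ that the bare morphism would give.
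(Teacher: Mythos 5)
Your proposal is correct and follows essentially the same route as the paper's own proof: apply the functor $\widetilde{F}$ to the isomorphism $G_R: G \to R(G)$, use Proposition~\ref{prop:forms_of_iso} to read off the equality $F(R(G)) = |\widetilde{F}(G_R)|(F(G))$, and identify $|\widetilde{F}(G_R)| = \overline{F}(R)$ via the commutation $\overline{F}(|m|) = |\widetilde{F}(m)|$. Your added remarks (making the object-level identification $\widetilde{F}(G) = F(G)$ explicit, and noting that the isomorphism yields an equality rather than a mere inclusion) simply spell out steps the paper leaves implicit.
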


\begin{proof}
Consider the isomorphism $G_R: G \to R(G)$ with $|G_R| = R$.
By functoriality of $\widetilde{F}$, $\widetilde{F}(G_R): F(G) \to F(R(G))$ is an isomorphism as well.
By Proposition~\ref{prop:forms_of_iso}, $F(R(G)) = |\widetilde{F}(G_R)|(F(G)) = \overline{F}(|G_R|)(F(G)) = \overline{F}(R)(F(G))$ as required.
\end{proof}

As a summary, defining $\widetilde{F}$ consists in choosing, for each renaming $R$, a conjugate $\overline{F}(R) \in \Conj_F(R)$ in a functorial way.
Proposition~\ref{prop:cgd-rename-invariance} of CGDs renaming-invariance states that each $\Conj_F(R)$ is never empty.
However, they might not be singletons, in which case the challenge is to make a coherent choice to obtain a functor.
We put this challenge aside and place ourselves in the simpler case where all conjugates are unique (\ie, $\Conj_F(R)$ is a singleton for any $R$) which we call \emph{unique conjugate assumption}.
We will discuss later the general case.

\begin{prop}
Under the unique conjugate assumption, Eq.~\eqref{eq:ftilde_commut} characterizes uniquely the functor $\widetilde{F}: \Csdp \to \Csdp$.
\end{prop}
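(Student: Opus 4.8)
The plan is to show that Eq.~\eqref{eq:ftilde_commut} leaves no freedom in defining $\widetilde{F}$: its action on objects is forced, and under the unique conjugate assumption its action on morphisms is forced as well. First I would fix the behavior on objects. Since $\U$ is the identity on objects and Eq.~\eqref{eq:ftilde_commut} reads $\U \circ F = \widetilde{F} \circ \U$, evaluating at any graph $G$ gives $\widetilde{F}(G) = \widetilde{F}(\U(G)) = \U(F(G)) = F(G)$. So the object map of $\widetilde{F}$ is completely determined, namely $\widetilde{F}(G) = F(G)$.

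Next I would turn to morphisms. By Proposition~\ref{prop:image_R_independent}, for any $m: G \to H$ the underlying renaming satisfies $|\widetilde{F}(m)| = |\widetilde{F}(\emptygraph_{|m|})| = \overline{F}(|m|)$, so the underlying renaming of $\widetilde{F}(m)$ is entirely governed by the group homomorphism $\overline{F}$. The crucial observation is that a subgraph isomorphism is uniquely determined by its source, its target, and its underlying renaming: by Definition~\ref{def:graph_morphism} a morphism $\widetilde{F}(m): F(G) \to F(H)$ \emph{is} the datum of its underlying renaming (subject to the inclusion constraint). Hence, once the source $F(G)$, the target $F(H)$, and the renaming $\overline{F}(|m|)$ are known, the morphism $\widetilde{F}(m)$ is fixed. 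Since source and target are fixed by the object map above, everything reduces to pinning down $\overline{F}$.

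It remains to argue that $\overline{F}$ itself is uniquely determined. By Proposition~\ref{prop:ftilde_image_in_conj}, for every renaming $R$ the renaming $\overline{F}(R)$ is a conjugate of $R$ for $F$, i.e. $\overline{F}(R) \in \Conj_F(R)$. Under the unique conjugate assumption, each $\Conj_F(R)$ is a singleton, so $\overline{F}(R)$ has only one possible value, namely the unique conjugate of $R$. This forces $\overline{F}$ entirely, hence forces $|\widetilde{F}(m)|$ for every $m$, hence forces $\widetilde{F}(m)$ itself. Combined with the forced object map, this shows that at most one functor $\widetilde{F}$ can satisfy Eq.~\eqref{eq:ftilde_commut}. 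For existence, I would then verify that the assignment $\widetilde{F}(G) := F(G)$ and $\widetilde{F}(m) :=$ the subgraph isomorphism with underlying renaming equal to the unique conjugate of $|m|$ is well defined (the inclusion constraint $\overline{F}(|m|)(F(G)) \subseteq F(H)$ holds precisely by the renaming-invariance identity of Proposition~\ref{prop:ftilde_image_in_conj} together with monotonicity of $F$) and functorial (identities and composition are preserved because $\overline{F}$ is a group homomorphism by Proposition~\ref{prop:conj_group_morphism}).

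The main obstacle I anticipate is the existence half, specifically checking that the proposed $\widetilde{F}(m)$ really lands in $\Csdp$, i.e. that the inclusion $\overline{F}(|m|)(F(G)) \subseteq F(H)$ genuinely holds. This is where the renaming-invariance equation $F(R(G)) = \overline{F}(R)(F(G))$ must be combined with monotonicity: from $|m|(G) \subseteq H$ one applies $F$ monotonically to obtain $F(|m|(G)) \subseteq F(H)$, and then rewrites $F(|m|(G)) = \overline{F}(|m|)(F(G))$ via Proposition~\ref{prop:ftilde_image_in_conj}. The uniqueness direction, by contrast, is essentially immediate given the earlier propositions.
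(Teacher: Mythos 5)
Your proof is correct and follows essentially the same route as the paper's: reduce $\widetilde{F}$ to $\overline{F}$ via Proposition~\ref{prop:image_R_independent}, pin down $\overline{F}(R)$ as the unique element of $\Conj_F(R)$ via Proposition~\ref{prop:ftilde_image_in_conj} and the unique conjugate assumption, and obtain functoriality from the group homomorphism property of Proposition~\ref{prop:conj_group_morphism}. Your explicit check of the inclusion constraint $\overline{F}(|m|)(F(G)) \subseteq F(H)$ (combining monotonicity of $F$ with the conjugacy identity) is a welcome detail that the paper's terser proof leaves implicit.
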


\begin{proof}
By Proposition~\ref{prop:image_R_independent}, the functor $\widetilde{F}$ is entirely defined by the behavior of $\overline{F}$.
By Proposition~\ref{prop:ftilde_image_in_conj}, we have $\overline{F}(R) \in \Conj_F(R)$ which is a singleton under the unique conjugate assumption.
So, $\widetilde{F}$ is all fixed in that case and it remains to show that it is a functor.
But this means that $\overline{F}: \Sym_\V \to \Sym_\V$ is functorial, which is exactly stated by Proposition~\ref{prop:conj_group_morphism}.
\end{proof}

\subsection{The Category of Disks}
\label{sec:cat-disks}

Like it has been done for graphs, disks which are related by renaming need to be considered as isomorphic.
This leads to consider a category from the poset $\Dsdp^r$.
This category inherits from $\Dsdp^r$ the fact that disks are compared by aligning their centers.
Consequently, the comma category $\widetilde{i}/G$ to be considered later (whatever $\widetilde{i}$ will be) will not be able to identify how two disks of $G$ with different centers can be related to each other.
However, this information is of main importance when computing the colimit.
Indeed, in the original poset setting, images of non aligned disks are well positioned thanks to the absolute positioning.
But this is lost when quotienting by renamings.
In a more general categorical setting, alignment has to be specified through morphisms directly.
This is a usual issue when dealing with GTs: rules have to specify not only the images of the disks but also how these images must be connected in the output.

We need to complete the quotienting with more information about disks interactions.
Past experiences in GTs show that there exist two main strategies to keep track of disks relationships: completing the category of disks either with unions of disks or intersections of disks~\cite{DBLP:conf/uc/FernandezMS19}.
We used the former in the case of CA~\cite{DBLP:journals/nc/FernandezMS23} where a category of \emph{pairs of disks} is considered.
In the present work, we focus on the latter strategy since the empty graph provides us with a common ``subdisk'' able to relate all proper disks.

\begin{defi}[Category of Disks]
\label{def:cat-diskpair}
Let $\CDsdp^r$ be the category with as objects the set
$$
\{ (H, \{c\}) \mid (H, c) \in \Dsdp^r \} \cup \{ (\emptygraph, \emptyset) \},
$$
and, for any two such objects $(H_1,C_1)$ and $(H_2,C_2)$, a morphism $m: (H_1,C_1) \to (H_2,C_2)$ is the data of a graph morphism (also abusively denoted) $m: H_1 \to H_2 \in \Csdp$ such that $|m|(C_1) \subseteq C_2$.
Composition is inherited from $\Csdp$, and $\id_{(H,C)} = \id_H$, for any object $(H,C)$.
\end{defi}

This category contains as objects all disks of $\Dsdp^r$ with an additional ``not centered empty disk''.
We use a set (singleton or empty set) to deal with the presence / absence of center and get a homogeneous representation.
Thanks to this encoding, a morphism from the empty disk does not require any constraint on the center ($|m|(\emptyset) = \emptyset \subseteq C_2$ whatever $C_2$ is), while a morphism between two proper disks requires them to be aligned by centers ($|m|(\{c_1\}) \subset \{c_2\} \iff |m|(c_1) = c_2$).

\begin{prop}
$\CDsdp^r$ is indeed a category.
\end{prop}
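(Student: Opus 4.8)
The plan is to reduce everything to the fact that $\Csdp$ is already a category (Proposition~\ref{prop:graph_cat}). Indeed, the objects of $\CDsdp^r$ merely decorate graphs with a center-set $C$ (a singleton $\{c\}$ or $\emptyset$), and a morphism is nothing but a morphism of $\Csdp$ between the underlying graphs subject to the additional constraint $|m|(C_1) \subseteq C_2$. Since composition and identities are borrowed verbatim from $\Csdp$, the only genuinely new content to check is that this added center constraint is stable under composition and is met by the identities; the unit and associativity laws will then be inherited for free.

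First I would check that identities qualify: for any object $(H,C)$, the declared identity $\id_{(H,C)} = \id_H$ has $|\id_H| = \id_\V$, so $|\id_H|(C) = C \subseteq C$ and the center constraint holds.

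Next I would verify closure under composition. Taking composable morphisms $m: (H_1,C_1) \to (H_2,C_2)$ and $n: (H_2,C_2) \to (H_3,C_3)$, so that $|m|(C_1) \subseteq C_2$ and $|n|(C_2) \subseteq C_3$, their composite in $\Csdp$ satisfies $|n \circ m| = |n| \circ |m|$ by Definition~\ref{def:graph_cat}, whence
\[
|n \circ m|(C_1) = |n|(|m|(C_1)) \subseteq |n|(C_2) \subseteq C_3,
\]
the first inclusion using that renamings act elementwise on sets and are therefore monotonic for inclusion (Remark~\ref{rem:renamings-monotonic}). Thus $n \circ m$ is again a morphism of $\CDsdp^r$. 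The boundary cases are automatically covered: when $C_1 = \emptyset$ the constraint $|m|(\emptyset) = \emptyset \subseteq C_2$ is vacuous, matching the intended reading that a morphism out of the empty disk imposes no alignment of centers.

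Finally, since the underlying morphisms, their composition, and the identities all coincide with those of $\Csdp$, the associativity of composition and the two identity laws transfer directly from Proposition~\ref{prop:graph_cat}. I expect no real obstacle here: the whole verification is routine, its single non-formal ingredient being the monotonicity of renamings on center-sets invoked in the displayed chain of inclusions.
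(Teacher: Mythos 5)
Your proof is correct and follows essentially the same route as the paper's: identities satisfy the center constraint since $\id_\V(C) = C$, composites satisfy it via $|n \circ m|(C_1) = |n|(|m|(C_1)) \subseteq |n|(C_2) \subseteq C_3$, and associativity together with the unit laws are inherited from $\Csdp$ (Proposition~\ref{prop:graph_cat}). Your explicit appeal to Remark~\ref{rem:renamings-monotonic} and the remark on the empty-center case merely spell out steps the paper leaves implicit.
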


\begin{proof}
For any $(H,C) \in \CDsdp^r$, $\id_{(H,C)} = \id_H$ is well defined since we have $|\id_H|(C) = \id_\V(C) = C \subseteq C$.
The composition is well defined as well.
Indeed, given $m: (H,C) \to (H',C')$ and $n: (H',C') \to (H'',C'')$, we have
$$
|m|(C) \subseteq C' \land |n|(C') \subseteq C'' \implies |n \circ m|(C) = |n|(|m|(C)) \subseteq C''
$$
which implies that $n \circ m$ is a morphism of $\CDsdp^r$.
Associativity of the composition and neutrality of the identities are inherited from $\Csdp$ and already proved in Proposition~\ref{prop:graph_cat}.
\end{proof}

Notice that, under the common condition that the number of labels in $\Sigma$ and $\Delta$ is finite, this category of disks has now essentially finitely many disks.
Indeed, because of the bounds imposed by the radius and the number of ports, there is a maximal number $n$ of vertices.
If we fix a subset $W \subset \V$ of size $n$ and restrict to the disks $(H, c)$ such that $V(H) \subseteq W$, we have only finitely many disks as soon as there are finite labels.
Any other disk is isomorphic to one of those disks.

\subsection{The Pointer Dropping and Local Rule Functors.}

The next step is straightforward definitions of the categorical counterpart of functions $i$ and $f$.
On the one hand, the functor $\widetilde{i}$ drops the centers of disks.

\begin{defi}[Pointer Dropping Functor]
\label{def:funcrtor-itilde}
Let $\widetilde{i}: \CDsdp^r \to \Csdp$ be the functor defined by $\widetilde{i}((H,C)) = H$ and $\widetilde{i}(m: (H,C) \to (H',C')) = m$.
\end{defi}

\begin{prop}
$\widetilde{i}$ is indeed a functor.
\end{prop}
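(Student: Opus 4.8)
The plan is to verify the two functor axioms directly from Definition~\ref{def:funcrtor-itilde}, namely that $\widetilde{i}$ preserves identities and composition, together with the prerequisite that $\widetilde{i}$ sends morphisms of $\CDsdp^r$ to well-defined morphisms of $\Csdp$. Since $\widetilde{i}$ acts as the identity on the underlying graph morphism data, almost everything reduces to unwinding definitions, so the real content is just bookkeeping.

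First I would check that $\widetilde{i}$ is well-typed on morphisms. Given a morphism $m: (H,C) \to (H',C')$ in $\CDsdp^r$, Definition~\ref{def:cat-diskpair} tells us that $m$ is the data of a graph morphism $m: H \to H' \in \Csdp$ satisfying the center constraint $|m|(C) \subseteq C'$. The functor forgets this constraint and returns precisely the underlying morphism $m: H \to H'$, which is by construction already a morphism of $\Csdp$. Hence $\widetilde{i}(m)$ has source $\widetilde{i}((H,C)) = H$ and target $\widetilde{i}((H',C')) = H'$, as required.

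Next I would verify preservation of identities and composition. For identities, Definition~\ref{def:cat-diskpair} sets $\id_{(H,C)} = \id_H$, so $\widetilde{i}(\id_{(H,C)}) = \id_H = \id_{\widetilde{i}((H,C))}$ immediately. For composition, given $m: (H,C) \to (H',C')$ and $n: (H',C') \to (H'',C'')$, the composite $n \circ m$ in $\CDsdp^r$ is by definition inherited from $\Csdp$, meaning its underlying graph morphism is exactly $n \circ m$ computed in $\Csdp$. Therefore $\widetilde{i}(n \circ m) = n \circ m = \widetilde{i}(n) \circ \widetilde{i}(m)$, where the composition on the right is taken in $\Csdp$.

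This argument has no genuine obstacle: the category $\CDsdp^r$ was defined so that its composition and identities are literally those of $\Csdp$ restricted to the objects under consideration, and $\widetilde{i}$ merely discards the center component. The only point requiring a moment's attention is confirming that forgetting the constraint $|m|(C) \subseteq C'$ cannot break well-definedness, but this is automatic since the forgotten data is a property of $m$ rather than extra structure, and the surviving graph morphism was already a bona fide morphism of $\Csdp$. Thus $\widetilde{i}$ is a functor.
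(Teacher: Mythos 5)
Your proof is correct and takes the same approach as the paper, which simply remarks that the proposition is trivial because $\widetilde{i}$ drops centers; you merely spell out the routine verification (well-typedness, identities, composition inherited from $\Csdp$) that the paper leaves implicit.
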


\begin{proof}
The proposition is trivial since $\widetilde{i}$ simply drops centers.
\end{proof}

Let now exhibit the structure induced on the comma category $\widetilde{i}/G$.
Formally, an object of $\widetilde{i}/G$ is a couple $\langle (H, C), m: H \to G\rangle$ composed of a disk $(H,C) \in \CDsdp^r$ together with a morphism $m: H \to G$ expressing an occurrence of that disk in $G$.
To avoid redundancies in notations, we simply write $\langle C, m: H \to G\rangle$.
Morphisms of $\widetilde{i}/G$ are couples $\langle n: (H_1, C_1) \to (H_2, C_2), m: H_2 \to G \rangle: \langle C_1, m \circ n \rangle \to \langle C_2, m \rangle$.
Notice that once given the specification of the couple, the signature of the morphism is completely known.
In the following, signatures of morphisms are then omitted.

\begin{prop}
\label{prop:comma-nonempty}
\label{prop:comma-connected}
For any $G \in \Csdp$, the comma category $\widetilde{i}/G$ is non-empty and connected.
\end{prop}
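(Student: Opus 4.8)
The plan is to make the ``not centered empty disk'' $(\emptygraph, \emptyset)$, adjoined to $\CDsdp^r$ in Definition~\ref{def:cat-diskpair}, play the role of a hub through which every object of $\widetilde{i}/G$ can be related. First I would dispatch non-emptiness: there is always a morphism $\emptygraph \to G$ in $\Csdp$ (take the one with underlying renaming $\id_\V$, which is legitimate since $\id_\V(\emptygraph) = \emptygraph \subseteq G$), and pairing it with the object $(\emptygraph, \emptyset)$ yields the object $\langle \emptyset,\, \emptygraph \to G\rangle$ of $\widetilde{i}/G$. Hence the comma category is non-empty.

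For connectedness I would proceed in two steps. The first step shows that every object is the target of a morphism issued from an empty-based object. Given an arbitrary object $\langle C,\, m : H \to G\rangle$, I would observe that there is a morphism $n : (\emptygraph, \emptyset) \to (H, C)$ in $\CDsdp^r$: the center constraint $|n|(\emptyset) = \emptyset \subseteq C$ is vacuous and $|n|(\emptygraph) = \emptygraph \subseteq H$ holds for every renaming, so any $|n|$ works. Following the description of morphisms in $\widetilde{i}/G$, this $n$ induces $\langle n, m\rangle : \langle \emptyset,\, m \circ n\rangle \to \langle C,\, m\rangle$, connecting our object to the empty-based object $\langle \emptyset,\, m \circ n\rangle$. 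The second step shows that all empty-based objects are mutually connected. Given $\langle \emptyset,\, p : \emptygraph \to G\rangle$ and $\langle \emptyset,\, q : \emptygraph \to G\rangle$, I would use Remark~\ref{rem:adjoint_emptyfunc_pipe}, by which every renaming is an endomorphism of $\emptygraph$: taking the endomorphism $n$ of $(\emptygraph, \emptyset)$ with $|n| := |q|^{-1} \circ |p|$ gives $q \circ n = p$ (a morphism out of $\emptygraph$ being determined by its underlying renaming), hence a comma morphism $\langle n, q\rangle : \langle \emptyset,\, p\rangle \to \langle \emptyset,\, q\rangle$.

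Combining the two steps, any two objects $X_1$ and $X_2$ of $\widetilde{i}/G$ are joined by a zigzag $X_1 \leftarrow E_1 \to E_2 \to X_2$ with $E_1, E_2$ empty-based, so $\widetilde{i}/G$ is connected. I expect the only delicate point to be the bookkeeping: checking that morphisms out of $\emptygraph$ are determined by their renaming (so that $q \circ n = p$ really holds as an equality of comma data), and tracking the directions and composites so that each $\langle n, m\rangle$ actually type-checks as a morphism with the claimed source and target. These are routine once the universal role of the empty disk is isolated, which is the conceptual heart of the argument.
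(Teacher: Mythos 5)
Your proof is correct and uses essentially the same idea as the paper: the adjoined empty disk $(\emptygraph,\emptyset)$ acts as a hub, since the center constraint $|n|(\emptyset) \subseteq C$ is vacuous and $\emptygraph$ maps into every graph. The only difference is organizational --- the paper chooses $|n_k| = |m_k|^{-1}$ so that both objects connect directly to the single canonical hub $\langle \emptyset, \U(\emptygraph \subseteq G) \rangle$ by a path of length $2$, whereas you take an arbitrary renaming in your first step and then add a second step connecting the empty-based objects among themselves, producing a slightly longer zigzag from the same ingredients.
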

\begin{proof}
For non-emptiness, we consider the empty disk $(\emptygraph, \emptyset)$ which occurs in any graph $G$.
So we have $\langle \emptyset, \U(\emptygraph \subseteq G) \rangle$ is in $\widetilde{i}/G$.

To show the connectedness of $\widetilde{i}/G$, \ie, the fact the any two objects are connected by a non-oriented path of morphisms, consider two arbitrary disks $(H_1, C_1)$ and $(H_2, C_2)$ occurring in $G$ by $\langle C_1, m_1: H_1 \to G \rangle$ and $\langle C_2, m_2: H_2 \to G\rangle$.
Both are related to $\langle \emptyset, \U(\emptygraph \subseteq G) \rangle$ by the morphisms $\langle n_k: (\emptygraph,\emptyset) \to (H_k,C_k), m_k \rangle$ with $|n_k| = |m_k|^{-1}$ for $k \in \{ 1, 2 \}$.
Indeed, firstly $n_k$ is well defined since $|n_k|(\emptyset) = \emptyset \subseteq C_k$ and $|n_k|(\emptygraph) = \emptygraph \subseteq H_k$.
Secondly, $\U(\emptyset \subseteq G) = m_k \circ n_k$ since $|m_k \circ n_k| = |m_k| \circ |n_k| = |m_k| \circ |m_k|^{-1} = \id_\V$.
So we have a path of length 2 in $\widetilde{i}/G$ between any two arbitrary objects $\langle C_1, m_1\rangle$ and $\langle C_2, m_2\rangle$.
\end{proof}

On the other hand, the local rule functor $\widetilde{f}$ applies $f$ on disks and does nothing for the empty disk.

\begin{defi}[Local Rule Functor]
\label{def:funcrtor-ftilde}
Let $\widetilde{f}: \CDsdp^r \to \Csdp$ be the functor defined by $\widetilde{f}((H,\{c\})) = f((H,c))$ and $\widetilde{f}((\emptygraph,\emptyset)) = \emptygraph$, and for any $m: (H,C) \to (H',C')$, $|\widetilde{f}(m)| = \overline{F}(|m|)$.
\end{defi}

\begin{prop}
$\widetilde{f}$ is indeed a functor.
\end{prop}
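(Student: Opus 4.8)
The plan is to verify the three functor axioms in turn: that $\widetilde{f}$ sends every morphism to a well-defined morphism of $\Csdp$, that it preserves identities, and that it preserves composition. I would dispatch the last two first, since they concern only underlying renamings. Recall that a morphism of $\Csdp$ (hence of $\CDsdp^r$) is completely determined by its source, its target, and its underlying renaming, so two morphisms with the same source and target coincide as soon as their renamings agree. For identities, $|\id_{(H,C)}| = \id_\V$, and since $\overline{F}$ is a group homomorphism (Proposition~\ref{prop:conj_group_morphism}) we get $|\widetilde{f}(\id_{(H,C)})| = \overline{F}(\id_\V) = \id_\V = |\id_{\widetilde{f}((H,C))}|$; as both sides share source and target, $\widetilde{f}(\id_{(H,C)}) = \id_{\widetilde{f}((H,C))}$. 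For composition, given composable $m$ and $n$, the homomorphism property gives $|\widetilde{f}(n \circ m)| = \overline{F}(|n|\circ|m|) = \overline{F}(|n|)\circ\overline{F}(|m|) = |\widetilde{f}(n)\circ\widetilde{f}(m)|$, and again the two morphisms agree because they have the same source and target.

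The genuinely content-bearing step is well-definedness: for each $m: (H_1,C_1)\to(H_2,C_2)$, the renaming $\overline{F}(|m|)$ must witness a subgraph isomorphism $\widetilde{f}((H_1,C_1)) \to \widetilde{f}((H_2,C_2))$, \ie $\overline{F}(|m|)(\widetilde{f}((H_1,C_1))) \subseteq \widetilde{f}((H_2,C_2))$ in the sense of Definition~\ref{def:graph_morphism}. I would split on the source. If $(H_1,C_1) = (\emptygraph,\emptyset)$, then $\widetilde{f}((H_1,C_1)) = \emptygraph$, whose image under any renaming is again $\emptygraph \subseteq \widetilde{f}((H_2,C_2))$, so the condition holds trivially. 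The case of a proper source with empty target does not arise: a morphism $(H_1,\{c\}) \to (\emptygraph,\emptyset)$ would force $|m|(H_1) \subseteq \emptygraph$, hence $H_1 = \emptygraph$, contradicting $c \in V(H_1)$. This leaves the main case of a morphism $m: (H_1,\{c_1\}) \to (H_2,\{c_2\})$ between proper disks.

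For this case, set $R := |m|$. By Definition~\ref{def:cat-diskpair} we have $R(H_1) \subseteq H_2$ and $R(\{c_1\}) \subseteq \{c_2\}$, \ie $R(c_1) = c_2$; since renamings preserve disks of radius $r$, $R((H_1,c_1)) = (R(H_1),c_2)$ is again a disk of $\GDsdp^r$ and satisfies $R((H_1,c_1)) \subseteq (H_2,c_2)$ in the subdisk order. The key observation is that $\overline{F}(R)$ is the conjugate of $R$ for $f$ itself: under the unique conjugate assumption in force in this section, $\Conj_F(R)$ is the singleton $\{\overline{F}(R)\}$ (Proposition~\ref{prop:ftilde_image_in_conj}), while $\Conj_f(R)$ is nonempty and contained in it by Proposition~\ref{prop:cgd-rename-invariance}; hence $\Conj_f(R) = \{\overline{F}(R)\}$ and $f \circ R = \overline{F}(R) \circ f$. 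Applying this to the disk $(H_1,c_1)$ and then invoking the monotonicity of $f$ on the subdisk inclusion above yields
$$
\overline{F}(R)(f((H_1,c_1))) = f(R((H_1,c_1))) \subseteq f((H_2,c_2)),
$$
which is exactly $\overline{F}(|m|)(\widetilde{f}((H_1,\{c_1\}))) \subseteq \widetilde{f}((H_2,\{c_2\}))$. I expect this proper-to-proper case to be the only real obstacle, since it is where the renaming-invariance of $f$ (forcing $\overline{F}$ to act as the $f$-conjugate) and the monotonicity of $f$ must be combined; the identity, composition, and empty-disk arguments are routine.
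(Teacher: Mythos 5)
Your proof is correct and follows essentially the same route as the paper's: the same three-way case split on empty versus proper disks, the same key step of deducing $\Conj_f(|m|) = \{\overline{F}(|m|)\}$ from the unique conjugate assumption together with $\emptyset \neq \Conj_f(|m|) \subseteq \Conj_F(|m|)$, and the same combination of this conjugacy with the monotonicity of $f$ to obtain $\overline{F}(|m|)(f((H_1,c_1))) \subseteq f((H_2,c_2))$. Your only additions — the explicit argument that a morphism from a proper disk to the empty disk cannot exist, and the spelled-out identity and composition checks via the group homomorphism $\overline{F}$ — are details the paper compresses into its phrase ``the three possible cases'' and its appeal to Proposition~\ref{prop:conj_group_morphism}, not a different approach.
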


\begin{proof}
We first need to show that $\widetilde{f}$ is well defined, \ie, that $\widetilde{f}(m)$ is a morphism from $\widetilde{f}((H,C))$ to $\widetilde{f}((H',C'))$.
So we expect $|\widetilde{f}(m)|(\widetilde{f}((H,C))) \subseteq \widetilde{f}((H',C'))$, which rewrites into $|\overline{F}(|m|)(\widetilde{f}((H,C))) \subseteq \widetilde{f}((H',C'))$ by definition of $\widetilde{f}$.
Let us look at the three possible cases.
\begin{enumerate}

    \item If $(H,C) = (H',C') = (\emptygraph, \emptyset)$, $\widetilde{f}((H,C)) = \widetilde{f}((H',C')) = \emptygraph$ and $\overline{F}(|m|)(\emptygraph) = \emptygraph \subseteq \emptygraph$.

    \item If $(H,C) = (\emptygraph, \emptyset)$ and $(H',C') = (H',\{c'\})$, $\widetilde{f}((H,C)) = \emptygraph$ and $\widetilde{f}((H',C')) = f((H',c'))$;
so, $\overline{F}(|m|)(\emptygraph) = \emptygraph \subseteq f((H',c'))$.

    \item If $(C,H) = (H, \{c\})$ and $(H',C') = (H', \{c'\})$, $\widetilde{f}((H,C)) = f((H,c))$ and $\widetilde{f}((H',C')) = f((H',c'))$;
so, we need to check that $\overline{F}(|m|)(f((H, c))) \subseteq f((H', c'))$.
Observe that if $\Conj_f(|m|) = \{ \overline{F}(|m|) \}$, we get the result: since $|m|(c) = c'$ and $|m|(H) \subseteq H'$, we have $\overline{F}(|m|)(f((H, c))) = f((|m|(H), |m|(c))) \subseteq f((H',c'))$ by monotony of $f$.
We now show that $\Conj_f(|m|) = \{ \overline{F}(|m|) \}$.
By Definition~\ref{def:graph_local_rule} of local rules, $f$ has a conjugate for $|m|$.
So, by Proposition~\ref{prop:cgd-rename-invariance}, we get $\emptyset \subsetneq \Conj_f(R) \subseteq \Conj_F(R)$ for any renaming $R$.
By unique conjugate assumption, $f$ has the same conjugate as $F$ for $|m|$.
So $\Conj_f(|m|) = \{ \overline{F}(|m|) \}$.
\end{enumerate}
Finally, functoriality is obtained by functoriality of $\overline{F}$ (Proposition~\ref{prop:conj_group_morphism}).
\end{proof}

Notice that the pointer dropping functor and the local rule functor are completely determined by their restriction to the subcategory of disks with finitely many objects described at the end of Section~\ref{sec:cat-disks}.
This can be seen by rewriting the defining equation of $\widetilde{f}$ given in Def.~\ref{def:funcrtor-ftilde}. Indeed, for objects, one has the equalities $\widetilde{f}((H,\{c\})) = f((H,c)) = f(R(H',c')) = R'(f((H',c'))) = |\widetilde{f}(R:(\emptygraph,\emptyset) \to (\emptygraph,\emptyset))|(\widetilde{f}((H',c')))$ for some $(H',\{c'\})$ in the subcategory, $R$ sending $(H',\{c'\})$ to $(H,\{c\})$, and any $R' \in \Conj_f(R)$. For morphisms, note that the proof of Proposition~\ref{prop:image_R_independent} also holds for the local rule functor.

\subsection{CGDs with Unique Conjugate as Kan Extensions}

We end our journey with the main result of the section stating that $\widetilde{F}$ is the pointwise left Kan extension of $\widetilde{f}$ along $\widetilde{i}$.
To get the result, we first focus on an input $G \in \Csdp$ and a colimit over $\widetilde{f}\circ\Proj_{\widetilde{i}/G}$, \ie a universal cocone from $\widetilde{f}\circ\Proj_{\widetilde{i}/G}$ to $F(G)$.

\begin{defi}
\label{def:cocone-theta}
Let $\theta^G: \widetilde{f}\circ\Proj_{\widetilde{i}/G} \Rightarrow F(G)$ be the cocone with components
$$
\theta^G_{\langle C, m: H \to G \rangle}: \widetilde{f}((C,H)) \to F(G)
$$
such that $|\theta^G_{\langle C, m: H \to G \rangle}| := \overline{F}(|m|)$.
\end{defi}

\begin{prop}
$\theta^G$ is indeed a cocone.
\end{prop}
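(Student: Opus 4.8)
The plan is to verify the two defining conditions of a cocone: that each component $\theta^G_{\langle C, m \rangle}$ is a well-defined morphism in $\Csdp$ with the correct signature, and that the naturality (commutation) squares hold, so that $\theta^G$ is constant with apex $F(G)$ and compatible with the morphisms in the comma category.

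\textbf{Well-definedness of components.} First I would check that each $\theta^G_{\langle C, m: H \to G\rangle}$, specified by $|\theta^G_{\langle C, m\rangle}| := \overline{F}(|m|)$, really is a morphism from $\widetilde{f}((H,C))$ to $F(G)$ in $\Csdp$. By Definition~\ref{def:graph_morphism} this amounts to showing $\overline{F}(|m|)(\widetilde{f}((H,C))) \subseteq F(G)$. The morphism $m: H \to G$ gives $|m|(H) \subseteq G$. By monotonicity of $F$ we get $F(|m|(H)) \subseteq F(G)$. The key is then to relate $\overline{F}(|m|)(\widetilde{f}((H,C)))$ to $F(|m|(H))$. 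I would split on the two kinds of object: for the empty disk $(\emptygraph,\emptyset)$, $\widetilde{f}$ gives $\emptygraph$ and $\overline{F}(|m|)(\emptygraph)=\emptygraph \subseteq F(G)$ trivially. For a proper disk $(H,\{c\})$, we have $\widetilde{f}((H,\{c\})) = f((H,c))$, and since $f$ generates $F$ we have $f((H,c)) \subseteq F(H)$; applying Proposition~\ref{prop:ftilde_image_in_conj}, $\overline{F}(|m|)(F(H)) = F(|m|(H)) \subseteq F(G)$, hence $\overline{F}(|m|)(f((H,c))) \subseteq F(G)$ as required.

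\textbf{Naturality.} Next I would verify that for every morphism $\langle n: (H_1,C_1) \to (H_2,C_2), m: H_2 \to G\rangle : \langle C_1, m\circ n\rangle \to \langle C_2, m\rangle$ in $\widetilde{i}/G$, the triangle commutes, \ie
$$
\theta^G_{\langle C_2, m\rangle} \circ \widetilde{f}(n) = \theta^G_{\langle C_1, m\circ n\rangle}.
$$
Since morphisms in $\Csdp$ are determined by their underlying renaming via the faithful functor $|{-}|$ (Remark~\ref{rem:adjoint_emptyfunc_pipe}), it suffices to check equality after applying $|{-}|$. On the left, $|\theta^G_{\langle C_2, m\rangle} \circ \widetilde{f}(n)| = \overline{F}(|m|) \circ \overline{F}(|n|)$, using Definition~\ref{def:funcrtor-ftilde} for $|\widetilde{f}(n)| = \overline{F}(|n|)$. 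Since $\overline{F}$ is a group homomorphism by Proposition~\ref{prop:conj_group_morphism}, this equals $\overline{F}(|m| \circ |n|) = \overline{F}(|m \circ n|)$, which is exactly $|\theta^G_{\langle C_1, m\circ n\rangle}|$. So the squares commute and $\theta^G$ is a cocone.

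\textbf{Main obstacle.} The only delicate point is the well-definedness step for proper disks, where I must correctly connect the three objects $\widetilde{f}((H,C))$, $F(H)$, and $F(G)$ through the conjugate $\overline{F}(|m|)$; the naturality step is essentially forced by the group-homomorphism property of $\overline{F}$ and should be routine. I would take care that the containment $f((H,c)) \subseteq F(H)$ is invoked correctly (it holds because $f$ is a local rule generating $F$, so $f(G^r_v)$ is one of the terms in the union defining $F$), and that Proposition~\ref{prop:ftilde_image_in_conj} is applied with the correct direction of the conjugation identity.
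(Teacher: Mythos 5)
Your proof is correct, and it matches the paper on one of the two halves: your naturality argument (reduce to underlying renamings via faithfulness of $|{-}|$, then use $|\widetilde{f}(n)| = \overline{F}(|n|)$ and the group-homomorphism property of Proposition~\ref{prop:conj_group_morphism}) is exactly the paper's. Where you genuinely diverge is the well-definedness of the components on proper disks. The paper works at the level of the \emph{local rule}: it uses $\overline{F}(|m|) \in \Conj_f(|m|)$ — a fact that rests on the unique conjugate assumption and was established inside the proof that $\widetilde{f}$ is a functor — to rewrite $\overline{F}(|m|)(f((H,c))) = f((|m|(H),|m|(c)))$, then concludes via Lemma~\ref{prop:disk_top} (maximality of $G^r_{|m|(c)}$ among subdisks of $G$), monotonicity of $f$, and Eq.~\eqref{eq:cgd}. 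You instead work at the level of the \emph{global dynamics}: $f((H,c)) \subseteq F(H)$, then Proposition~\ref{prop:ftilde_image_in_conj} gives $\overline{F}(|m|)(F(H)) = F(|m|(H))$, and monotonicity of $F$ closes the chain. Your route buys something real: it only needs $\overline{F}(|m|) \in \Conj_F(|m|)$, which holds for any $\widetilde{F}$ satisfying Eq.~\eqref{eq:ftilde_commut} without re-invoking the singleton argument $\Conj_f(|m|) = \{\overline{F}(|m|)\}$; the paper's route keeps the estimate local to $f$, reusing machinery already set up for $\widetilde{f}$'s functoriality. Two small implicit steps in your version deserve a word each: the containment $f((H,c)) \subseteq F(H)$ requires that a disk $(H,c) \in \Dsdp^r$ satisfies $(H,c) = H^r_c$, so that $f((H,c))$ is literally a term of the union defining $F(H)$ — this is true and follows from Lemmas~\ref{prop:disk_in_Y} and~\ref{prop:disk_top} applied to $H \subseteq H$, but it is not free; and your final ``hence'' uses that renamings act monotonically on graphs (Remark~\ref{rem:renamings-monotonic}) to push $f((H,c)) \subseteq F(H)$ through $\overline{F}(|m|)$. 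Neither is a gap, just worth citing.
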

\begin{proof}
We need to show that the components are well-defined morphisms, then that they commute with the diagram morphisms.
Consider some $\langle C, m: H \to G \rangle$ of $\widetilde{i}/G$.
We want to show that $\theta^G_{\langle C, m\rangle}$ is well defined, which means that $|\theta^G_{\langle C, m\rangle}|(\widetilde{f}((C,H))) \subseteq F(G)$, that is, $\overline{F}(|m|)(\widetilde{f}((C,H))) \subseteq F(G)$ by Definition~\ref{def:cocone-theta}.
Let us proceed by case.
For $(H,C)=(\emptygraph, \emptyset)$, it rewrites to $\overline{F}(|m|)(\emptygraph) = \emptygraph \subseteq F(G)$ which holds trivially.
For $(H,C)=(H,\{c\})$, it rewrites into $\overline{F}(|m|)(f((H,c))) \subseteq F(G)$.
Since $\overline{F}(|m|) \in \Conj_f(|m|)$, we get $\overline{F}(|m|)(f((H,c))) = f((|m|(H),|m|(c)))$.
Since $m: H \to G$ and $(H,c) \in \Dsdp^r$, $(|m|(H),|m|(c)) \subseteq G^r_c$.
By monotonicity of $f$ then by Eq.~\eqref{eq:cgd}, $f((|m|(H),|m|(c))) \subseteq f(G^r_v) \subseteq F(G)$ as required.

For commutation, take some morphism $\langle n: (H_1, C_1) \to (H_2, C_2), m: H_2 \to G \rangle$ in $\widetilde{i}/G$.
We have to check $\theta^G_{\langle C_1, m \circ n \rangle} = \theta^G_{\langle C_2, m \rangle} \circ \widetilde{f}(n)$.
Since the signature is correct, it remains to check the equality of the associated renaming.
\begin{align*}
|\theta^G_{\langle C_1, m \circ n \rangle}|
& = \overline{F}(|m \circ n|) & \text{by Def.~\ref{def:cocone-theta}} \\
& = \overline{F}(|m|) \circ \overline{F}(|n|) & \text{by Prop.~\ref{prop:conj_group_morphism} and Def.~\ref{def:graph_cat}} \\
& = |\theta^G_{\langle C_2, m \rangle}| \circ |\widetilde{f}(n)| & \text{by Def.~\ref{def:cocone-theta} and Def.~\ref{def:funcrtor-ftilde}} \\
& = |\theta^G_{\langle C_2, m \rangle} \circ \widetilde{f}(n)| & \text{by Def.~\ref{def:graph_cat}}
& \qedhere
\end{align*}
\end{proof}

\begin{prop}
\label{prop:theta-universal-cocone}
$\theta^G$ is a universal cocone.
\end{prop}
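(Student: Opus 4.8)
My plan is to verify the universal property directly: given an arbitrary cocone $\mu : \widetilde{f}\circ\Proj_{\widetilde{i}/G} \Rightarrow X$ with apex $X \in \Csdp$, I will exhibit a unique morphism $u : F(G) \to X$ satisfying $u \circ \theta^G_{\langle C, m\rangle} = \mu_{\langle C, m\rangle}$ for every object $\langle C, m\rangle$ of $\widetilde{i}/G$. The whole argument is driven by the fact that a morphism of $\Csdp$ is nothing but its underlying renaming (Definition~\ref{def:graph_morphism}), so the task reduces to pinning down the single renaming $|u|$ and then checking two inclusions.

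The key preliminary step is to express every component of $\mu$ in terms of the distinguished component at the empty-disk occurrence $\bot := \langle \emptyset, \U(\emptygraph \subseteq G)\rangle$, which exists by Proposition~\ref{prop:comma-nonempty}. Reusing the connecting morphisms from the connectedness proof (Proposition~\ref{prop:comma-connected}), namely the $n$ with $|n| = |m|^{-1}$ giving a morphism $\bot \to \langle C, m : H \to G\rangle$, the cocone commutation together with $|\widetilde{f}(n)| = \overline{F}(|n|) = \overline{F}(|m|)^{-1}$---using Definition~\ref{def:funcrtor-ftilde} and the group-homomorphism property of $\overline{F}$ (Proposition~\ref{prop:conj_group_morphism})---yields the central identity $|\mu_{\langle C, m\rangle}| = |\mu_\bot| \circ \overline{F}(|m|)$.

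With this identity in hand, uniqueness is immediate: evaluating the desired factorization at $\bot$ forces $|u| \circ |\theta^G_\bot| = |\mu_\bot|$, and since $|\theta^G_\bot| = \overline{F}(\id_\V) = \id_\V$ by Definition~\ref{def:cocone-theta}, this determines $|u| = |\mu_\bot|$ completely. For existence I would set $|u| := |\mu_\bot|$ and check two things. First, that $u$ is a legitimate morphism, \ie $|\mu_\bot|(F(G)) \subseteq X$: since $F(G) = \bigcup_{v} f(G^r_v)$ by Eq.~\eqref{eq:cgd} and the canonical occurrences $\langle\{v\}, \U(\cdot \subseteq G)\rangle$ have $|m| = \id_\V$, the central identity gives $|\mu_{\langle\{v\},\U\rangle}| = |\mu_\bot|$; as each such component is a morphism $f(G^r_v) \to X$, we obtain $|\mu_\bot|(f(G^r_v)) \subseteq X$, and taking the union (renamings act elementwise, Remark~\ref{rem:renamings-monotonic}) gives $|\mu_\bot|(F(G)) \subseteq X$. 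Second, that $u$ factors the cocone: the equation $u \circ \theta^G_{\langle C, m\rangle} = \mu_{\langle C, m\rangle}$ reduces at the level of renamings to $|\mu_\bot| \circ \overline{F}(|m|) = |\mu_{\langle C, m\rangle}|$, which is precisely the central identity.

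I expect the main obstacle to be establishing the central identity uniformly over all objects, since it is what simultaneously secures uniqueness, the well-definedness inclusion, and the factorization; its proof hinges on the explicit connecting morphisms of Proposition~\ref{prop:comma-connected} and on inverting renamings through $\overline{F}$, which is legitimate only because $\overline{F}$ is a group homomorphism. A secondary point of care is the well-definedness inclusion over the possibly infinite family $\{f(G^r_v)\}_v$, which goes through because renamings distribute over arbitrary unions.
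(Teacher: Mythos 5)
Your proof is correct and takes essentially the same route as the paper's: both reduce the universal property to pinning down a single underlying renaming, both anchor at the empty-disk occurrence $\langle \emptyset, \U(\emptygraph \subseteq G)\rangle$ (your ``central identity'' via the morphisms $n$ with $|n| = |m|^{-1}$ is exactly the computation the paper performs when showing the per-object constraints $\lambda^{\langle C, m\rangle}$ agree along morphisms and then invoking the connectedness of Proposition~\ref{prop:comma-connected}, whose proof routes through that very hub), and both verify the inclusion $|u|(F(G)) \subseteq X$ by distributing the renaming over the union of Eq.~\eqref{eq:cgd} at the canonical occurrences with identity renaming. The only difference is organizational: you inline the connectedness argument at the distinguished object rather than invoking it abstractly, which is a harmless streamlining.
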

\begin{proof}
Suppose a cocone $\xi: \widetilde{f} \circ \Proj_{\widetilde{i}/G} \Rightarrow G'$.
We have to show the existence of a unique morphism $\lambda: F(G) \to G'$ such that for all $\langle C, m: H \to G \rangle$, $\xi_{\langle C, m\rangle} = \lambda \circ \theta^G_{\langle C, m \rangle}$.
If such a morphism exists, it is uniquely defined by its associated renaming $|\lambda|$.
Each $\langle C, m \rangle$ for $\widetilde{i}/G$ comes with the constraint that $|\lambda| = |\xi_{\langle C, m \rangle}| \circ |\theta^G_{\langle C, m\rangle}|^{-1}$.
Let us set $\lambda^{\langle C, m\rangle} := |\xi_{\langle C, m \rangle}| \circ |\theta^G_{\langle C, m\rangle}|^{-1}$, and show that $\lambda^{\langle C_1, m_1\rangle} = \lambda^{\langle C_2, m_2\rangle}$ for any $\langle C_1, m_1: H_1 \to G \rangle$ and $\langle C_2, m_2: H_2 \to G \rangle$.
To do so, since $\widetilde{i}/G$ is connected by Proposition~\ref{prop:comma-connected}, it is enough to notice that it holds when $\langle C_1, m_1 \rangle$ and $\langle C_2, m_2 \rangle$ are related by a morphism in $\widetilde{i}/G$.
Suppose a morphism $\langle n: (H_1, C_1) \to (H_2, C_2), m: H_2 \to G \rangle$.
We have $\lambda^{\langle C_1, m \circ n \rangle} = \lambda^{\langle C_2, m \rangle}$.
Indeed, notice that since $\xi$ and $\theta^G$ are cocones, we have $\xi_{\langle C_1, m \circ n \rangle} = \xi_{\langle C_2, m \rangle} \circ \widetilde{f}(n)$ and $\theta^G_{\langle C_1, m \circ n \rangle} = \theta^G_{\langle C_2, m \rangle} \circ \widetilde{f}(n)$.
So,
\begin{align*}
\lambda^{\langle C_1, m \circ n \rangle}
& = |\xi_{\langle C_1, m \circ n \rangle}| \circ |\theta^G_{\langle C_1, m \circ n \rangle}|^{-1} \\
& = |\xi_{\langle C_2, m \rangle} \circ \widetilde{f}(n)| \circ |\theta^G_{\langle C_2, m \rangle} \circ \widetilde{f}(n)|^{-1} \\
& = |\xi_{\langle C_2, m \rangle}| \circ |\widetilde{f}(n)| \circ |\widetilde{f}(n)|^{-1} \circ |\theta^G_{\langle C_2, m \rangle}|^{-1} \\
& = |\xi_{\langle C_2, m \rangle}| \circ |\theta^G_{\langle C_2, m \rangle}|^{-1} \\
& = \lambda^{\langle C_2, m \rangle}
\end{align*}
as expected.
Since $\widetilde{i}/G$ is not empty by Proposition~\ref{prop:comma-nonempty} and all $\lambda^{\langle C, m \rangle}$ agree, we are able to consider $\lambda$ such that $|\lambda| = \lambda^{\langle C, m \rangle}$.
It remains to show that $\lambda$ defined with such an associated renaming is a morphism from $F(G)$ to $G'$.
Equivalently, we have to show that $|\lambda|(F(G)) \subseteq G'$.
\begin{align*}
|\lambda|(F(G))
& = |\lambda|(\bigcup_{c \in V(G)} f(G^r_c)) \\
& = \bigcup_{c \in V(G)}|\lambda|(f(G^r_c)) \\
& = \bigcup_{c \in V(G)} \lambda^{\langle \{c\}, \U(i(G^r_c) \subseteq G) \rangle}(f(G^r_c)) \\
& = \bigcup_{c \in V(G)} |\xi_{\langle \{c\}, \U(i(G^r_c) \subseteq G) \rangle}|(|\theta^G_{\langle \{c\}, \U(i(G^r_v) \subseteq G) \rangle}|^{-1}(f(G^r_c))) \\
& = \bigcup_{c \in V(G)} |\xi_{\langle \{c\}, \U(i(G^r_c) \subseteq G) \rangle}|(\overline{F}(\id_\V)^{-1}(f(G^r_c))) \\
& = \bigcup_{c \in V(G)} |\xi_{\langle \{c\}, \U(i(G^r_c) \subseteq G) \rangle}|(f(G^r_c)) \\
& \subseteq \bigcup_{c \in V(G)} G' \\
& = G'
\end{align*}
The inclusion comes from the fact that $\xi_{\langle \{c\}, \U(i(G^r_c) \subseteq G) \rangle}$ is a morphism from $f(G^r_c)$ to $G'$.
This concludes the proof of universality of $\theta^G$.
\end{proof}

\begin{prop}
$\widetilde{F}$ is the pointwise left Kan extension of $\widetilde{f}$ along $\widetilde{i}$.
\end{prop}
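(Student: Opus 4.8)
The plan is to verify Definition~\ref{def:general_pointwise_lan} directly: we must produce a natural isomorphism $\widetilde{F} \cong \Colim(\widetilde{f} \circ \Proj_{\widetilde{i}/-})$. The essential work is already done. Indeed, Proposition~\ref{prop:theta-universal-cocone} shows that for every object $G \in \Csdp$ the cocone $\theta^G$ is universal, hence it exhibits $\widetilde{F}(G) = F(G)$ as $\Colim(\widetilde{f} \circ \Proj_{\widetilde{i}/G})$. So the right-hand colimit is well defined at each object and, with this choice of colimiting cocones, coincides with $\widetilde{F}$ on objects.

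What remains is to check that this pointwise identification is also functorial, \ie that the action of $\widetilde{F}$ on morphisms agrees with the canonical functoriality of the colimit. Recall that for a morphism $g: G \to G'$ of $\Csdp$, post-composition with $g$ yields a reindexing functor $\widetilde{i}/G \to \widetilde{i}/G'$ sending $\langle C, m: H \to G\rangle$ to $\langle C, g \circ m: H \to G'\rangle$, and the colimit functor $L := \Colim(\widetilde{f} \circ \Proj_{\widetilde{i}/-})$ sends $g$ to the \emph{unique} morphism $L(g): F(G) \to F(G')$ satisfying $L(g) \circ \theta^G_{\langle C, m\rangle} = \theta^{G'}_{\langle C, g \circ m\rangle}$ for all objects $\langle C, m\rangle$ of $\widetilde{i}/G$. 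It therefore suffices to prove that $\widetilde{F}(g)$ itself satisfies this defining equation; uniqueness then gives $\widetilde{F}(g) = L(g)$, so that $\widetilde{F} = L$ and the required natural isomorphism is the identity.

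The verification is a one-line computation through the faithful functor $|{-}|$ of Remark~\ref{rem:adjoint_emptyfunc_pipe}. Using the commutation $|\widetilde{F}(g)| = \overline{F}(|g|)$, Definition~\ref{def:cocone-theta} of $\theta$, and the fact that $\overline{F}$ is a group homomorphism (Proposition~\ref{prop:conj_group_morphism}), one computes $|\widetilde{F}(g) \circ \theta^G_{\langle C, m\rangle}| = \overline{F}(|g|) \circ \overline{F}(|m|) = \overline{F}(|g| \circ |m|) = \overline{F}(|g \circ m|) = |\theta^{G'}_{\langle C, g \circ m\rangle}|$. Since both morphisms have the same signature and $|{-}|$ is faithful, they are equal, which is exactly the defining equation above.

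The main obstacle, namely the universality of the cocone, has already been dispatched in Proposition~\ref{prop:theta-universal-cocone}; the only remaining point is to be careful that the pointwise colimit upgrades to a functorial (natural) statement rather than a mere objectwise one. Because every morphism in sight is governed by the single group homomorphism $\overline{F}$ and because $|{-}|$ is faithful, this upgrade collapses to the single identity $\overline{F}(|g \circ m|) = \overline{F}(|g|) \circ \overline{F}(|m|)$, so no genuine difficulty remains.
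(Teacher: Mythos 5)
Your proposal is correct and follows essentially the same route as the paper: pointwise universality is delegated to Proposition~\ref{prop:theta-universal-cocone}, and functoriality is settled by identifying $\widetilde{F}(g)$ with the unique mediating morphism, computed through the faithful functor $|{-}|$ and the homomorphism property of $\overline{F}$. The only cosmetic difference is direction: the paper computes the mediating $\lambda$ explicitly (evaluating at the empty-disk object $\langle \emptyset, \U(\emptygraph \subseteq G)\rangle$) and recognizes it as $\widetilde{F}(h)$, whereas you verify that $\widetilde{F}(g)$ satisfies the defining cocone equation and conclude by uniqueness---an equivalent, if slightly tidier, phrasing of the same argument.
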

\begin{proof}
For any graph $G$, since $\theta^G$ is a universal cocone with apex $F(G)$ by Proposition~\ref{prop:theta-universal-cocone}, we already have that $\widetilde{F}(G) \cong \Colim \widetilde{f} \circ \Proj_{\widetilde{i}/G}$.
It remains to show the functoriality of the construction which means that for any $h: G \to G'$, $\widetilde{F}(h) \cong \Colim \widetilde{f} \circ \Proj_{\widetilde{i}/h}$.
On the right hand side, the expressed morphism is the unique mediating $\lambda$ from $\theta^G$ to $\theta^{G'}_{\widetilde{i}/h}$.
To clarify, ${\widetilde{i}/h}$ is the functor embedding $\widetilde{i}/G$ into $\widetilde{i}/G'$ mapping any $\langle n: (H_1, C_1) \to (H_2, C_2), m: H_2 \to G \rangle$ to $\langle n, h \circ m \rangle$.
So, $\theta^{G'}_{\widetilde{i}/h}$ is the cocone over $\widetilde{f} \circ \Proj_{\widetilde{i}/G}$ with components $\theta^{G'}_{(\widetilde{i}/h)(\langle C, m: H \to G \rangle)} = \theta^{G'}_{\langle C, h \circ m \rangle}$.
By universality of $\theta^G$ (Proposition~\ref{prop:theta-universal-cocone}), we get a unique morphism $\lambda: F(G) \to F(G')$.
Using the elements of the proof of Proposition~\ref{prop:theta-universal-cocone}, we use the object $\langle \emptyset, \U(\emptygraph \subseteq G) \rangle \in \widetilde{i}/G$ to get
\begin{align*}
|\lambda|
& = \lambda^{\langle \emptyset, \U(\emptygraph \subseteq G) \rangle} \\
& = |\theta^{G'}_{(\langle \emptyset, h \circ \U(\emptygraph \subseteq G) \rangle)}| \circ |\theta^{G}_{(\langle \emptyset, \U(\emptygraph \subseteq G) \rangle)}|^{-1} \\
& = \overline{F}(|h \circ \U(\emptygraph \subseteq G) \rangle)|) \circ \overline{F}(|\U(\emptygraph \subseteq G) \rangle)|)^{-1} \\
& = \overline{F}(|h|) \circ \overline{F}(|\U(\emptygraph \subseteq G) \rangle)|) \circ \overline{F}(|\U(\emptygraph \subseteq G) \rangle)|)^{-1} \\
& = \overline{F}(|h|) \\
& = |\widetilde{F}(h)|.
\end{align*}
We conclude that $\lambda = \widetilde{F}(h)$ (both have same signature and underlying renaming) which proves the required functoriality.
\end{proof}

\section{Conclusion}
\label{sec:concl}

In this article, we planned to compare CGD and GT frameworks.
The very particular route we have chosen for this task led us to identify the class of monotonic CGDs (for a particular partial order on graphs) which are both CGDs and GTs, and happen to be universal among all CGDs.
We have then transformed the renaming-invariance property of the dynamics into additional relations between the graphs, so that the resulting notion of monotonicity (functoriality in fact) now ingrates renaming-invariance.
There are at least two benefits for that.
Firstly this is more in line with usual categories for graphs where no notion of absolute names or positioning exists, but only their (possibly multiple) occurrences in each other.
Secondly, this new formulation permits, in the particular case where there is a finite number of labels, to make explicit the finite nature of the local rule by expressing directly that there are finitely many disks, but that each of them can possibly be found at many places in a given graph.

\paragraph{There is still more work to do to finish this line of reasoning though.}

Firstly, we have only treated the case of CGDs with ``unique conjugates''.
One may think that it is a strong assumption, but as soon as a single graph $G$ has an output $F(G)$ without symmetry, the functor $F$ has ``unique \emph{non-spurious} conjugates'' (as is the case for the moving particle CGD example used in Section~\ref{sec:universal-mono-cgd} given the asymmetric outputs in Figure~\ref{fig:monotonic-particle}).
To see this, note that all conjugates are determined by the conjugates of the identity renaming, since $\Conj_F(R) = R' \circ \Conj_F(\id_\V)$ for any $R' \in \Conj_F(R)$.
Conjugates $S \in \Conj_F(\id_\V)$ of the identity renaming are closely related to symmetries via their defining property $S(F(G)) = F(G)$ for all $G$.
Notice that, for a given $G$, two $S, S' \in \Conj_F(\id_\V)$ may act similarly on $V(F(G))$ while acting distinctly on $\mathcal{V} \setminus V(F(G))$; in other words, many conjugates may express the same symmetry of $F(G)$.
Going further, conjugates can arbitrarily deal with names of $\mathcal{V}' = \mathcal{V} \setminus \bigcup_G V(F(G))$ giving a spurious impression of a large amount of meaningful conjugates.
It is in fact enough to restrict the use of conjugates to \emph{non-spurious} conjugates, that is, any conjugate acting as the identity on $\mathcal{V'}$.
For the cases where there is at least one $G$ with an asymmetric $F(G)$, all the results of the paper hold using the unique \emph{non-spurious} conjugate.
For the other cases, the non-spurious conjugates of the identity invite to consider the quotient of the outputs of $F$ with respect to those symmetries, hopefully yielding a ``unique non-spurious conjugates'' dynamics and making those dynamics somehow universal.

Secondly, while the category of disks has essentially finitely many objects (as discussed at the end of Section~\ref{sec:cat-disks}), there are still infinitely many morphisms between them.
This is because the chosen notion of morphisms is based on bijections of the infinite set $\V$, in order to have a first categorical formulation as close as possible to the framework of CGDs.
It should be possible to get rid of $\V$ completely by introducing in the definition of a graph the arbitrary set of vertices it uses (instead of a subset of $\V$).
In that case, a morphism would be an injective homomorphism of graph, and there would be a finite number of them when the graphs are finite.
In other words, it should be possible to work in a more classical categorical fashion.
In fact, the structure of CGDs port graphs is reminiscent of an important part of the literature concerning graph rewriting.
In this line, CGDs (and the categorical interpretation proposed in this paper) have to be compared with transformations of multigraphs with ports of~\cite{andrei2008rewriting} which have numerous applications in system modeling, and with site graphs of~\cite{danos:hal-00809065,danos:hal-01976370} invoked in the formal description of the Kappa modeling language.
Both of these contributions originally work on finite graphs and rely on the local application of a rewriting rule.
Using the GT framework on such formal graphs should converge to a proposition similar to what has been done in this paper where the global set of names is dropped away in favor of local sets of names.
In this perspective, the present work represents a milestone that can be used as a reference for unraveling the relationship between theses approaches at the formal level.

\paragraph{Coming back to Section~\ref{sec:cgd_post_lan}.}

This work was guided by the formal similarities between the two frameworks leading to a list of four questions.
The chosen strategy to cope with these questions was to tackle the first one ``What is the order?'' with the goal of answering positively to the second question ``Is the union of CGDs the supremum of this order?''.
This journey led us to identify the subgraph order to structure the set of port graphs instead of considering them independent (more precisely related by the disks only) as the original framework does.
This pushes forward the idea of using graph inclusion to express gain of information as proposed by the GT framework, opening a new direction when designing CGDs by respecting the order with monotonicity.

One interesting thing to note is that a slight adaptation of the original definition of port graphs allows to represent general CGDs as Kan extensions without any encoding.
Indeed, the encoding considered in this article leads one to consider a collection of graphs where most of them are ill-formed.
By adding explicitly to graphs additional features to represent the positive information that some port is not occupied or some label is missing, it is possible to get an encoding where all objects make sense.
This is related to how the notion of \emph{stub} on site graphs presented in~\cite{danos:hal-01976370} explicitly specifies positive information that some site is unbound.

An alternative route to answer the four-question list is possible by tackling the first question with the goal of answering the third one (almost) positively, thus falsifying the second one.
An order closely related to this route is the ``induced subgraph'' order stating that $G$ is lower than $H$ if $G = H^0_{V(G)}$.
This order is stronger than the usual subgraph order but is also very interesting.
Fewer graphs are ``inducedly consistent'' and one may ask if the subclass of CGDs definable with this stronger notion of consistency is universal.
This might give an idea on whether the result of this article is isolated or, on the contrary, if it follows a common pattern shared with many instances.

\bibliographystyle{alphaurl}
\bibliography{ICGT24-extver-submit4}

\newcommand{\etalchar}[1]{$^{#1}$}
\begin{thebibliography}{DFF{\etalchar{+}}12}

\bibitem[AD12]{DBLP:conf/icalp/ArrighiD12}
Pablo Arrighi and Gilles Dowek.
\newblock {Causal Graph Dynamics}.
\newblock In Artur Czumaj, Kurt Mehlhorn, Andrew~M. Pitts, and Roger
  Wattenhofer, editors, {\em Automata, Languages, and Programming - 39th
  International Colloquium, {ICALP} 2012, Warwick, UK, July 9-13, 2012,
  Proceedings, Part {II}}, volume 7392 of {\em Lecture Notes in Computer
  Science}, pages 54--66. Springer, 2012.
\newblock \href {https://doi.org/10.1007/978-3-642-31585-5\_9}
  {\path{doi:10.1007/978-3-642-31585-5\_9}}.

\bibitem[AD13]{DBLP:journals/iandc/ArrighiD13}
Pablo Arrighi and Gilles Dowek.
\newblock {Causal graph dynamics}.
\newblock {\em Inf. Comput.}, 223:78--93, 2013.
\newblock \href {https://doi.org/10.1016/J.IC.2012.10.019}
  {\path{doi:10.1016/J.IC.2012.10.019}}.

\bibitem[AK08]{andrei2008rewriting}
Oana Andrei and Hélène Kirchner.
\newblock {A Rewriting Calculus for Multigraphs with Ports}.
\newblock {\em Electronic Notes in Theoretical Computer Science}, 219:67--82,
  2008.
\newblock Proceedings of the Eighth International Workshop on Rule Based
  Programming (RULE 2007).
\newblock URL:
  \url{https://www.sciencedirect.com/science/article/pii/S1571066108004295},
  \href {https://doi.org/10.1016/j.entcs.2008.10.035}
  {\path{doi:10.1016/j.entcs.2008.10.035}}.

\bibitem[AM17]{arrighi2017quantum}
Pablo Arrighi and Simon Martiel.
\newblock {Quantum causal graph dynamics}.
\newblock {\em Phys. Rev. D}, 96:024026, Jul 2017.
\newblock \href {https://doi.org/10.1103/PhysRevD.96.024026}
  {\path{doi:10.1103/PhysRevD.96.024026}}.

\bibitem[AMP20]{DBLP:journals/nc/ArrighiMP20}
Pablo Arrighi, Simon Martiel, and Simon Perdrix.
\newblock {Reversible causal graph dynamics: invertibility, block
  representation, vertex-preservation}.
\newblock {\em Nat. Comput.}, 19(1):157--178, 2020.
\newblock \href {https://doi.org/10.1007/S11047-019-09768-0}
  {\path{doi:10.1007/S11047-019-09768-0}}.

\bibitem[DFF{\etalchar{+}}12]{danos:hal-00809065}
Vincent Danos, J{\'e}r{\^o}me Feret, Walter Fontana, Russell Harmer, Jonathan
  Hayman, Jean Krivine, Chris Thompson-Walsh, and Glynn Winskel.
\newblock {Graphs, Rewriting and Pathway Reconstruction for Rule-Based Models}.
\newblock In Schloss Dagstuhl Leibniz-Zentrum fuer Informatik, editor, {\em
  {FSTTCS}}, volume~18 of {\em LIPIcs}, pages 276--288, Hyderabad, India,
  December 2012.
\newblock URL: \url{https://hal.science/hal-00809065}, \href
  {https://doi.org/10.4230/LIPIcs.FSTTCS.2012.276}
  {\path{doi:10.4230/LIPIcs.FSTTCS.2012.276}}.

\bibitem[DHW13]{danos:hal-01976370}
Vincent Danos, Russ Harmer, and Glynn Winskel.
\newblock {Constraining rule-based dynamics with types}.
\newblock {\em {Mathematical Structures in Computer Science}}, 23(02):272--289,
  April 2013.
\newblock URL: \url{https://hal.science/hal-01976370}.

\bibitem[FMS19]{DBLP:conf/uc/FernandezMS19}
Alexandre Fernandez, Luidnel Maignan, and Antoine Spicher.
\newblock {Lindenmayer Systems and Global Transformations}.
\newblock In Ian McQuillan and Shinnosuke Seki, editors, {\em Unconventional
  Computation and Natural Computation - 18th International Conference, {UCNC}
  2019, Tokyo, Japan, June 3-7, 2019, Proceedings}, volume 11493 of {\em
  Lecture Notes in Computer Science}, pages 65--78. Springer, 2019.
\newblock \href {https://doi.org/10.1007/978-3-030-19311-9\_7}
  {\path{doi:10.1007/978-3-030-19311-9\_7}}.

\bibitem[FMS23]{DBLP:journals/nc/FernandezMS23}
Alexandre Fernandez, Luidnel Maignan, and Antoine Spicher.
\newblock {Cellular automata and Kan extensions}.
\newblock {\em Nat. Comput.}, 22(3):493--507, 2023.
\newblock \href {https://doi.org/10.1007/S11047-022-09931-0}
  {\path{doi:10.1007/S11047-022-09931-0}}.

\bibitem[Mac13]{maclane2013categories}
S.~MacLane.
\newblock {\em {Categories for the Working Mathematician}}.
\newblock Graduate Texts in Mathematics. Springer New York, 2013.
\newblock \href {https://doi.org/10.1007/978-1-4757-4721-8}
  {\path{doi:10.1007/978-1-4757-4721-8}}.

\bibitem[MS15]{DBLP:conf/gg/MaignanS15}
Luidnel Maignan and Antoine Spicher.
\newblock {Global Graph Transformations}.
\newblock In Detlef Plump, editor, {\em Proceedings of the 6th International
  Workshop on Graph Computation Models co-located with the 8th International
  Conference on Graph Transformation {(ICGT} 2015) part of the Software
  Technologies: Applications and Foundations {(STAF} 2015) federation of
  conferences, L'Aquila, Italy, July 20, 2015}, volume 1403 of {\em {CEUR}
  Workshop Proceedings}, pages 34--49. CEUR-WS.org, 2015.
\newblock URL: \url{http://ceur-ws.org/Vol-1403/paper4.pdf}.

\end{thebibliography}

\end{document}